\documentclass[11pt]{article}
\usepackage[english]{babel}
\usepackage{cite}
\usepackage{enumerate}
\usepackage{amssymb,amsmath,amsthm,mathrsfs}


\textheight 238mm
\textwidth 163mm
\oddsidemargin -1mm
\topmargin -18mm


\newtheorem{lemma}{Lemma}
\newtheorem{proposition}{Proposition}
\newtheorem{theorem}{Theorem}
\newtheorem{corollary}{Corollary}

\theoremstyle{definition}
\newtheorem{definition}{Definition}
\newtheorem{example}{Example}

\theoremstyle{remark}
\newtheorem{remark}{Remark}


\newcommand{\due}{\hspace{0.2mm}}
\newcommand{\tre}{\hspace{0.3mm}}
\newcommand{\quattro}{\hspace{0.4mm}}
\newcommand{\cinque}{\hspace{0.5mm}}
\newcommand{\sei}{\hspace{0.6mm}}

\newcommand{\otto}{\hspace{0.8mm}}
\newcommand{\nove}{\hspace{0.9mm}}
\newcommand{\dieci}{\hspace{1mm}}

\newcommand{\sedici}{\hspace{1.6mm}}
\newcommand{\mtre}{\hspace{-0.3mm}}

\newcommand{\msei}{\hspace{-0.6mm}}
\newcommand{\msette}{\hspace{-0.7mm}}
\newcommand{\motto}{\hspace{-0.8mm}}

\newcommand{\fin}{\hspace{0.3mm}}
\newcommand{\spa}{\hspace{-2mm}}

\newcommand{\tr}{\mathrm{tr}}

\newcommand{\ran}{\mathrm{Ran}\hspace{0.3mm}}
\newcommand{\rank}{\mathrm{rank}}

\newcommand{\clco}{\overline{\mathrm{co}}\hspace{0.3mm}}

\newcommand{\tra}{\tr(A)}
\newcommand{\trra}{\tr(rA)}
\newcommand{\trb}{\tr(B)}
\newcommand{\trab}{\tr(A+B)}
\newcommand{\tramo}{\tr(A)^{-1}}
\newcommand{\trramo}{\tr(rA)^{-1}}
\newcommand{\trabmo}{\tr(A+B)^{-1}}

\newcommand{\bile}{\big(}
\newcommand{\biri}{\big)}

\newcommand{\Bile}{\Big(}

\newcommand{\Birich}{\Big)^{^{\mbox{\tiny $\hspace{-1.1mm}\smile$}}}\hspace{-1.3mm}}
\newcommand{\Birigr}{\Big)_{\hspace{-0.6mm}g}}
\newcommand{\Birigrmo}{\Big)_{\hspace{-0.6mm}\gmo}}
\newcommand{\bbile}{\bigg(}
\newcommand{\bbiri}{\bigg)}

\newcommand{\equa}{\spa & = & \spa}
\newcommand{\equb}{\nonumber \\ & = & \spa}
\newcommand{\equx}{\nonumber \\ & \times & \spa}
\newcommand{\lequa}{\spa & \le & \spa}
\newcommand{\lequb}{\nonumber \\ & \le & \spa}
\newcommand{\equdef}{\spa & \defi & \spa}

\newcommand{\defi}{\mathrel{\mathop:}=}
\newcommand{\ifed}{=\mathrel{\mathop:}}

\newcommand{\ima}{\mathrm{i}}
\newcommand{\ccc}{\mathbb{C}}
\newcommand{\de}{\mathrm{d}\hspace{0.2mm}}
\newcommand{\eee}{\mathrm{e}}

\newcommand{\scap}{\langle\hspace{0.5mm}\cdot\hspace{0.6mm},\cdot\hspace{0.5mm}\rangle}

\newcommand{\erre}{\mathbb{R}}
\newcommand{\erredn}{\mathbb{R}^{2n}}

\newcommand{\erreps}{\mathbb{R}_{\hspace{0.3mm}\ast}^{\mbox{\tiny $+$}}}
\newcommand{\erres}{\mathbb{R}_\ast^{\phantom{\mbox{\tiny $i$}}}}
\newcommand{\toro}{\mathbb{T}}

\newcommand{\hh}{\mathcal{H}}
\newcommand{\hau}{\mathcal{K}}

\newcommand{\trau}{\tr_{\hau}}

\newcommand{\unih}{\mathcal{U}{(\hh)}}
\newcommand{\unanh}{\overline{\mathcal{U}}{(\hh)}}
\newcommand{\punanh}{\mathrm{P}\overline{\mathcal{U}}{(\hh)}}

\newcommand{\trc}{\mathcal{B}_1(\hh)}
\newcommand{\trcsa}{\mathcal{B}_1(\hh)_{\mbox{\tiny $\mathbb{R}$}}}
\newcommand{\trcp}{\mathcal{B}_1(\hh)_{\mbox{\tiny $+$}}}
\newcommand{\sta}{\mathcal{S}(\hh)}
\newcommand{\pusta}{\mathcal{P}(\hh)}

\newcommand{\bopsa}{\mathcal{B}(\mathcal{H})_{\mbox{\tiny $\mathbb{R}$}}}
\newcommand{\bop}{\mathcal{B}(\hh)}
\newcommand{\bopp}{\mathcal{B}(\hh)_{\mbox{\tiny $+$}}}
\newcommand{\hs}{\mathcal{B}_2(\hh)}

\newcommand{\ltrc}{\mathscr{L}_1(\hh)}
\newcommand{\ltrcsa}{\mathscr{L}_1(\hh)_{\mbox{\tiny $\mathbb{R}$}}}
\newcommand{\spitrc}{\mathscr{S\hspace{-1.1mm}I}_{\hspace{-0.7mm}1}(\hh)_{\mbox{\tiny $+$}}}
\newcommand{\lbop}{\mathscr{L}(\hh)}
\newcommand{\bbop}{\mathcal{B}(\mathcal{B}(\hh))}
\newcommand{\btrc}{\mathcal{B}(\trc)}
\newcommand{\btrcsa}{\mathcal{B}(\trcsa)}
\newcommand{\effe}{\mathcal{E}(\hh)}

\newcommand{\trcha}{\mathcal{B}_1(\hh\otimes\hau)}

\newcommand{\eu}{E_1}
\newcommand{\ek}{E_k}
\newcommand{\en}{E_n}
\newcommand{\ptmu}{\ptm_1}
\newcommand{\ptmk}{\ptm_k}
\newcommand{\ptmn}{\ptm_n}

\newcommand{\babil}{\mathscr{B}\hspace{-0.4mm}\mathscr{L}_1(\hh)}
\newcommand{\babilsa}{\mathscr{B}\hspace{-0.4mm}\mathscr{L}_1(\hh)_{\mbox{\tiny $\mathbb{R}$}}}
\newcommand{\pot}{\mathscr{P}\hspace{-0.7mm}\mathscr{T}(\hh)}

\newcommand{\bltrc}{\mathcal{B}(\trc,\ltrc)}
\newcommand{\bbtrc}{\mathcal{B}(\trc,\btrc)}
\newcommand{\bltrcsa}{\mathcal{B}(\trcsa,\ltrcsa)}
\newcommand{\bbtrcsa}{\mathcal{B}(\trcsa,\btrcsa)}

\newcommand{\trn}{\|_1}
\newcommand{\trnor}{\|\cdot\trn}

\newcommand{\nori}{\|_{\infty}}
\newcommand{\normi}{\|\cdot\|_{\infty}}
\newcommand{\noro}{\|_{[1]}}
\newcommand{\normo}{\|\cdot\|_{[1]}}
\newcommand{\norin}{\|_{[\infty]}}
\newcommand{\normin}{\|\cdot\|_{[\infty]}}
\newcommand{\btrn}{\|_{1,1}}
\newcommand{\norb}{\|_{(1)}}
\newcommand{\normb}{\|\cdot\|_{(1)}}
\newcommand{\pp}{\mathsf{p}}

\newcommand{\normp}{\|\cdot\|_{\pp}}

\newcommand{\ptm}{\Phi}
\newcommand{\ptmb}{\Psi}
\newcommand{\ptma}{\Phi^{\ast}}
\newcommand{\ptmre}{\Phi_{\mbox{\tiny $\mathbb{R}$}}}
\newcommand{\mare}{\Psi}
\newcommand{\maco}{\Psi_{\mbox{\tiny $\mathbb{C}$}}}

\newcommand{\ptmt}{\Theta}

\newcommand{\delg}{\delta_g}
\newcommand{\delgmo}{\delta^{g^{-1}}}

\newcommand{\pro}{\boxdot}
\newcommand{\prore}{\boxminus}
\newcommand{\prone}{\boxtimes}
\newcommand{\argo}{(\hspace{0.3mm}\cdot\hspace{0.3mm})}
\newcommand{\bifo}{\argo\pro\argo}
\newcommand{\bifore}{\argo\prore\argo}
\newcommand{\bifone}{\argo\prone\argo}

\newcommand{\tran}{{\mbox{\tiny $\mathsf{T}$}}}
\newcommand{\pronet}{\boxtimes^{\tran}\hspace{-1.3mm}}
\newcommand{\bifonet}{\argo\pronet\argo}

\newcommand{\suse}{\mathscr{S}}

\newcommand{\spr}{\odot}
\newcommand{\sprod}{\argo\spr\argo}

\newcommand{\ba}{\check{A}}

\newcommand{\au}{A_{\mbox{\tiny $\Re$}}}
\newcommand{\ad}{A_{\mbox{\tiny $\Im$}}}
\newcommand{\ap}{A_{\mbox{\tiny $+$}}}
\newcommand{\am}{A_{\mbox{\tiny $-$}}}

\newcommand{\bb}{\check{B}}

\newcommand{\bu}{B_{\mbox{\tiny $\Re$}}}
\newcommand{\bd}{B_{\mbox{\tiny $\Im$}}}
\newcommand{\bp}{B_{\hspace{-0.3mm}\mbox{\tiny $+$}}}
\newcommand{\bm}{B_{\hspace{-0.3mm}\mbox{\tiny $-$}}}

\newcommand{\maap}{\mare(A)_{\mbox{\tiny $+$}}}
\newcommand{\maam}{\mare(A)_{\mbox{\tiny $-$}}}

\newcommand{\aup}{A_{\mbox{\tiny $\Re$};\mbox{\tiny $+$}}}
\newcommand{\adp}{A_{\mbox{\tiny $\Im$};\mbox{\tiny $+$}}}
\newcommand{\aum}{A_{\mbox{\tiny $\Re$};\mbox{\tiny $-$}}}
\newcommand{\adm}{A_{\mbox{\tiny $\Im$};\mbox{\tiny $-$}}}

\newcommand{\baup}{\check{A}_{\mbox{\tiny $\Re$};\mbox{\tiny $+$}}}
\newcommand{\badp}{\check{A}_{\mbox{\tiny $\Im$};\mbox{\tiny $+$}}}
\newcommand{\baum}{\check{A}_{\mbox{\tiny $\Re$};\mbox{\tiny $-$}}}
\newcommand{\badm}{\check{A}_{\mbox{\tiny $\Im$};\mbox{\tiny $-$}}}

\newcommand{\aupc}{a_{\mbox{\tiny $\Re$};\mbox{\tiny $+$}}}
\newcommand{\adpc}{a_{\mbox{\tiny $\Im$};\mbox{\tiny $+$}}}
\newcommand{\aumc}{a_{\mbox{\tiny $\Re$};\mbox{\tiny $-$}}}
\newcommand{\admc}{a_{\mbox{\tiny $\Im$};\mbox{\tiny $-$}}}

\newcommand{\bbup}{\check{B}_{\mbox{\tiny $\Re$};\mbox{\tiny $+$}}}

\newcommand{\bbum}{\check{B}_{\mbox{\tiny $\Re$};\mbox{\tiny $-$}}}
\newcommand{\bbdm}{\check{B}_{\mbox{\tiny $\Im$};\mbox{\tiny $-$}}}

\newcommand{\bupc}{b_{\mbox{\tiny $\Re$};\mbox{\tiny $+$}}}

\newcommand{\bumc}{b_{\mbox{\tiny $\Re$};\mbox{\tiny $-$}}}
\newcommand{\bdmc}{b_{\mbox{\tiny $\Im$};\mbox{\tiny $-$}}}

\newcommand{\nuu}{\nu_{\mbox{\tiny $\Re$}}}
\newcommand{\nud}{\nu_{\mbox{\tiny $\Im$}}}
\newcommand{\nup}{\nu_{\mbox{\tiny $\Re$};\mbox{\tiny $+$}}}
\newcommand{\ndp}{\nu_{\mbox{\tiny $\Im$};\mbox{\tiny $+$}}}
\newcommand{\num}{\nu_{\mbox{\tiny $\Re$};\mbox{\tiny $-$}}}
\newcommand{\ndm}{\nu_{\mbox{\tiny $\Im$};\mbox{\tiny $-$}}}

\newcommand{\lmap}{\mathsf{L}}
\newcommand{\rmap}{\mathsf{R}}
\newcommand{\lmapre}{\mathsf{L}_{\mbox{\tiny $\mathbb{R}$}}}
\newcommand{\rmapre}{\mathsf{R}_{\mbox{\tiny $\mathbb{R}$}}}
\newcommand{\lmapasa}{\lmap(A)_{\mbox{\tiny $\mathbb{R}$}}}

\newcommand{\convset}{\mathcal{S}_0}

\newcommand{\lspe}{\mathcal{L}_{\spr}}
\newcommand{\rspe}{\mathcal{R}_{\spr}}
\newcommand{\lpur}{\widehat{\mathcal{L}}_{\spr}}
\newcommand{\rpur}{\widehat{\mathcal{R}}_{\spr}}

\newcommand{\nume}{\mathtt{N}}
\newcommand{\nat}{\mathbb{N}}

\newcommand{\tk}{T_k^{\phantom{\ast}}}
\newcommand{\tl}{T_l^{\phantom{\ast}}}
\newcommand{\tka}{T_k^\ast}

\newcommand{\tu}{T_1^{\phantom{\ast}}}
\newcommand{\td}{T_2^{\phantom{\ast}}}
\newcommand{\prok}{p_k^{\phantom{\ast}}}

\newcommand{\trd}{\mathsf{d}_1}
\newcommand{\trdd}{\mathsf{d}_{1,1}}

\newcommand{\dom}{\mathrm{dom}\hspace{0.2mm}}
\newcommand{\fr}{\mathscr{F}(\hh)}

\newcommand{\lima}{\mathfrak{P}}

\newcommand{\limacl}{\bar{\lima}}

\newcommand{\astak}{A^{\star_{k}}}

\newcommand{\unef}{\mathscr{C}(E)}
\newcommand{\zeref}{\mathscr{C}(I-E)}

\newcommand{\sprx}{\overset{\xi}{\odot}}
\newcommand{\sprgx}{\hspace{-0.3mm}\overset{g[\xi]}{\odot}\hspace{-0.4mm}}
\newcommand{\sprodx}{\argo\overset{\xi}{\odot}\argo}

\newcommand{\sprh}{\overset{h}{\odot}}
\newcommand{\sprodh}{\argo\overset{h}{\odot}\argo}
\newcommand{\sprode}{\argo\overset{e}{\odot}\argo}

\newcommand{\sprgh}{\hspace{-0.1mm}\overset{gh}{\odot}\hspace{-0.4mm}}

\newcommand{\act}{[\hspace{0.5mm}\cdot\hspace{0.5mm}]}

\newcommand{\ure}{\hspace{0.3mm}\mathsf{U}\hspace{0.2mm}}

\newcommand{\mul}{\gamma}
\newcommand{\urep}{{\mathsf{U}\hspace{-0.5mm}\vee\hspace{-0.5mm}\mathsf{U}\hspace{0.2mm}}}

\newcommand{\vre}{\hspace{0.4mm}\mathsf{V}}

\newcommand{\vrep}{{\mathsf{V}\hspace{-0.5mm}\vee\hspace{-0.5mm}\mathsf{V}}}
\newcommand{\commv}{\mathcal{C}(\vre)}

\newcommand{\trace}{\vartheta}
\newcommand{\alge}{\mathcal{A}}
\newcommand{\stalg}{\mathcal{T}}
\newcommand{\stalgp}{\stalg_{\mbox{\tiny $+$}}}
\newcommand{\opro}{\star}
\newcommand{\oprod}{\argo\opro\argo}
\newcommand{\involu}{\mathsf{I}\hspace{0.3mm}}

\newcommand{\fiv}{F}
\newcommand{\fista}{\upsilon}
\newcommand{\fidu}{\hspace{0.3mm}F^\ast\hspace{-0.2mm}}

\newcommand{\fivg}{\fiv_{g}}
\newcommand{\fivgit}{\fiv_{\git}}
\newcommand{\fivggit}{\fiv_{g\git}}
\newcommand{\fivgitg}{\fiv_{\git g}}
\newcommand{\figihmo}{\fiv_{\git h^{-1}}}
\newcommand{\agit}{A_{\git}}
\newcommand{\agh}{A_{gh}}
\newcommand{\agith}{A_{\git h}}
\newcommand{\bh}{B_{h}}
\newcommand{\bgh}{B_{gh}}

\newcommand{\rhog}{\rho_{g}}
\newcommand{\sigg}{\sigma_{\hspace{-0.5mm}g}}
\newcommand{\rhogmo}{\rho_{g^{-1}}}
\newcommand{\siggmo}{\sigma_{\hspace{-0.5mm}g^{-1}}}
\newcommand{\fistag}{\fista_{\hspace{-0.2mm}g}}
\newcommand{\fistagmo}{\fista_{\hspace{-0.2mm}g^{-1}}}

\newcommand{\wrho}{\mathcal{W}_{\hspace{-0.3mm}\rho}}
\newcommand{\wfista}{\mathcal{W}_{\fista}}
\newcommand{\hwfista}{\widehat{\mathcal{W}}_{\fista}}
\newcommand{\wsigma}{\mathcal{W}_{\hspace{-0.2mm}\sigma}}
\newcommand{\wtau}{\mathcal{W}_{\tau}}
\newcommand{\wpsi}{\mathcal{W}_{\hspace{-0.2mm}\psi}}
\newcommand{\hkrho}{\mathcal{Q}_{\rho}}
\newcommand{\hksigma}{\mathcal{Q}_{\sigma}}

\newcommand{\intrdn}{\int_{\mathbb{R}^{2n}}\hspace{-1.5mm}}
\newcommand{\dx}{\de^{2n}\hspace{-0.3mm}x}
\newcommand{\dy}{\de^{2n}\hspace{-0.3mm}y}

\newcommand{\prom}{\varpi}
\newcommand{\hprom}{\widehat{\prom}}
\newcommand{\promg}{\prom^{\hspace{0.1mm}g}}
\newcommand{\promgr}{\prom_{\hspace{-0.3mm}g}}

\newcommand{\promgrmo}{\prom_{\hspace{-0.3mm}g^{-1}}}

\newcommand{\stofg}{\hspace{0.2mm}\overset{\hspace{0.4mm}\fistag}{\odot}\hspace{-0.2mm}}
\newcommand{\stofgm}{\hspace{0.2mm}\underset{\prom}{\overset{\hspace{0.4mm}\fistag}{\odot}\hspace{-0.2mm}}}
\newcommand{\stofmg}{\hspace{-0.3mm}\underset{\hspace{0.6mm}\promg}{\overset{\fista}{\odot}}\hspace{-0.5mm}}
\newcommand{\stofgmom}{\hspace{-2.7mm}\underset{\prom}{\overset{\hspace{3mm}\fistagmo}{\odot}}\hspace{-3mm}}
\newcommand{\stofmgr}{\hspace{-0.3mm}\underset{\hspace{0.6mm}\promgr}{\overset{\fista}{\odot}}\hspace{-0.5mm}}
\newcommand{\stofmgrmo}{\hspace{-3mm}\underset{\hspace{3.4mm}\promgrmo}{\overset{\fista}{\odot}}\hspace{-3.4mm}}
\newcommand{\stofgmg}{\hspace{-0.3mm}\underset{\hspace{0.6mm}\promg}{\overset{\hspace{0.4mm}\fistag}{\odot}}\hspace{-0.5mm}}
\newcommand{\stofgmgrmo}{\hspace{-3mm}\underset{\hspace{3.4mm}\promgrmo}{\overset{\hspace{0.4mm}\fistag}{\odot}}\hspace{-3.4mm}}

\newcommand{\cu}{\mathsf{c}_{\hspace{-0.1mm}\ure}^{\phantom{1}}}
\newcommand{\sqrcu}{\mathsf{c}_{\hspace{-0.1mm}\ure}^{1/2}}
\newcommand{\cumo}{\mathsf{c}_{\hspace{-0.1mm}\ure}^{-1}}

\newcommand{\ddu}{\mathsf{d}_{\hspace{-0.3mm}\ure}^{\phantom{1}}}
\newcommand{\ddumo}{\mathsf{d}_{\hspace{-0.3mm}\ure}^{-1}}
\newcommand{\duf}{D_{\hspace{-0.4mm}\ure}^{\phantom{1}}}

\newcommand{\mesaf}{\nu_{\hspace{-0.7mm}\mbox{\tiny $A\hspace{-0.1mm},\hspace{-0.5mm}\fiv$}\hspace{-0.1mm}}}
\newcommand{\mesru}{\nu_{\hspace{-0.6mm}\mbox{\tiny $\rho\hspace{-0.1mm},\hspace{-0.5mm}\fista$}\hspace{-0.1mm}}}
\newcommand{\mesafnu}{\nu_{\hspace{-0.7mm}\mbox{\tiny $A\hspace{-0.1mm},\hspace{-0.5mm}\fiv$}}\hspace{-0.8mm}\otimes\hspace{-0.6mm}\nu\hspace{0.2mm}}

\newcommand{\conv}{\hspace{-0.3mm}\circledcirc\hspace{-0.5mm}}

\newcommand{\hame}{\mu_G}
\newcommand{\modu}{\Delta_G}

\newcommand{\intG}{\int_G}
\newcommand{\intGG}{\int_{G\times G}}

\newcommand{\ltripr}{[\hspace{-0.6mm}[}
\newcommand{\rtripr}{]\hspace{-0.6mm}]}

\newcommand{\mtripr}{]\hspace{-0.6mm}]_\nu}
\newcommand{\mtriprcc}{]\hspace{-0.6mm}]_{\overline{\nu}}}
\newcommand{\mtriprng}{]\hspace{-0.6mm}]_{\nu^{g}}}

\newcommand{\mtriprngr}{]\hspace{-0.6mm}]_{\nu_{\hspace{-0.3mm}g}}}

\newcommand{\dtripr}{]\hspace{-0.6mm}]_\delta}

\newcommand{\argul}{\hspace{0.6mm}\cdot\hspace{0.6mm}}
\newcommand{\arguc}{\hspace{0.2mm}\cdot\hspace{0.6mm}}
\newcommand{\argur}{\hspace{0.2mm}\cdot\hspace{0.6mm}}
\newcommand{\mtriprod}{[\hspace{-0.6mm}[\argul,\arguc,\argur]\hspace{-0.6mm}]_{\argo}}

\newcommand{\ccme}{\overline{\nu}}

\newcommand{\bilfm}{\hspace{0.2mm}\underset{\nu}{\overset{\fiv}{\boxdot}}}
\newcommand{\bilfmo}{\argo\underset{\nu}{\overset{\fiv}{\boxdot}}\argo}
\newcommand{\bilfo}{\argo\overset{\fiv}{\boxdot}\argo}

\newcommand{\stof}{\hspace{0.2mm}\overset{\fista}{\odot}\hspace{-0.2mm}}
\newcommand{\stofm}{\hspace{0.2mm}\underset{\prom}{\overset{\fista}{\odot}}}
\newcommand{\stofmo}{\argo\underset{\prom}{\overset{\fista}{\odot}}\argo}
\newcommand{\stofo}{\argo\overset{\fista}{\odot}\argo}

\newcommand{\nug}{\nu^{\hspace{0.2mm}g}}
\newcommand{\nugr}{\nu_{\hspace{-0.3mm}g}}

\newcommand{\nuh}{\nu^{\hspace{0.3mm}h}}
\newcommand{\nuhr}{\nu_{h}}
\newcommand{\nugh}{\nu^{\hspace{0.2mm}gh}}
\newcommand{\nughr}{\nu_{\hspace{-0.3mm}gh}}

\newcommand{\come}{\mathcal{M}\hspace{0.3mm}(G)}
\newcommand{\prome}{\mathcal{P}\hspace{-0.5mm}\mathcal{M}\hspace{0.3mm}(G)}
\newcommand{\posme}{\mathcal{M}\hspace{0.3mm}(G)_{\mbox{\tiny $+$}}}

\newcommand{\gmo}{g^{-1}}
\newcommand{\hmo}{h^{-1}}
\newcommand{\git}{\tilde{g}}

\newcommand{\hit}{\tilde{h}}

\newcommand{\dequm}{\mathfrak{D}}
\newcommand{\qum}{\mathfrak{Q}}

\newcommand{\elleuno}{\mathrm{L\hspace{-0.2mm}}^{\mbox{\tiny $1$}}}
\newcommand{\elledue}{\mathrm{L\hspace{-0.2mm}}^{\mbox{\tiny $2$}}}
\newcommand{\ldg}{\elledue(G)}
\newcommand{\elleg}{\elledue(G,\hame;\ccc)}
\newcommand{\lug}{\elleuno(G)}
\newcommand{\phasp}{\mathbb{R}^n\hspace{-0.4mm}\times\mathbb{R}^n}
\newcommand{\lurrn}{\elleuno(\phasp)}

\newcommand{\lrrnco}{\elledue(\phasp,(2\pi)^{-n}\dqdpn\hspace{0.3mm};\mathbb{C})}
\newcommand{\ldrn}{\elledue(\mathbb{R}^n)}

\newcommand{\norluno}{\|_{\mathrm{L\hspace{-0.2mm}}^{1}}}
\newcommand{\bnorluno}{{\Big\|}_{\mathrm{L\hspace{-0.2mm}}^{1}}}
\newcommand{\norldue}{\|_{\mathrm{L\hspace{-0.2mm}}^{2}}}

\newcommand{\chara}{\breve{A}}

\newcommand{\charr}{\breve{\rho}}
\newcommand{\charu}{\breve{\fista}}
\newcommand{\chars}{\breve{\sigma}}

\newcommand{\charpro}{\breve{\prom}}

\newcommand{\dug}{\widehat{G}}

\newcommand{\supp}{\mathrm{supp}\hspace{0.2mm}}

\newcommand{\intrrn}{\int_{\phasp}\hspace{-1.5mm}}

\newcommand{\intrd}{\int_{\mathbb{R}^{2}}\hspace{-1mm}}

\newcommand{\dqdpn}{\de^n\hspace{-0.3mm}q\hspace{0.8mm}\de^n\hspace{-0.1mm}p}
\newcommand{\dqdpt}{\de^n\hspace{-0.3mm}\tilde{q}\hspace{0.8mm}\de^n\hspace{-0.1mm}\tilde{p}}
\newcommand{\dqdptu}{\de\tilde{q}\hspace{0.8mm}\de\tilde{p}}

\newcommand{\lmapr}{\big(\lmap(\rho)\big)}

\newcommand{\mms}{\Omega}
\newcommand{\stomms}{\hspace{0.2mm}\underset{\prom}{\overset{\mms}{\odot}}}
\newcommand{\stoham}{\hspace{0.2mm}\underset{\hspace{0.4mm}\hame}{\overset{\fista}{\odot}}}

\newcommand{\tiq}{\tilde{q}}
\newcommand{\tip}{\tilde{p}}
\newcommand{\qp}{{(q,p)}}
\newcommand{\mqp}{{(-q,-p)}}
\newcommand{\qpmt}{{(q-\tilde{q},p-\tilde{p})}}
\newcommand{\qpt}{{(\tilde{q},\tilde{p})}}
\newcommand{\qpqpt}{(q,p\hspace{0.5mm};\tiq,\tip)}

\newcommand{\eip}{\hspace{0.2mm}\eee^{\ima\hspace{0.3mm}p\cdot\hq}\hspace{0.2mm}}
\newcommand{\eiq}{\hspace{0.2mm}\eee^{-\ima\hspace{0.3mm}q\cdot\hp}\hspace{0.2mm}}
\newcommand{\eipa}{\hspace{0.2mm}\eee^{-\ima\hspace{0.3mm}p\cdot\hq}\hspace{0.2mm}}
\newcommand{\eiqa}{\hspace{0.2mm}\eee^{\ima\hspace{0.3mm}q\cdot\hp}\hspace{0.2mm}}
\newcommand{\teip}{\hspace{0.2mm}\eee^{\ima\hspace{0.3mm}\tip\cdot\hq}\hspace{0.2mm}}
\newcommand{\teiq}{\hspace{0.2mm}\eee^{-\ima\hspace{0.3mm}\tiq\cdot\hp}\hspace{0.2mm}}
\newcommand{\teipa}{\hspace{0.2mm}\eee^{-\ima\hspace{0.3mm}\tip\cdot\hq}\hspace{0.2mm}}
\newcommand{\teiqa}{\hspace{0.2mm}\eee^{\ima\hspace{0.3mm}\tiq\cdot\hp}\hspace{0.2mm}}
\newcommand{\eipt}{\hspace{0.2mm}\eee^{\ima\hspace{0.3mm}(p+\tip)\cdot\hq}\hspace{0.2mm}}
\newcommand{\eiqt}{\hspace{0.2mm}\eee^{-\ima\hspace{0.3mm}(q+\tiq)\cdot\hp}\hspace{0.2mm}}
\newcommand{\eipta}{\hspace{0.2mm}\eee^{-\ima\hspace{0.3mm}(p+\tip)\cdot\hq}\hspace{0.2mm}}
\newcommand{\eiqta}{\hspace{0.2mm}\eee^{\ima\hspace{0.3mm}(q+\tiq)\cdot\hp}\hspace{0.2mm}}

\newcommand{\chk}{\chi_k}
\newcommand{\etk}{\eta_k}
\newcommand{\phl}{\phi_l}
\newcommand{\psl}{\psi_l}
\newcommand{\cek}{|\chi_k\rangle\hspace{-0.4mm}\langle\eta_k|}
\newcommand{\fpl}{|\phi_l\rangle\hspace{-0.4mm}\langle\psi_l|}
\newcommand{\erk}{r_k}
\newcommand{\esl}{s_l}
\newcommand{\akl}{a_{kl}}
\newcommand{\bkl}{b_{kl}}

\newcommand{\povm}{\mathsf{E}_{\fista}}
\newcommand{\pov}{\mathsf{E}}

\newcommand{\twirl}{{\mu[\ure]}}
\newcommand{\twirlnv}{{\nu[\hspace{-0.3mm}\vre]}}
\newcommand{\twirlru}{{\mesru[\ure]}}
\newcommand{\twirlrum}{{(\mesru\hspace{-0.5mm}\conv\prom)[\ure]}}

\newcommand{\twirmenv}{{(\mesaf\conv\nu)[\hspace{-0.3mm}\vre]}}
\newcommand{\twirmesa}{{\mesaf[\hspace{-0.3mm}\vre]}}

\newcommand{\instru}{\mathcal{I}_\bor^{\hspace{0.3mm}\fista\hspace{-0.3mm},\hspace{0.1mm}\sigma}\hspace{-0.5mm}}
\newcommand{\instrar}{\mathcal{I}_{\mbox{\tiny $(\hspace{0.3mm}\cdot\hspace{0.3mm})$}}^{\hspace{0.3mm}\fista\hspace{-0.3mm},\hspace{0.1mm}\sigma}\hspace{-0.5mm}}
\newcommand{\ginstru}{\mathcal{I}_{g\hspace{0.2mm}\bor}^{\hspace{0.3mm}\fista\hspace{-0.3mm},\hspace{0.1mm}\sigma}\hspace{-0.5mm}}
\newcommand{\instrug}{\mathcal{I}_G^{\hspace{0.3mm}\fista\hspace{-0.3mm},\hspace{0.2mm}\sigma}\hspace{-0.5mm}}
\newcommand{\instrur}{\mathcal{I}_G^{\hspace{0.3mm}\fista\hspace{-0.3mm},\hspace{0.3mm}\rho}\hspace{-0.5mm}}
\newcommand{\instrut}{\mathcal{I}_G^{\hspace{0.3mm}\fista\hspace{-0.3mm},\hspace{0.1mm}\tau}\hspace{-0.5mm}}

\newcommand{\parop}{\hspace{0.2mm}\Pi\hspace{0.2mm}}
\newcommand{\paropa}{\hspace{0.2mm}\Pi^{\hspace{0.1mm}\ast}}
\newcommand{\paropmo}{\hspace{0.2mm}\Pi^{-1}}

\newcommand{\Hstar}{\mathrm{H}^\ast\hspace{-0.5mm}}

\newcommand{\bor}{\mathscr{E}}
\newcommand{\cast}{\mathrm{C}^\ast}

\newcommand{\hq}{{\hat{q}}}
\newcommand{\hp}{{\hat{p}}}

\newcommand{\parp}{)_{\mbox{\tiny $+$}}}
\newcommand{\parm}{)_{\mbox{\tiny $-$}}}

\newcommand{\parre}{)_{\mbox{\tiny $\Re$}}}
\newcommand{\parim}{)_{\mbox{\tiny $\Im$}}}


\begin{document}

\title{A class of stochastic products on the convex set of \\ quantum states}

\author{Paolo Aniello$^{1,2}$   \vspace{2mm}
\\
\small \it   $^1$Dipartimento di Fisica ``Ettore Pancini'', Universit\`a di Napoli ``Federico II'',  \\
\small \it   Complesso Universitario di Monte S.\ Angelo, via Cintia, I-80126 Napoli, Italy   \vspace{1mm}
\\
\small \it   $^2$Istituto Nazionale di Fisica Nucleare, Sezione di Napoli,  \\
\small \it   Complesso Universitario di Monte S.\ Angelo, via Cintia, I-80126 Napoli, Italy}

\date{}

\maketitle


\begin{abstract}
\noindent  We introduce the notion of \emph{stochastic product} as a binary operation on the convex set of quantum states
(the density operators) that preserves the convex structure, and we investigate its main consequences. We consider, in particular,
stochastic products that are covariant wrt a symmetry action of a locally compact group. We then construct an interesting class
of group-covariant, associative stochastic products, the so-called \emph{twirled stochastic products}. Every binary operation in
this class is generated by a triple formed by a square integrable projective representation of a locally compact group, by
a probability measure on that group and by a fiducial density operator acting in the carrier Hilbert space of the representation.
The salient properties of such a product are studied. It is argued, in particular, that, extending this binary operation from the
density operators to the whole Banach space of trace class operators, this space becomes a Banach algebra, a so-called
\emph{twirled stochastic algebra}. This algebra is shown to be commutative in the case where the relevant group is abelian.
In particular, the commutative stochastic products generated by the Weyl system are treated in detail. Finally, the physical
interpretation of twirled stochastic products and various interesting connections with the literature are discussed.
\end{abstract}


\section{Introduction}
\label{intro}


Operator algebras and various related structures turn out to play a remarkable role in quantum
mechanics~\cite{Emch,Strocchi,Landsman,Moretti,Stormer}. Moreover, essentially the same algebraic structures
and tools are fundamental in the context of quantum field theory~\cite{Haag}, in quantum statistical
mechanics~\cite{Bratteli} and in non-commutative geometry~\cite{Connes}. In particular, quantum states can be defined
as normalized positive functionals on the $\cast$-algebra of (bounded) observables~\cite{Emch,Strocchi,Moretti,Haag}.

In several contexts, e.g., in quantum information science~\cite{Nielsen,Bengtsson} and in quantum measurement
theory~\cite{Wiseman,Busch,Heino}, one actually restricts to a distinct class of states --- the so-called normal
or completely additive states~\cite{Emch,Moretti} (in the finite-dimensional setting, all states are of this kind) ---
that can be realized as Hilbert space operators: von~Neumann's statistical operators, or \emph{density operators};
namely, as normalized, positive trace class operators. These operators --- that in the following will be simply referred to
as (quantum) \emph{states} --- form a closed convex subset of the complex Banach space of trace class operators,
which, endowed with the ordinary product of operators (i.e., composition) and with the involution $A\mapsto A^\ast$
(the adjoining map), becomes a Banach $\ast$-algebra. This algebra is embedded in the $\cast$-algebra of bounded operators
as a two-sided ideal, and the pairing between states and observables --- that form the selfadjoint part of this $\cast$-algebra ---
is realized by means of the trace functional.

One should not forget, however, that the ordinary product of operators defines the algebraic structure
of quantum \emph{observables}, and \emph{states} are only indirectly involved in this structure as positive
functionals. Actually, the product of two positive operators is, in general, \emph{not} positive itself;
in particular, the product of two density operators is not, in general, a density operator. Endowing the Banach space of
trace class operators with, e.g., the \emph{Jordan} $(A,B)\mapsto (AB+BA)/2$ or with the \emph{Lie}
product $(A,B)\mapsto (AB-BA)/2\ima$, one obtains algebraic structures that preserve selfadjointness;
i.e., the --- Jordan or Lie --- product of two selfadjoint operators is selfadjoint too. But the Jordan and the Lie products are
not associative, and the composition or the Jordan product of two density operators is a quantum state if and only if
the two factors are equal and \emph{pure} --- i.e., trivially, we have the idempotent product of a rank-one projection operator
by itself --- whereas the Lie product of two density operators is \emph{never} a quantum state.

On the one hand, as previously stressed, these facts are not surprising, because all the mentioned products
pertain to the canonical structure of the $\cast$-algebra of quantum observables, whose selfadjoint part
can in fact be regarded as a Jordan-Lie Banach algebra~\cite{Landsman,Alfsen}. On the other hand, it is
quite natural to wonder whether one can endow the Banach space of trace class operators with the structure
of an algebra enjoying the property of being \emph{state-preserving}; namely, such that the product of
two quantum states is again a quantum state. Of course, in order to avoid trivial examples, one should
require that this product be a \emph{genuinely binary} operation; i.e., the binary operation should
effectively involve \emph{both} its arguments. Moreover, it would be desirable to have an \emph{associative}
algebra, whose product be \emph{continuous} wrt some suitable topology. Also, due to the central importance of
symmetry transformations in quantum mechanics~\cite{WignerGT,Bargmann,Kadison,SimonWT,AnielloSW,AnielloSW-bis},
it would be interesting to achieve a state-preserving product enjoying some suitable \emph{covariance} property
wrt a symmetry action of a certain abstract group. Finally, a physical interpretation of such a product would
be in order.

The analogy with classical (statistical) physics is often a powerful guide when dealing with new
aspects of quantum theory. In this regard, it is worth recalling that \emph{classical} states --- namely,
Borel probability measures on phase space (say, $\phasp$) --- are embedded in the Banach algebra of complex (finite)
Borel measures on $\phasp$, where the algebra product is realized by convolution~\cite{AnielloPF,AnielloFPT}.
In particular, the convolution of two probability measures is still a probability measure. Otherwise stated,
convolution is state-preserving: The convolution of two (classical) states is a state too. Notice that this
structure extends immediately to the Banach algebra of complex Radon measures on a locally compact group~\cite{Folland-AA}.
It is then natural to wonder whether a similar structure may be envisaged in the quantum setting as well.

Convolution is a group-theoretical notion; hence, we will consider a general group-theoretical framework
where $\phasp$ --- regarded as the group of translations on phase space --- is replaced by some abstract topological group $G$.
In spite of this great generality, however, one expects that the case of the group of translations on phase space
be a meaningful example. We will actually show that it is possible to define a whole class of associative state-preserving
products on the space of trace class operators, for both finite-dimensional and infinite-dimensional quantum systems.
Each product is generated by a \emph{square integrable} (in general, projective)
representation~\cite{Duflo,AnielloSDP,AnielloPR,Aniello-new,AnielloDMF} of a locally compact group $G$, by a Borel
probability measure $\prom$ on $G$ and by a fiducial state $\fista$, i.e., a density operator acting in the carrier
Hilbert space $\hh$ of the representation. Our construction will yield a state-preserving product on the Banach space
$\trc$ of trace class operators in $\hh$, which, endowed with this binary operation, will be shown to be a Banach algebra.
As in the case of convolution, this algebra turns out to be commutative if the group $G$ is abelian.
We have first briefly sketched this kind of product in~\cite{AnielloDMF}, in the simplest case where $\prom=\delta$
(the Dirac measure at the identity of $G$). We will now consider the general framework, studying the main properties of this
new class of products and providing complete proofs. Moreover, we will work out two remarkable examples: the case where $G$
is compact and the case of the abelian group $\phasp$. In particular, in the latter case, for $\prom=\delta$, we get a
`quantum convolution'. It is worth observing that, unlike the `classical' convolution, the
quantum convolution possesses a degree of freedom; namely, it depends on the choice of a fiducial state.
An interpretation of this fact in terms of Wigner functions and of the associated quantum characteristic functions
will be provided. We will also argue that the quantum convolution is intimately related to Werner's remarkable approach
to quantum harmonic analysis on phase space~\cite{Werner-QHA}.

The first part of the paper will actually be devoted to set the basic rules of the game; i.e., to outline the general algebraic
background of our work. We will introduce the abstract notion of \emph{stochastic product} as a binary operation on the
convex set of quantum states $\sta\subset\trc$ that preserves the convex structure, and we will study its salient properties.
In particular, we will argue that a stochastic product can be extended to a state-preserving bilinear map on $\trc$.
Such an algebra structure on $\trc$ will be called, in the associative case, a \emph{stochastic algebra}. In this context,
our group-theoretical construction of a stochastic product will yield the class of \emph{twirled stochastic algebras}.

The paper is organized as follows. In sect.~{\ref{preliminaries}}, we establish the relevant mathematical background. Next,
in sect.~{\ref{products}}, we introduce the abstract notions of stochastic product and stochastic algebra, and derive
their main consequences. We will then define and study the class of \emph{twirled products} of trace class operators,
that, under suitable assumptions, give rise to stochastic products and algebras, the aforementioned twirled stochastic algebras;
see sect.~{\ref{main}}. In sect.~{\ref{examples}}, we will show by means of examples that twirled stochastic products can
be constructed for every Hilbert space dimension. Eventually, in sect.~{\ref{conclusions}}, a physical interpretation of these
stochastic products will be proposed by discussing some interesting connections with quantum measurement theory, with a
final glance at some possible developments of our work.


\section{Preliminaries: state-preserving linear and bilinear maps}
\label{preliminaries}


In this section, we fix our main terminology and notations, and collect some basic facts and results that will
be fundamental in the rest of the paper.

Let $\hh$ be a separable complex Hilbert space. We will suppose that $\dim(\hh)\ge 2$ (we neglect the trivial
one-dimensional case). We assume the scalar product $\scap$ in $\hh$ to be linear in its \emph{second}
argument, and the symbol $I$ will be adopted for the identity operator. We denote by $\trc$ the
\emph{complex} Banach space of trace class operators in $\hh$, by $\trcsa$ the \emph{real}
Banach space of \emph{selfadjoint} trace class operators, by $\trcp\subset\trcsa$ the convex cone
of all \emph{positive} trace class operators operators and by $\sta\subset\trcsa$ the convex set of all
\emph{density operators} (unit trace, positive trace class operators). $\sta$ will be regarded as the set of
(completely additive) \emph{states} of a quantum system. The extreme points of the convex set $\sta$ form the set $\pusta$
of \emph{pure} states (the rank-one projectors in $\hh$). We denote by $\trnor$ the (trace) norm in both $\trc$ and $\trcsa$.
The Banach space \emph{dual} of $\trc$ is identified  --- via the pairing $\trc\times\bop\ni(A,B)\mapsto\tr(AB)$ ---
with $\bop$, the complex Banach space of all bounded operators in $\hh$, endowed with the standard operator norm $\normi$.
Similarly, the dual space of $\trcsa$ is $\bopsa$, the real Banach space of (bounded) quantum observables.
The symbols $\ltrc\equiv\btrc$, $\lbop\equiv\bbop$ will denote the Banach spaces of bounded linear operators in $\trc$ and $\bop$,
respectively, with norms $\normo$ and $\normin$. Finally, we denote by $\unih$ the unitary group of $\hh$ and by $\unanh$
the unitary-antiunitary group.


\subsection{Some basic facts concerning state-preserving linear maps}

Consider a map $\ptm\colon\trc\rightarrow\trc$. $\ptm$ is called \emph{positive} if $\ptm(\trcp)\subset\trcp$;
it is called \emph{trace-preserving} (alternatively, \emph{adjoint-preserving}) if, for every $A\in\trc$,
$\tr(\ptm(A))=\tr(A)$ (respectively, $\ptm(A^\ast)= \ptm(A)^\ast$). It is clear that, if $\ptm$ is adjoint-preserving,
then it is also \emph{selfadjoint-preserving}; namely, $\ptm(\trcsa)\subset\trcsa$.

If $\|\ptm(A)\trn\le\|A\trn$, for all $A\in\trc$, $\ptm$ is said to be \emph{contractive}. In the case where $\ptm$ is linear,
this condition amounts to requiring that $\ptm\in\ltrc$ and $\|\ptm\noro\le 1$.

We say that $\ptm$ \emph{preserves the set of states} $\sta$ --- in short, that it is \emph{state-preserving}
--- if $\ptm(\sta)\subset\sta$; we say that $\ptm$ is \emph{stochastic} if it is linear and state-preserving.
Clearly, a stochastic map in $\trc$ is the quantum analogue of a stochastic map (or matrix) of classical
probability theory~\cite{Bengtsson}. A (quantum) stochastic map which is also \emph{completely positive}~\cite{Busch,Heino,Holevo}
is often called a \emph{quantum channel} in quantum information science.

\begin{example}
A stochastic map is, in general, neither injective nor surjective. Consider, e.g., the \emph{collapse channel}
$\trc\ni A\mapsto \tr(A)\sei\rho_0$, where $\rho_0\in\sta$ is a fixed density operator. We will call $\rank(\rho_0)$
the \emph{image rank} of the collapse channel. The collapse channel is said to be \emph{pure} if $\rho_0\in\pusta$.
Collapse channels in $\trc$ form a convex subset of $\ltrc$ whose extreme points are the pure collapse channels.
\end{example}

Obviously, if $\ptm$ is positive and trace-preserving, then, in particular, it is state-preserving.
Conversely, in the case where $\ptm$ is linear, $\ptm$ is state-preserving
(if and) only if it is both positive and trace-preserving; i.e., if $\ptm$ is stochastic, then it is a trace-preserving
positive linear map. Indeed, for every $A\in\trc$, we can write
\begin{equation} \label{nota}
A = \au + \ima \sei\ad \fin , \ \ \ \mbox{with $\au,\ad\in\trcsa$} \fin ,
\end{equation}
where $\au$, $\ad$ are uniquely determined, and
\begin{equation} \label{notb}
\au = \aup \msei - \aum = \aupc\tre\baup\msei - \aumc\tre\baum \fin , \ \ \
\ad = \adp \msei - \adm = \adpc\tre\badp \msei - \admc\tre\badm \fin ,
\end{equation}
where:
\begin{equation} \label{notc}
\aup,\ldots,\adm\in\trcp \fin , \ \
\aupc,\ldots,\admc\ge 0 \fin , \ \
\baup,\ldots,\badm\in\sta\cup\{0\} \fin    .
\end{equation}
The four positive operators $\aup,\ldots,\adm$ are uniquely determined if we impose the condition
\begin{equation} \label{orto}
\aup\tre\aum=0=\adp\tre\adm \fin ,
\end{equation}
whose solution is: $2\sei\aup=|\au|+\au,\ldots ,2\sei\adm=|\ad|-\ad$. We then set
\begin{equation} \label{notd}
\aupc\tre\baup = \aup \fin , \ \baup\in\sta \fin , \otto
\mbox{for $\aup\neq 0$, and $\aupc\equiv 0$, $\baup\equiv 0$, otherwise;} \ \ldots
\end{equation}
Notice that, by this definition and by relation~{(\ref{orto})}, $\baup\tre\baum=0=\badp\tre\badm$.

Therefore, if $\ptm\colon\trc\rightarrow\trc$ is linear and state-preserving, then it is positive and
\begin{eqnarray}
\tr(\ptm(A)) \equa
\aupc\tre\tr(\ptm(\baup)) - \aumc\tre\tr(\ptm(\baum)) +
\ima\sei\bile\adpc\tre\tr(\ptm(\badp)) - \admc\tre\tr(\ptm(\badm))\biri
\equb
\aupc\msei-\aumc\msei +\ima\sei\adpc\msei -\ima\sei\admc = \tr(A) \fin , \ \ \
\forall \cinque A\in\trc \fin .
\end{eqnarray}
By an analogous reasoning, a selfadjoint-preserving linear map in $\trc$ is adjoint-preserving.

With the obvious exception of the notion of adjoint-preserving map, analogous definitions and facts
can be established for a map from $\trcsa$ into itself. Clearly, every selfadjoint-preserving map in
$\trc$ can be restricted to $\trcsa$. In particular, if $\ptm\colon\trc\rightarrow\trc$ is a positive
linear map, then, setting
\begin{equation}
\ptmre(A)\defi \ptm(A), \ \ \ \forall \cinque A\in\trcsa \fin ,
\end{equation}
we obtain a positive (real-)linear map $\ptmre\colon\trcsa\rightarrow\trcsa$, because, for every $A\in\trcsa$,
\begin{equation}
A = \ap - \am \fin , \ \ap,\am\in\trcp \ \
\Longrightarrow \ \ \ptmre(A)=\ptm(\ap)-\ptm(\am)\in\trcsa \fin .
\end{equation}
Namely, $\ptm$ is selfadjoint-preserving, and, being linear, also adjoint-preserving: $\ptm(A^\ast)= \ptm(A)^\ast$.
We will call $\ptmre$ the \emph{restriction} of the (selfadjoint-preserving) map $\ptm$ to $\trcsa$. Obviously,
the restriction $\ptmre$ is trace-preserving or contractive if $\ptm$ is. By linearity and by decomposition~{(\ref{nota})},
it is easy to see that, conversely, if $\ptmre$ is trace-preserving, then $\ptm$ is trace-preserving too.

On the other hand, if $\mare\colon\trcsa\rightarrow\trcsa$ is a linear map, then setting
\begin{equation}
\maco(A)\defi \mare(\au) + \ima \sei\mare(\ad) \fin , \ \ \
A\in\trc \fin ,
\end{equation}
we obtain an adjoint-preserving (complex-)linear map $\maco\colon\trc\rightarrow\trc$. Note that
$\maco$ --- the \emph{canonical extension} of the linear map $\mare$ to $\trc$, or, simply,
the \emph{complexification} of $\mare$ --- is the unique (adjoint-preserving) linear map in $\trc$
whose restriction to $\trcsa$ is $\mare$. It is clear that $\maco$ is positive or trace-preserving
if (and only if) $\mare$ is.

\begin{proposition} \label{tecpro}
Let $\ptm\colon\trc\rightarrow\trc$ be a linear map. If $\ptm$ is positive, then it is bounded: $\ptm\in\ltrc$;
if $\ptm$ is also trace-preserving, then it is mildly contractive, i.e., $\|\ptm\noro=1$. If $\ptm$ is
trace-preserving and contractive, then it is positive {\rm (thus, actually, mildly contractive: $\|\ptm\noro=1$)}.
Analogous results hold for a linear map from $\trcsa$ into itself.
\end{proposition}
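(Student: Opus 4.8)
The plan is to prove the three assertions in turn, reducing the complex case to the selfadjoint case via the decompositions~\eqref{nota}--\eqref{notd} established above. Throughout I will freely use that every $A\in\trcsa$ has a Jordan decomposition $A=\ap-\am$ with $\ap,\am\in\trcp$ and $\|A\trn=\tr(\ap)+\tr(\am)$, and that the positive operators $\aup,\ldots,\adm$ associated with a general $A\in\trc$ satisfy $\|\au\trn=\aupc+\aumc$, $\|\ad\trn=\adpc+\admc$.

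\medskip
\noindent\textbf{Step 1: positivity implies boundedness.}
First I would show that a positive linear $\ptm\colon\trc\to\trc$ restricts to a positive linear map $\ptmre$ on $\trcsa$ (this is already noted in the text) and that $\ptmre$ is bounded. The key point is that a positive linear functional on the ordered Banach space $\trcsa$ — whose positive cone $\trcp$ is generating and closed — is automatically bounded; more directly, one argues that $\ptmre$ is bounded on the cone by a closed-graph or uniform-boundedness argument. Concretely: if $\ptmre$ were unbounded, pick $A_n\in\trcp$ with $\|A_n\trn=2^{-n}$ and $\|\ptmre(A_n)\trn\ge n$; then $A\defi\sum_n A_n\in\trcp$ (the series converges in trace norm and $\trcp$ is closed), and for each $n$ one has $A-A_n\in\trcp$, so $\ptmre(A)-\ptmre(A_n)=\ptmre(A-A_n)\in\trcp$, whence $\|\ptmre(A_n)\trn\le\|\ptmre(A)\trn$ for all $n$ by monotonicity of the trace norm on the positive cone — a contradiction. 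Thus $\ptmre$ is bounded on $\trcp$, hence on $\trcsa=\trcp-\trcp$, and then $\ptm=(\ptmre)_{\mbox{\tiny $\mathbb{C}$}}$ is bounded on $\trc$ by the decomposition~\eqref{nota}, so $\ptm\in\ltrc$.

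\medskip
\noindent\textbf{Step 2: positive and trace-preserving $\Rightarrow$ mildly contractive.}
Given $A\in\trc$, write it via~\eqref{notb}--\eqref{notd} as a combination of at most four states with nonnegative coefficients summing (in absolute value) appropriately. Since $\ptm(\sta)$ consists of states, each $\|\ptm(\baup)\trn=1$, etc., and by subadditivity of $\trnor$,
\begin{equation*}
\|\ptm(A)\trn \le \aupc+\aumc+\adpc+\admc = \|\au\trn + \|\ad\trn \le \sqrt{2}\,\|A\trn,
\end{equation*}
so $\ptm$ is bounded with $\|\ptm\noro$ finite. To get $\|\ptm\noro=1$ I restrict to the selfadjoint part: for $A\in\trcsa$ with Jordan decomposition $A=\ap-\am$, positivity gives $\ptm(\ap),\ptm(\am)\in\trcp$ and trace-preservation gives $\tr(\ptm(\ap))=\tr(\ap)$, $\tr(\ptm(\am))=\tr(\am)$; hence
\begin{equation*}
\|\ptm(A)\trn \le \|\ptm(\ap)\trn + \|\ptm(\am)\trn = \tr(\ptm(\ap)) + \tr(\ptm(\am)) = \tr(\ap)+\tr(\am) = \|A\trn,
\end{equation*}
so $\ptmre$ — and therefore $\ptm$ — is contractive on $\trcsa$, giving $\|\ptm\noro\le 1$; equality follows by evaluating on any single state. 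The remark after the statement that $\ptm$ is trace-preserving iff $\ptmre$ is then closes the gap between the complex and real pictures.

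\medskip
\noindent\textbf{Step 3: trace-preserving and contractive $\Rightarrow$ positive.}
This is the substantive converse and the main obstacle. Again it suffices to work with $\ptmre$ on $\trcsa$. Let $B\in\trcp$, $B\neq 0$; normalizing, assume $\tr(B)=1$, i.e.\ $B\in\sta$. I must show $\ptmre(B)\in\trcp$. Write the selfadjoint operator $\ptmre(B)=\tp-\tm$ as a difference of orthogonal positive operators ($\tp\tm=0$); the goal is $\tm=0$. Since $\ptmre$ is trace-preserving, $\tr(\tp)-\tr(\tm)=\tr(\ptmre(B))=\tr(B)=1$, while contractivity gives $\tr(\tp)+\tr(\tm)=\|\ptmre(B)\trn\le\|B\trn=1$; subtracting forces $\tr(\tm)\le 0$, hence $\tr(\tm)=0$, hence $\tm=0$ (a positive trace class operator of zero trace vanishes). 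Therefore $\ptmre(B)=\tp\in\trcp$, so $\ptmre$ is positive; complexifying, $\ptm$ is positive, and by Step~2 it is then mildly contractive, $\|\ptm\noro=1$. The ``analogous results'' for maps $\trcsa\to\trcsa$ are exactly Steps~1--3 with the real space throughout, which is what was actually proved; the complex statements follow by the complexification/restriction correspondence recorded before the proposition.

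\medskip
The only delicate point is Step~1 (boundedness from positivity alone), where one genuinely needs the order-theoretic structure of $\trcsa$ rather than a norm estimate; Steps~2 and~3 are short once one systematically passes through the Jordan decomposition and the orthogonal splitting $\tp-\tm$ of $\ptmre(B)$.
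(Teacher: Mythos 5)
Your Step 1 is a correct, self-contained proof of the boundedness of positive maps (essentially the standard argument behind the result the paper simply cites), and your orthogonal-splitting computation in Step 3 coincides with the paper's own treatment of the real case. The trouble is that the two genuinely \emph{complex}-space claims of the proposition are exactly the ones your Jordan-decomposition strategy cannot reach. In Step 2 you pass from contractivity of $\ptmre$ on $\trcsa$ to $\|\ptm\noro\le 1$; but $\normo$ is the operator norm on the complex space $\trc$, and contractivity on the selfadjoint part only yields
$\|\ptm(A)\trn\le\|\ptm(\au)\trn+\|\ptm(\ad)\trn\le\|\au\trn+\|\ad\trn\le 2\tre\|A\trn$.
(Incidentally, your intermediate bound $\|\au\trn+\|\ad\trn\le\sqrt{2}\,\|A\trn$ is false: for the rank-one operator $A=|e_1\rangle\langle e_2|$ one has $\|A\trn=1$ while $\|\au\trn=\|\ad\trn=1$; the correct constant is $2$, which is still enough for finiteness.) Reducing the constant from $2$ to $1$ is precisely the content of the Russo--Dye theorem, which the paper invokes for the Banach space adjoint $\ptma\colon\bop\rightarrow\bop$: this map is positive and unital, hence $\|\ptma\norin=\|\ptma(I)\nori=1$, which is equivalent to $\|\ptm\noro=1$. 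Without that theorem, or an equivalent argument on the $\cast$-algebra side, the complex statement is not established; what you have proved is only its restriction to $\trcsa$.

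The same issue recurs, more seriously, in Step 3. You announce that ``it suffices to work with $\ptmre$'', but a linear map on $\trc$ that is merely trace-preserving and contractive is not known, at that stage, to map $\trcsa$ into itself, so $\ptmre$ need not exist; adjoint-preservation is precisely what has to be proved. (The paper's Remark following the proposition treats exactly the weaker statement you actually establish, in which adjoint-preservation is an additional hypothesis.) The paper again goes through the dual: $\ptma$ is a unital contraction on the $\cast$-algebra $\bop$, hence positive by Proposition~2.11 of Paulsen's book, and selfadjoint-preservation comes out of that spectral/numerical-range argument rather than being available in advance. Your computation $\tr(\tp)-\tr(\tm)=\tr(B)$ together with $\tr(\tp)+\tr(\tm)\le\tr(B)$, forcing $\tm=0$, is correct and settles the version of the proposition for maps of $\trcsa$ into itself; to obtain the full complex case you must first supply the missing adjoint-preservation step.
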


\begin{proof}
It is known that every positive linear map in $\trc$ is bounded; see, e.g., Proposition~{6.2} of~\cite{Busch}.
Let $\ptm$ be such a map and let $\ptma\colon\bop\rightarrow\bop$ be its Banach space adjoint
(wrt the pairing $\trc\times\bop\ni(A,B)\mapsto\tr(A\tre B)$), which is a positive map too. Moreover, if $\ptm$ is
also trace-preserving, then $\ptma$ is unital; i.e., $\ptma(I)=I$. Therefore, if $\ptm$ is positive and trace-preserving,
then, by the Russo-Dye theorem (Corollary~{2.9} of~\cite{Paulsen}), $\|\ptma\norin = \|\ptma(I)\nori=\|I\nori=1$; which is equivalent to
$\|\ptm\noro = 1$. Now, let $\ptm\colon\trc\rightarrow\trc$ be a trace-preserving bounded linear map. If $\|\ptm\noro\le 1$,
then the unital map $\ptma$ satisfies $\|\ptma\norin\le 1$. Therefore, by Proposition~{2.11} of~\cite{Paulsen}, $\ptma$
--- hence, $\ptm$ --- is positive.

Next, if $\mare\colon\trcsa\rightarrow\trcsa$ is a positive linear map, then its extension $\maco$ to $\trc$ is a
positive linear map too. Therefore, $\maco$ --- hence, $\mare$ --- is bounded. If, in addition, $\mare$ is
trace-preserving, then $\maco$ is trace-preserving as well, so that $\|\maco\noro=1$; hence: $\|\mare\noro\le\|\maco\noro=1$.
On the other hand, for every $A\in\sta$,
\begin{equation}
1=\tr(A)=\tr(\mare(A))=\|\mare(A)\trn\le\|\mare\noro \fin ,
\end{equation}
and thus $\|\mare\noro=1$. Finally, if $\mare\colon\trcsa\rightarrow\trcsa$ is a contractive, trace-preserving linear map,
for every $A\in\trc$, $A\ge 0$, we have:
\begin{eqnarray}
\tr (\maap) + \tr (\maam) \equa
\|\mare(A)\trn
\lequb
\|A\trn
\equb
\tr(A)
\equb
\tr(\mare(A)) = \tr (\maap) - \tr(\maam) \fin .
\end{eqnarray}
Hence, $\maam\msette = 0$ and $\mare$ is a positive map.
\end{proof}

\begin{remark}
Notice that a trace-preserving and adjoint-preserving linear map $\ptm\colon\trc\rightarrow\trc$, which is also
contractive on $\trcsa$, is positive. Indeed, its restriction $\ptmre$ to $\trcsa$ is trace-preserving and contractive.
Hence, by Proposition~{\ref{tecpro}}, $\ptmre$ is positive, so that $\ptm$ is positive too.
\end{remark}

By the previous discussion and by Proposition~{\ref{tecpro}}, one immediately concludes that

\begin{proposition} \label{mainpro}
For every linear map $\ptm$ in $\trc$ {\rm (alternatively, in $\trcsa$)} the following facts are equivalent:
\begin{itemize}

\item $\ptm$ is stochastic.

\item $\ptm$ is the complexification {\rm (respectively, the restriction to $\trcsa$)} of a
stochastic map in $\trcsa$ {\rm (respectively, in $\trc$)}.

\item  $\ptm$ is positive and trace-preserving.

\item  $\ptm$ is trace-preserving and contractive.

\item  $\ptm$ is trace-preserving, bounded and mildly contractive $(\|\ptm\noro=1)$.

\end{itemize}
\end{proposition}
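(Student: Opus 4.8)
The plan is to verify a short cycle of implications among the five listed conditions, since Proposition~\ref{tecpro} and the discussion preceding it already contain all of the analytic content; what remains is essentially bookkeeping. I would first treat a linear map $\ptm$ on $\trc$, and then deduce both the version for a map on $\trcsa$ and the complexification/restriction clause.

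First I would record that the equivalence of ``$\ptm$ stochastic'' with ``$\ptm$ positive and trace-preserving'' is exactly what is established just before Proposition~\ref{tecpro}: one direction is immediate, since a positive trace-preserving linear map sends $\sta\subset\trcp$ into $\trcp$ without altering the trace, hence into $\sta$; for the other direction one uses the canonical decomposition~\eqref{nota}--\eqref{notd} to write an arbitrary $A\in\trc$ as a real-linear combination of the density operators $\baup,\ldots,\badm$, applies $\ptm$ termwise, and invokes $\tr(\ptm(\baup))=\tr(\baup)=1$ (and the analogues for the other three) to conclude $\tr(\ptm(A))=\tr(A)$. This yields the equivalence of the first and third bullets.

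Next I would chain the third, fourth and fifth bullets through Proposition~\ref{tecpro}. That proposition gives: positive and trace-preserving $\Rightarrow$ bounded and $\|\ptm\noro=1$, which is the fifth bullet; trivially $\|\ptm\noro=1$ forces $\ptm$ to be contractive, which is the fourth bullet; and trace-preserving together with contractive $\Rightarrow$ positive, which closes back to the third bullet. Hence the third, fourth and fifth bullets are mutually equivalent, and, combined with the previous step, so is the first.

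Finally, for the second bullet: if $\ptm$ on $\trc$ is stochastic, then by the third bullet it is positive, hence selfadjoint-preserving, so its restriction $\ptmre$ to $\trcsa$ is a positive linear map there; it is trace-preserving because $\ptm$ is, and still state-preserving because $\sta\subset\trcsa$, so $\ptmre$ is stochastic on $\trcsa$ and $\ptm$ is its complexification. Conversely, given a stochastic $\mare$ on $\trcsa$, its complexification $\maco$ is complex-linear and adjoint-preserving, and it is positive and trace-preserving precisely because $\mare$ is; by the equivalence of the first and third bullets on $\trc$ it is stochastic, and its restriction to $\trcsa$ recovers $\mare$ by uniqueness of the complexification. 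The $\trcsa$-version of the whole statement then follows by the same reasoning, using the ``analogous results hold for a linear map from $\trcsa$ into itself'' clause of Proposition~\ref{tecpro} and running the complexification/restriction correspondence in the reverse direction. I do not expect a genuine obstacle here; the only point to watch is that each implication is invoked in the direction in which its hypothesis --- linearity, or the precise contractivity assumption --- is actually available.
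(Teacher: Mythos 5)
Your proposal is correct and follows essentially the same route as the paper, which proves Proposition~\ref{mainpro} precisely by combining the decomposition~(\ref{nota})--(\ref{notd}) discussion (stochastic $\Leftrightarrow$ positive and trace-preserving, plus the restriction/complexification correspondence) with the cycle of implications supplied by Proposition~\ref{tecpro}. The only point worth making explicit is that in the direction ``stochastic $\Rightarrow$ positive'' one should note that every nonzero $A\in\trcp$ is $\tr(A)$ times a density operator, so positivity follows by scaling; this is implicit in your termwise application of $\ptm$ but deserves a sentence.
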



\subsection{Adjoint-preserving isometries, symmetries and pureness-preserving maps}

Later on, we will also exploit some useful facts related to Wigner's theorem on symmetry transformations
that will be now discussed.

\begin{theorem} \label{pseudowig}
Let $\ptm$ be an isometric, adjoint-preserving, surjective linear map in $\trc$. Then, $\ptm$ is
either of the form $\ptm(A)=s\sei UA\sei U^\ast$, or of the form $\ptm(A)=s\sei WA^\ast \tre W^\ast$,
where the (constant) factor $s\in\{1,-1\}$, and $U$, $W$ are a unitary and an antiunitary operator in
$\hh$, respectively, that are uniquely defined up to multiplication by a phase factor.
\end{theorem}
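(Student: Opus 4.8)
The plan is to deduce this from Wigner's theorem on symmetry transformations by an extreme-point argument. Since $\ptm$ is adjoint-preserving it maps $\trcsa$ into itself, and its restriction $\ptmre\colon\trcsa\rightarrow\trcsa$ is a surjective real-linear isometry: surjectivity onto $\trcsa$ holds because, for $B=B^\ast$ in the range of $\ptm$, say $B=\ptm(A)$, one also has $B=\ptm(A^\ast)=\ptm\big((A+A^\ast)/2\big)$ with $(A+A^\ast)/2\in\trcsa$. A surjective linear isometry maps the closed unit ball of $\trcsa$ onto itself and therefore permutes its extreme points. These extreme points are precisely the operators $\pm P$ with $P$ a rank-one projection: by the spectral theorem a selfadjoint trace class operator $A$ with $\|A\trn=1$ and two or more nonzero eigenvalues is a nontrivial midpoint of two unit-trace-norm operators obtained by perturbing those eigenvalues in opposite directions, whereas the operators $\pm P$ are readily checked to be extreme. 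Hence $\ptm(P_\psi)=\epsilon(\psi)\sei Q_\psi$ for every unit vector $\psi$, where $P_\psi\defi|\psi\rangle\langle\psi|$, each $Q_\psi$ is a rank-one projection, $P_\psi\mapsto Q_\psi$ is a bijection of $\pusta$ onto itself, and $\epsilon(\psi)\in\{1,-1\}$.

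Next I would show that $\epsilon$ is constant. For rank-one projections $R_1,R_2$ one has $\|R_1+R_2\trn=\tr(R_1+R_2)=2$, while $\|R_1-R_2\trn=2\sqrt{1-\tr(R_1R_2)}$, as is seen by restricting the trace-zero selfadjoint operator $R_1-R_2$ to the span of the two ranges. If $\epsilon(\psi_1)\neq\epsilon(\psi_2)$, then $\ptm(P_{\psi_1}-P_{\psi_2})=\pm(Q_{\psi_1}+Q_{\psi_2})$, so $\|\ptm(P_{\psi_1}-P_{\psi_2})\trn=2$; but the isometry property forces this to equal $\|P_{\psi_1}-P_{\psi_2}\trn=2\sqrt{1-|\langle\psi_1|\psi_2\rangle|^2}$, which is strictly less than $2$ unless $\psi_1\perp\psi_2$. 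Thus $\epsilon$ agrees on any two non-orthogonal unit vectors, and since any two unit vectors can be joined through a third one non-orthogonal to both, $\epsilon\equiv s$ for a fixed $s\in\{1,-1\}$. Writing $Q_\psi=|\phi_\psi\rangle\langle\phi_\psi|$, the same isometry identity applied once more gives $\tr(Q_{\psi_1}Q_{\psi_2})=\tr(P_{\psi_1}P_{\psi_2})$, i.e.\ $|\langle\phi_{\psi_1}|\phi_{\psi_2}\rangle|^2=|\langle\psi_1|\psi_2\rangle|^2$: the bijection $P_\psi\mapsto Q_\psi$ of $\pusta$ preserves transition probabilities.

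Now Wigner's theorem~\cite{WignerGT,Bargmann,AnielloSW} yields either a unitary operator $U$ with $Q_\psi=UP_\psi U^\ast$ for all $\psi$, or an antiunitary operator $W$ with $Q_\psi=WP_\psi W^\ast$, in both cases unique up to a phase factor. Accordingly $\ptm$ coincides with $A\mapsto s\sei UAU^\ast$, respectively with $A\mapsto s\sei WA^\ast W^\ast$ (note that $A\mapsto WA^\ast W^\ast$ is complex-linear, being a composition of two conjugate-linear maps, and reduces to $A\mapsto WAW^\ast$ on selfadjoint $A$), on every rank-one projection. Since the complex-linear span of the rank-one projections is trace-norm dense in $\trc$ — every rank-one operator $|\psi\rangle\langle\phi|$ is a linear combination of rank-one projections by polarization, and finite-rank operators are trace-norm dense — and both $\ptm$ and the candidate map are bounded and linear, the two maps coincide on all of $\trc$. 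Uniqueness up to a phase carries over to $\ptm$: if $UAU^\ast=VAV^\ast$ on $\trc$ with $U,V$ unitary, then $V^\ast U$ commutes with every trace class operator, hence equals $\lambda I$ with $|\lambda|=1$; the antiunitary case is analogous, and the two alternatives are mutually exclusive because one implementing operator is linear and the other conjugate-linear.

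The step demanding the most care is the passage to Wigner's theorem: pinning down the extreme points of the unit ball of $\trcsa$, confirming that the sign $\epsilon$ is \emph{globally} (not merely locally) constant, and checking that the antiunitary alternative indeed produces a complex-linear map of the stated shape. Everything downstream — the invocation of Wigner's theorem and the density and uniqueness arguments — is routine.
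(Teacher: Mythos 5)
Your proof is correct, but it takes a genuinely different route from the paper's. The paper starts from Russo's classification of \emph{all} surjective linear isometries of the trace class ($\ptm(A)=UAV$ or $\ptm(A)=WA^\ast Z$ with $U,V$ unitary and $W,Z$ antiunitary), and then uses the adjoint-preserving hypothesis to force $V=\pm U^\ast$ and $Z=\pm W^\ast$, invoking Bargmann's theorem to see that the sign is a global constant. You instead bypass Russo entirely: you restrict to the real Banach space $\trcsa$, identify the extreme points of its unit ball as $\pm P$ with $P\in\pusta$, deduce that $\ptm$ induces a sign $\epsilon$ and a bijection of $\pusta$, prove constancy of $\epsilon$ and preservation of transition probabilities from the trace-norm identities $\|R_1+R_2\trn=2$ and $\|R_1-R_2\trn=2\sqrt{1-\tr(R_1R_2)}$, and then invoke Wigner's theorem followed by a polarization/density argument to recover $\ptm$ on all of $\trc$. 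Each step checks out, including the two points you flag as delicate: the extreme-point characterization (the perturbation argument, applied to rank-one spectral pieces, does rule out any unit-norm operator of rank $\ge 2$), the global constancy of $\epsilon$ (non-orthogonal pairs force agreement, and any two unit vectors are linked by a third non-orthogonal to both), and the complex-linearity of $A\mapsto WA^\ast W^\ast$. What the two approaches buy is different: the paper's argument is shorter but leans on the comparatively heavy external input of Russo's theorem, whereas yours is essentially self-contained modulo Wigner's theorem (which the paper needs anyway to settle the constancy of $s$ and the uniqueness up to phase), at the cost of redoing some convexity bookkeeping by hand. One cosmetic remark: when you say ``two or more nonzero eigenvalues'' you should mean counted with multiplicity (equivalently, rank $\ge 2$), which is what your perturbation of individual rank-one spectral components actually uses.
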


\begin{proof}
By a classical result due to Russo~\cite{Russo,Russo-bis} --- also see Theorem~5 and Remark~2 of~\cite{AnielloSW-bis} ---
a surjective linear isometry $\ptm\colon\trc\rightarrow\trc$ is either of the form $\ptm(A)=UA\sei V$, or of the form
$\ptm(A)=WA^\ast \due Z$, where $(U,V)$ and $(W,Z)$ are pairs, respectively, of unitary and antiunitary operators in $\hh$.
In particular, if $\ptm$ is adjoint-preserving, then we must have that $V=\pm U^\ast$ and $Z=\pm W^\ast$. Indeed, if $\psi\in\hh$
were such that $\phi\equiv U\psi\neq \pm V^\ast\psi\equiv\chi$ --- or $\eta\equiv W\psi\neq \pm Z^\ast\psi\equiv\vartheta$ --- then
the rank-one operator $U|\psi\rangle\langle\psi|V= |\phi\rangle\langle\chi|$, or $W|\psi\rangle\langle\psi|Z= |\vartheta\rangle\langle\eta|$,
would not be selfadjoint; in fact, we would have: $\|\phi\|=\|\chi\|$ and $\phi\neq\pm\chi$, or $\|\eta\|=\|\vartheta\|$ and $\eta\neq\pm\vartheta$.
Hence, we actually have: $U\psi = s\sei V^\ast\psi$ --- or $W\psi = s\sei Z^\ast\psi$ --- for all $\psi\in\hh$.
We stress that here the factor $s=\pm 1$ cannot depend on the choice of $\psi$, because the unitary (linear) operators $U$, $V^\ast$
--- or the antiunitary (antilinear) operators $W$, $Z^\ast$ --- generate the same symmetry transformation~\cite{Bargmann,SimonWT,AnielloSW,AnielloSW-bis},
so that $s$ must be a constant factor; see, in particular, Theorem~2 of~\cite{Bargmann}.
Therefore, an isometric, adjoint-preserving, surjective linear map in $\trc$ is --- possibly up to a factor $s=-1$ --- the
bonded linear extension of a symmetry transformation $\pusta\ni P\mapsto UP\sei U^\ast$ ($\pusta\ni P\mapsto WP \sei W^\ast$)
where, by Wigner's theorem~\cite{Bargmann,SimonWT,AnielloSW,AnielloSW-bis}, the unitary operator $U$  (the antiunitary operator $W$)
is uniquely defined up to a phase factor.
\end{proof}

\begin{corollary} \label{varwig}
Let $\ptm$ be an isometric, positive linear map (hence, a stochastic map) in $\trc$. If $\ptm$ is surjective,
then it is a symmetry transformation; i.e., either of the form $\ptm(A)=UA\sei U^\ast$, or of the form
$\ptm(A)=WA^\ast \tre W^\ast$, where $U$, $W$ are a unitary and an antiunitary operator in $\hh$,
respectively, uniquely defined up to a phase factor.
\end{corollary}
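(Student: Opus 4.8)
The plan is to reduce the statement to Theorem \ref{pseudowig} by showing that a positive linear map is automatically adjoint-preserving, and that the sign factor $s$ in Theorem \ref{pseudowig} must be $+1$ when the map is positive. First I would note that $\ptm$, being positive and linear, is selfadjoint-preserving (it maps $\trcsa$ into itself, since every selfadjoint trace class operator is a difference of two positive ones, each of which $\ptm$ sends into $\trcp$), and hence, being linear, it is adjoint-preserving by the observation made just before Proposition \ref{tecpro}. Also, an isometric linear map in $\trc$ is in particular contractive, so by Proposition \ref{mainpro} (or directly: isometry implies $\|\ptm(A)\trn=\|A\trn$, and a positive map is trace-preserving once... ) — actually the cleanest route is simply: $\ptm$ is positive, and being an isometry it satisfies the hypotheses of Theorem \ref{pseudowig} once we know it is adjoint-preserving, which we have just established.

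Next I would apply Theorem \ref{pseudowig}: $\ptm(A)=s\,UAU^\ast$ or $\ptm(A)=s\,WA^\ast W^\ast$, with $s\in\{1,-1\}$, $U$ unitary, $W$ antiunitary. The key remaining step is to rule out $s=-1$. For this, pick any unit vector $\psi\in\hh$ and consider the rank-one projector $P=|\psi\rangle\langle\psi|\in\pusta\subset\trcp$. Then $\ptm(P)$ must lie in $\trcp$ by positivity. But $\ptm(P)=s\,U|\psi\rangle\langle\psi|U^\ast=s\,|U\psi\rangle\langle U\psi|$ in the first case, and $\ptm(P)=s\,|W\psi\rangle\langle W\psi|$ in the second; in either case this is $s$ times a nonzero positive rank-one operator, so it is positive if and only if $s=1$. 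Hence $s=1$, and $\ptm$ has exactly one of the two stated forms.

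Finally I would record that the uniqueness of $U$ (resp. $W$) up to a phase factor is inherited verbatim from Theorem \ref{pseudowig}, and that $\ptm$ is stochastic because it is positive and (being an isometry on $\trc$, hence trace-preserving on states, or by Proposition \ref{mainpro}) trace-preserving — this is the parenthetical remark in the statement and needs no separate argument. The only genuinely substantive point is the elimination of $s=-1$, and the obstacle there is merely to observe that positivity tested on a single pure state already forces the sign; everything else is an immediate invocation of the preceding theorem and of Wigner's theorem as already quoted. $\hfill\square$
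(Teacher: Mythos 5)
Your proposal is correct and follows exactly the route the paper intends: the corollary is stated without separate proof as an immediate consequence of Theorem~\ref{pseudowig}, relying on the fact (established just before Proposition~\ref{tecpro}) that a positive linear map is adjoint-preserving, and on positivity forcing $s=1$, which you verify cleanly by testing on a rank-one projector. Nothing is missing.
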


\begin{remark}
According to Wigner~\cite{WignerGT}, a \emph{symmetry transformation} is a bijective map on the pure states
preserving the transition probability. In our setting of linear maps in $\trc$, a symmetry transformation
will be simply a map in $\trc$ of the form specified in Corollary~{\ref{varwig}}. This is of course coherent with
Wigner's classical theorem~\cite{WignerGT,Bargmann,Kadison,SimonWT,AnielloSW,AnielloSW-bis}, and with a linear version of this
theorem~\cite{AnielloSW,AnielloSW-bis}, where the assumption of preservation of the transition probability
between pure states becomes immaterial.
\end{remark}

\begin{remark} \label{psli}
By Corollary~{\ref{varwig}}, the group (wrt composition) of all positive, surjective linear isometries in $\trc$
is precisely the group of all symmetry transformations in $\trc$, which is of course isomorphic to the
\emph{projective unitary-antiunitary group} $\punanh=\unanh/\toro$, where $\toro$ is identified with the
subgroup $\{z\tre I\}_{z\in\toro}$ of the group $\unanh$ of all unitary or antiunitary operators in $\hh$.
\end{remark}

\begin{remark}
Let $\ptm$ be a linear isometry in $\trc$. By Proposition~{\ref{tecpro}}, $\ptm$ is positive if (and only if)
it is trace-preserving.
\end{remark}

\begin{remark} \label{remfin}
Clearly, if $\dim(\hh)<\infty$, then a linear isometry in $\trc$ is automatically surjective.
\end{remark}

Observe that, taking into account the previous remark and the contractivity of stochastic maps, from Corollary~{\ref{varwig}}
one immediately derives the following:

\begin{corollary} \label{fincase}
Let $\ptm\colon\trc\rightarrow\trc$ be a stochastic map, with $\dim(\hh)<\infty$. Then, exactly one of the
following two alternatives is realized:
\begin{enumerate}

\item $\ptm$ is a symmetry transformation.

\item There exists a non-positive $A\in\trc$, with $\|A\trn =1$, such that $\|\ptm(A)\trn <1$.

\end{enumerate}
\end{corollary}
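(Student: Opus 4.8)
The plan is to show that the two alternatives are mutually exclusive and jointly exhaustive, deducing everything from Corollary~\ref{varwig} together with the finite-dimensional automatic-surjectivity observation of Remark~\ref{remfin}. First I would argue exclusivity: if $\ptm$ is a symmetry transformation, then it has one of the two forms in Corollary~\ref{varwig}, and in either case $\|\ptm(A)\trn = \|A\trn$ for all $A\in\trc$ --- conjugation by a unitary is trace-norm preserving, and $A\mapsto A^\ast$ together with conjugation by an antiunitary is as well --- so no $A$ with $\|A\trn=1$ can satisfy $\|\ptm(A)\trn<1$. Hence alternatives~1 and~2 cannot both hold.

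Next I would establish that at least one of them holds. Suppose alternative~2 fails; that is, suppose that $\|\ptm(A)\trn = 1$ for every non-positive $A\in\trc$ with $\|A\trn = 1$. Since $\ptm$ is stochastic it is contractive (Proposition~\ref{mainpro}), so $\|\ptm(A)\trn\le\|A\trn$ for all $A$; and for positive $A$ with $\|A\trn=1$, i.e. for $A\in\sta$, we already have $\|\ptm(A)\trn = \tr(\ptm(A)) = \tr(A) = 1$ because $\ptm$ is trace-preserving and state-preserving. Combining these, $\|\ptm(A)\trn = \|A\trn$ for every $A$ of unit trace norm, positive or not, hence by homogeneity $\ptm$ is a linear isometry of $\trc$. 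Being a stochastic map, $\ptm$ is in particular positive and trace-preserving, so $\ptm$ is an isometric positive linear map. Now invoke finite-dimensionality: by Remark~\ref{remfin} a linear isometry of $\trc$ is automatically surjective, so $\ptm$ is a surjective isometric positive linear map, and Corollary~\ref{varwig} applies and tells us precisely that $\ptm$ is a symmetry transformation. Thus alternative~1 holds.

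I expect the only delicate point to be the step that decomposes ``$\ptm$ preserves the norm on the unit sphere'' into the positive part and the non-positive part cleanly: one must be careful that the set of unit-trace-norm elements splits as $\sta$ together with the non-positive unit-norm operators, with no overlap and nothing missed, so that the hypothesized failure of alternative~2 combined with the automatic isometry on $\sta$ genuinely yields isometry on all of the unit sphere. Everything else is a direct citation of the results already proved in this section --- the contractivity from Proposition~\ref{mainpro}, the structure theorem Corollary~\ref{varwig}, and the finite-dimensional surjectivity of Remark~\ref{remfin} --- so the argument is short once that case split is laid out carefully.
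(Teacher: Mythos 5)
Your proof is correct and follows essentially the same route the paper intends: the negation of alternative~2, combined with contractivity (Proposition~\ref{mainpro}) on the non-positive part of the unit sphere and trace-preservation on $\sta$, forces $\ptm$ to be a linear isometry, which by Remark~\ref{remfin} is surjective in finite dimension, so Corollary~\ref{varwig} applies; exclusivity follows since symmetry transformations preserve the trace norm. The case split you flag as delicate is handled correctly, since a positive operator of unit trace norm is exactly a density operator.
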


\begin{example} \label{tert}
In the case where $\dim(\hh)=\infty$, it is easy to show that the previous result does not hold by
constructing an explicit example of an isometric stochastic map which is not a symmetry transformation.
Indeed, for any $\nume\in\nat$, or for $\nume=\infty$, let $\{\tk\colon\hh\rightarrow\hh\}_{k=1}^{\nume}$ be a
set of linear or antilinear isometries such that
\begin{equation} \label{orthcond}
\ran(\tk)\perp \ran(\tl) \fin , \ \ \mbox{for $k\neq l$} \fin ,
\end{equation}
and let $\{\prok\}_{k=1}^{\nume}$ be a strictly positive probability distribution.
Consider the linear map $\ptm$ in $\trc$ defined by
\begin{equation} \label{stochiso}
\ptm(A) \defi \sum_{k=1}^\nume \prok\sei\tk \astak \tre\tka , \ \ \
\mbox{with: $\astak=A$ for $\tk$ linear, $\astak=A^\ast$ for $\tk$ antilinear.}
\end{equation}
Here, in the case where $\nume=\infty$, the sum converges wrt the norm $\trnor$ of $\trc$. $\ptm$ is an isometric
stochastic map --- as the reader may easily check --- which is a symmetry transformation if and only if $\nume=1$
\emph{and} $\ran(T)=\hh$, with $T\equiv\tu$. Indeed, for $\nume=1$, $\ptm(P)\in\pusta$, for all $P\in\pusta$, but
$\ran(\ptm(P))\subset\ran(T)$, so that $\ptm$ is a symmetry transformation if and only if $\ran(T)=\hh$
($\Leftrightarrow T$ is a unitary or antiunitary). For $\nume\ge 2$ and $P\in\pusta$,
the state $\ptm(P)$ is \emph{not} pure, because, by condition~{(\ref{orthcond})}, $\rank(\ptm(P))=\nume$;
thus, in this case $\ptm$ is not a symmetry transformation. Accordingly, $\ptm$ is surjective
if and only if $\nume=1$ and $\ran(T)=\hh$. In fact --- taking $0\neq\phi\in\ran(T)$ and
$0\neq\psi\in\ran(T)^\perp$, for $\nume=1$ and $\ran(T)\subsetneq\hh$; $0\neq\phi\in\ran(\tu)$ and
$0\neq\psi\in\ran(\td)$, for $\nume\ge 2$ --- we have: $\langle\phi,\ptm(A)\psi\rangle=0$,
for all $A\in\trc$ ($\ker(\tka)=\ran(\tk)^\perp$).
\end{example}

Notice that, in the previous example, the case $\nume=1$ corresponds precisely to those stochastic maps
of the form~{(\ref{stochiso})} with the property of mapping pure states into pure states, and this class
of maps includes, in particular, the symmetry transformations. We will now describe the whole class of
stochastic maps having the mentioned property.

\begin{definition} \label{purepres}
We say that a map $\ptm\colon\trc\rightarrow\trc$ is \emph{pureness-preserving} if $\ptm(\pusta)\subset\pusta$.
\end{definition}

The structure of pureness-preserving linear maps is determined by the following refinement of
Theorem~1 of~\cite{AnielloSW} (in the proofs of these results essentially the same arguments are used):

\begin{theorem} \label{genewt}
Let $\lima$ be a densely defined linear operator in $\trc$ --- with $\dom(\lima)=\fr$, the linear
subspace of finite rank operators --- mapping the set $\pusta$ of pure states into itself.
Then, $\lima$ is closable, and its closure is a bounded operator in $\trc$ of one of the following types:
\begin{enumerate}

\item an isometric stochastic map of the form
\begin{equation} \label{pseucanfor}
\trc\ni A\mapsto T A \sei T^\ast , \ \mbox{or} \ \; \trc\ni A\mapsto S\tre A^\ast \mtre S^\ast ,
\end{equation}
where $T$, $S$ are, respectively, a linear isometry and an antilinear isometry in $\hh$,
uniquely determined up to a phase factor;

\item a collapse channel of the form
\begin{equation} \label{colcol}
\trc\ni A\mapsto\tr(A)\sei P \fin ,
\end{equation}
for some (fixed) pure state $P\in\pusta$.

\end{enumerate}
\end{theorem}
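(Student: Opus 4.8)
The plan is to wrest from the lone hypothesis $\lima(\pusta)\subset\pusta$ a rigid \emph{local} structure for $\lima$ on every two-dimensional subspace of $\hh$, and then to glue these local pieces together --- which is, as the statement announces, essentially the argument proving Theorem~1 of~\cite{AnielloSW}, now run with $\dom(\lima)=\fr$ only. As a preliminary observation, every finite-rank operator is a finite complex-linear combination of rank-one projectors (diagonalize its selfadjoint real and imaginary parts); since $\lima$ sends each rank-one projector to a selfadjoint operator of trace $1$, it follows at once that $\lima$ is trace-preserving and adjoint-preserving on $\fr$ --- in particular $\tr(\lima(A))=\tr(A)$ and $\lima(A^\ast)=\lima(A)^\ast$ for all $A\in\fr$.

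\emph{Local step.} Fix a two-dimensional subspace $\mathcal{V}\subset\hh$ and let $P_{\mathcal{V}}$ be the orthogonal projection onto it. For every unit vector $\chi\in\mathcal{V}$, writing $\chi^\perp$ for the unit vector of $\mathcal{V}$ orthogonal to $\chi$, one has $P_{\mathcal{V}}-|\chi\rangle\langle\chi|=|\chi^\perp\rangle\langle\chi^\perp|$, whence, applying $\lima$ and using $\lima(|\chi^\perp\rangle\langle\chi^\perp|)\ge 0$, $\,0\le\lima(|\chi\rangle\langle\chi|)\le\lima(P_{\mathcal{V}})$. Hence the range of $\lima(|\chi\rangle\langle\chi|)$ lies in $\mathcal{W}:=\ran(\lima(P_{\mathcal{V}}))$, which has dimension at most $2$ (being the range of a sum of two rank-one projectors). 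As the projectors $|\chi\rangle\langle\chi|$, $\chi\in\mathcal{V}$, span $\mathcal{B}(\mathcal{V})$, the map $\lima$ restricts to a trace-preserving linear map $\mathcal{B}(\mathcal{V})\to\mathcal{B}(\mathcal{W})$ that carries rank-one projectors of $\mathcal{V}$ to rank-one projectors of $\mathcal{W}$. Such maps are classified by a direct computation (parametrize the rank-one projectors of a two-dimensional space by their Bloch vectors and impose that the image be a rank-one projector for every direction): if $\dim\mathcal{W}=1$ the map must be of the collapse form $A\mapsto\tr(A)\,Q_{\mathcal{V}}$, with $Q_{\mathcal{V}}$ the rank-one projector of $\mathcal{B}(\mathcal{W})$; if $\dim\mathcal{W}=2$ it must be either $A\mapsto T_0\sei A\sei T_0^\ast$ with $T_0\colon\mathcal{V}\to\mathcal{W}$ a linear isometry onto $\mathcal{W}$, or $A\mapsto S_0\sei A^\ast\sei S_0^\ast$ with $S_0\colon\mathcal{V}\to\mathcal{W}$ an antilinear isometry onto $\mathcal{W}$. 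Call $\mathcal{V}$ \emph{regular} in the latter two cases --- where $\lima$ is injective on the rank-one projectors of $\mathcal{V}$ and maps orthogonal ones to orthogonal ones --- and \emph{degenerate} in the first. For $\dim(\hh)=2$ this already settles everything, since then $\fr=\trc=\mathcal{B}(\mathcal{V})$ with $\mathcal{V}=\hh$.

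\emph{Global dichotomy.} This is the heart of the matter and the step I expect to be the main obstacle: one must show that \emph{either every} two-dimensional subspace is degenerate \emph{or every} one is regular, excluding a mixed behaviour in which $\lima$ collapses some lines while remaining injective on others --- this is done, exactly as in the proof of Theorem~1 of~\cite{AnielloSW}, by comparing against each other the local forms carried by two-dimensional subspaces lying inside a common three-dimensional subspace. In the all-degenerate branch, two degenerate subspaces sharing a unit vector $\psi$ must send it to the same projector, $Q_{\mathcal{V}}=\lima(|\psi\rangle\langle\psi|)=Q_{\mathcal{V}'}$; since any two two-dimensional subspaces can be linked by a finite chain of two-dimensional subspaces, consecutive ones sharing a line, the projector $Q_{\mathcal{V}}\equiv P$ is independent of $\mathcal{V}$, so $\lima(|\chi\rangle\langle\chi|)=P$ for every unit $\chi$ and, by linearity on $\fr$, $\lima(A)=\tr(A)\,P$ on $\fr$. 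In the all-regular branch, $\lima$ is injective on the rays of $\hh$ and orthogonality-preserving; the family of local isometries $\{T_0\}$ (respectively $\{S_0\}$), which agree up to a phase on overlaps, then glues into a single linear isometry $T$ (respectively antilinear isometry $S$) of $\hh$ such that $\lima(|\psi\rangle\langle\psi|)=T\sei|\psi\rangle\langle\psi|\sei T^\ast$ (respectively $=S\sei|\psi\rangle\langle\psi|\sei S^\ast$) for every unit $\psi$, by the phase-coherence argument familiar from Wigner's theorem (cf.~\cite{Bargmann}); the linear-versus-antilinear alternative is the same throughout, by connectedness.

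\emph{From $\fr$ to its closure.} In all three cases $\lima$ agrees on $\fr$ with a globally defined bounded linear map on $\trc$ --- namely $A\mapsto T\sei A\sei T^\ast$, or $A\mapsto S\sei A^\ast\sei S^\ast$, or $A\mapsto\tr(A)\,P$ --- of operator norm $1$: indeed each of these maps coincides with $\lima$ on all rank-one projectors, hence on their linear span $\fr$. Since $\fr$ is $\trnor$-dense in $\trc$ and these maps are norm-continuous, $\lima$ is closable and its closure $\limacl$ is the corresponding bounded map on the whole of $\trc$. Finally, $A\mapsto T\sei A\sei T^\ast$ and $A\mapsto S\sei A^\ast\sei S^\ast$ are isometric, trace-preserving and positive, hence stochastic maps (Proposition~\ref{mainpro}), with $T$, $S$ unique up to a phase by Wigner's theorem, while $A\mapsto\tr(A)\,P$ is the collapse channel onto $P\in\pusta$; this is precisely the asserted dichotomy.
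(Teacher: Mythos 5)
Your route is genuinely different from the paper's. The paper's proof is much shorter and leans on an external classification: it first uses the spectral decomposition of selfadjoint finite-rank operators to show that $\lima$ is positive, trace-preserving and bounded on $\fr$, closes it up to a bounded positive trace-preserving map on $\trc$, observes that $\limacl$ is then a \emph{pure} positive map (it sends rank-one positive operators to rank-one positive operators), and invokes Davies' classification of pure positive maps (Theorem~3.1 of~\cite{Davies}) to obtain the three forms $TAT^\ast$, $SA^\ast S^\ast$, $\tr(AB)\sei P$, with trace-preservation then forcing $T,S$ to be isometries and $B=I$. You instead attempt to reprove the classification from scratch by a local-to-global Wigner-type argument. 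Your preliminary step, your local two-dimensional analysis (including the Bloch-sphere classification, which is correct though ``direct computation'' undersells the geometric argument needed to exclude non-orthogonal $M$), and your closure step are all sound.

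The genuine gap is the global dichotomy, which you yourself flag as the heart of the matter but then only assert, deferring to~\cite{AnielloSW}. Worse, the mechanism you sketch --- ``comparing against each other the local forms carried by two-dimensional subspaces lying inside a common three-dimensional subspace'' --- is not sufficient if read as pairwise comparison of overlapping planes. Concretely: take $\dim(\hh)=3$ with orthonormal basis $e_1,e_2,e_3$ and posit that $\mathrm{span}(e_1,e_2)$ is degenerate with collapse projector $Q=|q\rangle\langle q|$ while $\mathrm{span}(e_1,e_3)$ and $\mathrm{span}(e_2,e_3)$ are regular with $\lima(|e_3\rangle\langle e_3|)=R=|r\rangle\langle r|\perp Q$. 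One checks that every two-dimensional consistency condition (including those for the planes $\mathrm{span}(\cos\theta\,e_1+\sin\theta\,e_2,\,e_3)$) can be satisfied by choosing $\lima(|e_1\rangle\langle e_3|)=|q\rangle\langle r|$ and $\lima(|e_2\rangle\langle e_3|)=i\sei|q\rangle\langle r|$; the contradiction only appears when one evaluates $\lima$ on the projector onto a vector such as $\psi=(e_1+i\tre e_2+e_3)/\sqrt{3}$ with nonzero components along all three directions, where the resulting operator $\lambda Q+\mu R+z\tre|q\rangle\langle r|+\bar z\tre|r\rangle\langle q|$ fails the rank-one condition $\lambda\mu=|z|^2$. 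So the exclusion of mixed behaviour requires testing genuinely three-dimensional configurations, and this argument (as well as the phase-coherence gluing in the all-regular branch) needs to be written out for the proof to stand. Alternatively, you could keep only your preliminary step --- positivity, trace-preservation and boundedness on $\fr$, hence closability --- and then cite Davies' theorem as the paper does, which bypasses the dichotomy entirely.
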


\begin{proof}
By the spectral decomposition of a selfadjoint finite rank operator --- with
each spectral projection regarded as a sum of mutually orthogonal rank-one projections ---
and by the fact that $\pusta$ is mapped by $\lima$ into itself, one concludes that $\lima$ is positive,
trace-preserving and bounded on its dense domain $\fr\subset\trc$, the linear subspace of all finite rank
operators in $\hh$. Thus, $\lima$ is closable and its closure $\limacl$ is a bounded operator defined
on $\dom(\limacl)=\trc$. By the spectral decomposition of a selfadjoint trace class operator (converging
in the trace norm), one argues as above that $\limacl$ is a trace-preserving and positive too.

Therefore, $\limacl$ is, in particular, a positive linear map, mapping the set of all \emph{pure elements}
of the positive cone $\trcp$ of $\trc$ --- the rank-one positive operators~\cite{Davies-book} ---
into itself, namely, a \emph{pure} positive map. Then, by a classical result of Davies (Theorem~{3.1} of~\cite{Davies};
also see Theorem~{3.1}, in chapt.~{2} of~\cite{Davies-book}), $\limacl$ is of one of the following forms:
\begin{enumerate}[\rm (i)]

\item \label{unita}
$\limacl(A) = T A \sei T^\ast$, for some bounded linear operator $T$ in $\hh$,
uniquely determined up to a phase factor;

\item \label{antiunita}
$\limacl(A) = S\tre A^\ast \mtre S^\ast$, for some bounded antilinear operator $S$ in $\hh$,
uniquely determined up to a phase factor;

\item \label{collassa}
$\limacl(A) = \tr(A\tre B)\sei P$, for some positive operator $B\in\bop$ and some pure state $P\in\pusta$.

\end{enumerate}

In the case~{(\ref{unita})}, $T$ must be actually a linear isometry, since $\limacl$ is also trace-preserving,
so that
\begin{equation}
\tr(A)=\tr(\limacl(A))=\tr(A\sei T^\ast\tre T) \fin, \ \forall\cinque A\in\trc \
\Longrightarrow \ T^\ast\tre T=I \fin .
\end{equation}
Here, we have used the fact that $\trc^\ast$ is (isomorphic to) $\bop$. Analogously, in the case~{(\ref{antiunita})},
$\tr(A)=\tr(S\tre A^\ast \mtre S^\ast)=\tr(A^\ast \mtre S^\ast\mtre S)^\ast=\tr(A\sei S^\ast\mtre S)$, and $S$ must
be an antilinear isometry. Finally, in the case~{(\ref{collassa})}, we have: $\tr(A)=\tr(\limacl(A))=\tr(A \tre B)$,
for all $A\in\trc$; hence, $B=I$.
\end{proof}

\begin{corollary} \label{corpur}
Every pureness-preserving stochastic map $\ptm\colon\trc\rightarrow\trc$ is of the form~{(\ref{pseucanfor})}
or~{(\ref{colcol})}.
\end{corollary}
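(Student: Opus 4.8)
The plan is to deduce Corollary~\ref{corpur} directly from Theorem~\ref{genewt}, since a pureness-preserving stochastic map already carries all the hypotheses we need. First I would observe that if $\ptm\colon\trc\rightarrow\trc$ is stochastic (hence linear) and pureness-preserving, then its restriction $\ptm|_{\fr}$ to the dense subspace $\fr$ of finite rank operators is a densely defined linear operator in $\trc$ with domain $\fr$, and it maps $\pusta$ into $\pusta$ by the pureness-preserving assumption. Thus $\lima\defi\ptm|_{\fr}$ satisfies the hypotheses of Theorem~\ref{genewt}.

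Next I would invoke Theorem~\ref{genewt} to conclude that $\lima$ is closable and its closure $\limacl$ is a bounded operator in $\trc$ of one of the two listed types: an isometric stochastic map of the form~\eqref{pseucanfor}, or a collapse channel of the form~\eqref{colcol}. The remaining point is simply to identify $\limacl$ with $\ptm$ itself. Since $\ptm$ is stochastic, by Proposition~\ref{mainpro} it is bounded (indeed mildly contractive, $\|\ptm\noro=1$), so $\ptm$ is a bounded linear operator defined on all of $\trc$ that agrees with $\lima$ on the dense subspace $\fr$. A bounded operator agreeing with a closable densely defined operator on its domain coincides with that operator's closure; hence $\ptm=\limacl$. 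Therefore $\ptm$ is of the form~\eqref{pseucanfor} or~\eqref{colcol}, which is exactly the assertion.

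I do not anticipate a genuine obstacle here: the corollary is a routine unpacking of Theorem~\ref{genewt}, the only care needed being the (entirely standard) density/closure argument that lets one replace the a priori densely defined operator $\lima$ by the everywhere-defined bounded map $\ptm$. One could phrase the whole proof in a single sentence: a pureness-preserving stochastic map restricts to a densely defined operator on $\fr$ satisfying the hypotheses of Theorem~\ref{genewt}, and, being itself bounded on $\trc$, it must equal the closure produced by that theorem.
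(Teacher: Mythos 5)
Your proposal is correct and follows exactly the paper's own argument: restrict $\ptm$ to the dense subspace $\fr$ of finite rank operators, apply Theorem~\ref{genewt} to that restriction, and use the boundedness of the stochastic map $\ptm$ to identify it with the closure produced by the theorem. The paper compresses this into two sentences, but the content is identical.
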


\begin{proof}
A stochastic map $\ptm$ in $\trc$ is bounded, so that, if it is also pureness-preserving, it must coincide with
the closure of a linear operator, with domain $\fr$ (the finite-rank operators), that maps $\pusta$ into itself.
Then, by Theorem~{\ref{genewt}} the statement follows.
\end{proof}

\begin{remark}
Recalling Remark~{\ref{remfin}}, observe that, in the case where ($2\le$) $\dim(\hh)<\infty$, the two possible
forms~{(\ref{pseucanfor})} or~{(\ref{colcol})} of a pureness-preserving stochastic map in $\trc$ realize,
respectively, the two alternative forms of a stochastic map in Corollary~{\ref{fincase}}.
\end{remark}

\begin{remark}
Notice that a pureness-preserving stochastic map preserves the \emph{purity} $\tr(\rho^2)$ of a state $\rho$
if and only if it is of the form~{(\ref{pseucanfor})} --- $\tr(T A\sei T^\ast)=\tr(A\sei T^\ast\tre T)=\tr(A)$
($T$ linear isometry), $\tr(S\tre A^\ast \mtre S^\ast)=\tr(A\sei S^\ast\mtre S)=\tr(A)$ ($S$ antilinear isometry)
--- whereas a map of the form~{(\ref{colcol})} strictly increases the purity of every non-pure state.
Besides, a stochastic map that does not decrease the purity of states must map pure states into pure states.
\end{remark}

By the previous remark, we have:

\begin{corollary} \label{punode}
A stochastic map $\ptm\colon\trc\rightarrow\trc$ is purity-nondecreasing --- $\tr(\ptm(\rho)^2)\ge\tr(\rho^2)$,
$\forall\cinque\rho\in\sta$ --- if and only if it is pureness-preserving.
\end{corollary}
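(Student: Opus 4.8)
The plan is to reduce everything to the elementary fact that, for a density operator $\rho\in\sta$, one has $0<\tr(\rho^2)\le 1$, with $\tr(\rho^2)=1$ if and only if $\rho$ is pure. This follows immediately from the spectral decomposition $\rho=\sum_j\lambda_j\tre P_j$, where the $P_j$ are mutually orthogonal rank-one projections and $\lambda_j>0$ with $\sum_j\lambda_j=1$: then $\tr(\rho^2)=\sum_j\lambda_j^2\le\sum_j\lambda_j=1$, and equality holds precisely when a single $\lambda_j$ equals $1$, i.e.\ when $\rho$ is a rank-one projector. With this in hand, both implications are short.

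For the ``only if'' direction I would argue as follows. Assume $\ptm$ is purity-nondecreasing and pick any $P\in\pusta$. Since $\ptm$ is stochastic, $\ptm(P)\in\sta$, and by hypothesis $1=\tr(P^2)\le\tr(\ptm(P)^2)$; but $\tr(\ptm(P)^2)\le 1$ because $\ptm(P)$ is a state. Hence $\tr(\ptm(P)^2)=1$, which forces $\ptm(P)\in\pusta$. As $P$ was an arbitrary pure state, $\ptm$ is pureness-preserving.

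For the converse I would invoke the classification already obtained. If $\ptm\colon\trc\to\trc$ is a pureness-preserving stochastic map, then by Corollary~\ref{corpur} it has either the form~(\ref{pseucanfor}) or the form~(\ref{colcol}). In the first case $\tr(\ptm(\rho)^2)=\tr(\rho^2)$ for every $\rho\in\sta$ (using $T^\ast T=I$, resp.\ $S^\ast S=I$, exactly as in the remark preceding the statement), so $\ptm$ is trivially purity-nondecreasing; in the second case $\ptm(\rho)=\tr(\rho)\tre P=P$ is a fixed pure state, whence $\tr(\ptm(\rho)^2)=1\ge\tr(\rho^2)$. Either way the required inequality holds for all $\rho\in\sta$, and the equivalence is proved.

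As for difficulty, there is essentially no genuine obstacle here: once the characterization of pure states via $\tr(\rho^2)=1$ is recalled and Corollary~\ref{corpur} is available, the argument is a two-line observation in each direction. The only point deserving a little care is the direction of the inequality $\tr(\rho^2)\le 1$ together with the equality case; everything else is bookkeeping.
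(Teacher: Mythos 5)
Your proof is correct and follows essentially the same route as the paper: the paper's proof consists of the remark immediately preceding the corollary, which likewise derives the ``if'' direction from the classification in Corollary~\ref{corpur} (isometric conjugation preserves $\tr(\rho^2)$, a pure collapse channel sends every state to one of purity $1$) and the ``only if'' direction from the elementary fact that $\tr(\rho^2)\le 1$ with equality exactly on $\pusta$. Nothing to add.
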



\subsection{State-preserving bilinear maps}

We will now extend some of the previous definitions and results to a binary operation on $\trc$ or on $\trcsa$
(it is often useful to switch back and forth between the two spaces). We start with those cases where this
extension is straightforward.

\begin{definition}
We say that a map $\bifo\colon\trc\times\trc\rightarrow\trc$ is \emph{positive} if
\begin{equation}
A,B\in\trc \fin , \ A,B\ge 0 \ \ \Longrightarrow \ \ A\pro B \ge 0 \fin .
\end{equation}
We say that $\bifo$ \emph{preserves the set of states} $\sta$ --- in short, that it is
\emph{state-preserving} --- if, for every pair of states $\rho,\sigma\in\sta$, $\rho\pro \sigma\in\sta$.
We say that $\bifo$ is \emph{stochastic} if it is bilinear and state-preserving.
Analogous definitions we set for a map from $\trcsa\times\trcsa$ into $\trcsa$.
\end{definition}

\begin{definition} \label{adjpres}
We say that a map $\bifo\colon\trc\times\trc\rightarrow\trc$ is \emph{selfadjoint-preserving}
if $\trcsa\pro\tre\trcsa\subset\trcsa$; we say that it is \emph{adjoint-preserving} if
\begin{equation} \label{noninv}
A^\ast\pro B^\ast = (A\pro B)^\ast \fin , \ \ \ \forall\cinque A,B\in\trc \fin .
\end{equation}
\end{definition}

\begin{remark}
The definition of an adjoint-preserving binary operation on $\trc$ is already less obvious
than the previous ones. Indeed, e.g., $\trc$, endowed with the standard product of operators (composition) and with the
adjoining map $A\mapsto A^\ast$, becomes a Banach $\ast$-algebra (in particular, we have:
$\|A\tre B\trn\le\|A\nori\sei \|B\trn\le \|A\trn \sei \|B\trn$). On the other hand, the adjoining
map is an \emph{involution} --- in particular, $(A\tre B)^\ast=B^\ast A^\ast$ (compare with~{(\ref{noninv})}) ---
hence, the operator product is not adjoint-preserving.
\end{remark}

Clearly, an adjoint-preserving binary operation on $\trc$ is selfadjoint-preserving; moreover:

\begin{proposition}
Every selfadjoint-preserving bilinear map $\bifo\colon\trc\times\trc\rightarrow\trc$ is adjoint-preserving.
In particular, if a bilinear map on $\trc$ is positive, then it is (selfadjoint-preserving, hence) adjoint-preserving.
\end{proposition}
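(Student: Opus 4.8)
The plan is to mimic the linear case treated earlier (the argument right after equation~(\ref{orto})), replacing the single decomposition~(\ref{nota}) by simultaneous decompositions of both arguments. First I would fix $A,B\in\trc$ and write, exactly as in~(\ref{nota}), $A=\au+\ima\ad$ and $B=\bu+\ima\bd$ with $\au,\ad,\bu,\bd\in\trcsa$. By bilinearity of $\bifo$,
\begin{equation}
A\pro B = \au\pro\bu - \ad\pro\bd + \ima\sei(\au\pro\bd + \ad\pro\bu)\fin ,
\end{equation}
and since $\bifo$ is selfadjoint-preserving every one of the four operators $\au\pro\bu$, $\ad\pro\bd$, $\au\pro\bd$, $\ad\pro\bu$ lies in $\trcsa$. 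Hence the right-hand side already exhibits $A\pro B$ in the canonical form ``selfadjoint part plus $\ima$ times selfadjoint part''. Taking adjoints termwise (each selfadjoint summand is fixed by the adjoining map, and $(\ima X)^\ast=-\ima X$ for $X\in\trcsa$) gives
\begin{equation}
(A\pro B)^\ast = \au\pro\bu - \ad\pro\bd - \ima\sei(\au\pro\bd + \ad\pro\bu)\fin .
\end{equation}

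Next I would compute $A^\ast\pro B^\ast$ directly. Since $A^\ast=\au-\ima\ad$ and $B^\ast=\bu-\ima\bd$, bilinearity yields
\begin{equation}
A^\ast\pro B^\ast = \au\pro\bu - \ad\pro\bd - \ima\sei(\au\pro\bd + \ad\pro\bu)\fin ,
\end{equation}
which is literally the same expression as $(A\pro B)^\ast$ above. Comparing the two displays gives $A^\ast\pro B^\ast=(A\pro B)^\ast$ for all $A,B\in\trc$, i.e.\ $\bifo$ is adjoint-preserving (Definition~\ref{adjpres}). For the second assertion: if $\bifo$ is positive and bilinear, then for $A,B\ge0$ we have $A\pro B\ge0$, and a positive element of $\trc$ is in particular selfadjoint; writing an arbitrary selfadjoint $A$ as $\ap-\am$ with $\ap,\am\in\trcp$ (and similarly for $B$), bilinearity expresses $A\pro B$ as a real linear combination of the four positive — hence selfadjoint — operators $\ap\pro\bp$, $\ap\pro\bm$, $\am\pro\bp$, $\am\pro\bm$, so $A\pro B\in\trcsa$. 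Thus a positive bilinear map is selfadjoint-preserving, and by the first part it is adjoint-preserving.

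There is essentially no obstacle here; the only point requiring a word of care is that ``selfadjoint-preserving'' for a binary operation means $\trcsa\pro\trcsa\subset\trcsa$, so that in the computation above one must first check that each of the four cross terms $\au\pro\bu,\dots,\ad\pro\bu$ is genuinely selfadjoint before invoking the uniqueness of the decomposition~(\ref{nota}) — but that is immediate from Definition~\ref{adjpres} together with $\au,\ad,\bu,\bd\in\trcsa$. Everything else is bilinearity and the elementary algebra of the adjoining map.
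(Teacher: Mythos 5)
Your argument is correct and is essentially identical to the one in the paper: the same decomposition $A=\au+\ima\sei\ad$, $B=\bu+\ima\sei\bd$, the same observation that the four cross terms lie in $\trcsa$, and the same comparison of $A^\ast\pro B^\ast$ with $(A\pro B)^\ast$; the positive case is likewise handled by the decomposition into positive parts exactly as in the paper. No gaps.
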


\begin{proof}
If $\bifo$ is bilinear and selfadjoint-preserving, for $A,B\in\trc$, writing $A=\au+\ima\sei\ad$ and $B=\bu+\ima\sei\bd$,
with $\au,\ldots,\bd\in\trcsa$, and noting that $\au\pro\bu,\ldots,\ad\pro\bu\in\trcsa$, we have:
\begin{equation}
A^\ast\pro B^\ast = \au\pro\bu - \ad\pro\bd - \ima\left(\au\pro\bd+\ad\pro\bu\right) =
(A\pro B)^\ast \fin ;
\end{equation}
i.e., $\bifo$ is adjoint-preserving. If $\bifo$ is a positive bilinear map, for every $A=\ap-\am$ and
$B=\bp-\bm$ in $\trcsa$, with $\ap,\am,\bp,\bm\in\trcp$, we see that $A\pro B$ is selfadjoint;
i.e., the map $\bifo$ is selfadjoint-preserving, hence, adjoint-preserving.
\end{proof}

An adjoint-preserving bilinear map --- in particular, a positive bilinear map --- on $\trc$ can be restricted to a
bilinear map on the real Banach space $\trcsa$.
Conversely, given a bilinear map $\bifore\colon\trcsa\times\trcsa\rightarrow\trcsa$,
for $A,B\in\trc$ --- with $A=\au+\ima\ad$, $B=\bu+\ima\bd$, $\au,\ad,\bu,\bd\in\trcsa$ --- one can set
\begin{equation}
A\pro B \defi \au\prore\bu - \ad\prore\bd + \ima\left(\au\prore\bd+\ad\prore\bu\right),
\end{equation}
so obtaining an adjoint-preserving (complex-)bilinear map on $\trc$, which will be called the \emph{complexification}
of the bilinear map $\bifore$ on $\trcsa$.

At this point, it is not immediately clear what a trace-preserving binary operation should be. For the moment,
just notice that the following definition is compatible with the requirement that the binary operation
be \emph{bilinear}.
\begin{definition}
We say that a map $\bifo\colon\trc\times\trc\rightarrow\trc$ is \emph{trace-preserving} if
\begin{equation}
\tr(A\pro B)= \tr(A) \sei \tr(B) \fin , \ \ \ \forall\cinque A,B\in\trc \fin .
\end{equation}
An analogous definition we set for a map from $\trcsa\times\trcsa$ into $\trcsa$.
\end{definition}

For a \emph{bilinear} map $\bifo\colon\trc\times\trc\rightarrow\trc$ and trace class operators $A,B\in\trc$ ---
using notations~{(\ref{nota})--(\ref{notd})} for $A$, and analogous notations
$\bu,\bd$, $\bupc,\ldots,\bdmc$, $\bbup,\ldots,\bbdm$ relative to $B$ --- we have:
\begin{eqnarray}
A\pro B \equa
\au\pro\bu - \ad\pro\bd + \ima\left(\au\pro\bd+\ad\pro\bu\right)
\equb \aupc\tre\bupc\tre(\baup\pro\bbup) + \cdots + \ima \sei \admc\tre\bumc\tre(\badm\pro\bbum)
\end{eqnarray}
and
\begin{equation}
\tr(A\pro B) =  \aupc\tre\bupc\tre\tr(\baup\pro\bbup) + \cdots
+ \ima \sei \admc\tre\bumc\tre\tr(\badm\pro\bbum) \fin .
\end{equation}
Hence:

\begin{proposition} \label{stopotra}
A stochastic map $\bifo\colon\trc\times\trc\rightarrow\trc$ is positive and trace-preserving.
An analogous result holds for a stochastic map from $\trcsa\times\trcsa$ into $\trcsa$.
\end{proposition}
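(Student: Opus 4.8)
The plan is to mimic, in the bilinear setting, the rescaling argument underlying the equivalence ``state-preserving $\Leftrightarrow$ positive and trace-preserving'' for linear maps (Proposition~\ref{mainpro}): every nonzero positive trace class operator is a strictly positive multiple of a state, and $\bifo$ is bilinear, so the hypothesis $\rho\pro\sigma\in\sta$ for $\rho,\sigma\in\sta$ can be transported to arbitrary positive arguments.

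First I would prove positivity. By bilinearity $A\pro 0=0=0\pro B$, so it suffices to treat $A,B\in\trcp$ with $A,B\neq 0$; then $\tr(A),\tr(B)>0$ and $\tr(A)^{-1}A$, $\tr(B)^{-1}B$ are states, whence, $\bifo$ being state-preserving, $\tr(A)^{-1}\tr(B)^{-1}\,(A\pro B)=(\tr(A)^{-1}A)\pro(\tr(B)^{-1}B)\in\sta\subset\trcp$, and multiplying back by the positive scalar $\tr(A)\tr(B)$ gives $A\pro B\in\trcp$. The very same argument applies to a stochastic map on $\trcsa$, since $\sta\subset\trcsa$.

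Next I would deduce trace-preservation on $\trc$ from the expansion of $\tr(A\pro B)$ displayed immediately above the proposition, which is built on the canonical decomposition~(\ref{nota})--(\ref{notd}) of $A$ and of $B$. Each of its sixteen summands has the form $c\,\alpha\beta\,\tr(\xi\pro\eta)$ with $c\in\{1,-1,\ima,-\ima\}$, $\alpha,\beta\ge 0$ and $\xi,\eta\in\sta\cup\{0\}$; the key observation is that if $\alpha\beta\neq 0$ then $\xi$ and $\eta$ are genuine density operators, so that $\xi\pro\eta\in\sta$ and $\tr(\xi\pro\eta)=1$, while if $\alpha\beta=0$ the summand is zero anyway. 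Substituting these values, the double sum factors as a product of two single sums, and by the identity $\aupc-\aumc+\ima\adpc-\ima\admc=\tr(A)$ (together with its analogue for $B$) already recorded in the computation preceding Proposition~\ref{tecpro}, one obtains $\tr(A\pro B)=\tr(A)\tr(B)$. For the $\trcsa$ version I would instead decompose $A=\ap-\am$ and $B=\bp-\bm$ with $\ap,\am,\bp,\bm\in\trcp$, expand $A\pro B$ into four terms, normalise each nonzero positive part to a density operator so that state-preservation yields $\tr(\ap\pro\bp)=\tr(\ap)\tr(\bp)$ and likewise for the other three products, and factor to get $\tr(A\pro B)=(\tr(\ap)-\tr(\am))(\tr(\bp)-\tr(\bm))=\tr(A)\tr(B)$.

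The argument is routine; the only point that needs a little care is the bookkeeping in the trace computation, namely discarding the summands whose scalar coefficient vanishes \emph{before} appealing to the state-preserving property, since in that degenerate case $\xi\pro\eta$ is $(\text{state})\,\pro\,0=0$ and is not itself a state.
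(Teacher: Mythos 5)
Your proposal is correct and follows essentially the same route as the paper, which derives the proposition directly from the displayed expansion of $A\pro B$ and $\tr(A\pro B)$ in terms of the decomposition~(\ref{nota})--(\ref{notd}) (the paper simply writes ``Hence:'' and leaves the rescaling and bookkeeping implicit). Your explicit handling of the degenerate summands with vanishing coefficients is a sensible point of care, but it does not constitute a different argument.
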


Let us denote by $\babil$, $\babilsa$ the Banach space of bonded bilinear maps on $\trc$ and on $\trcsa$,
respectively, endowed with the norm
\begin{equation}
\|\bifo\norb\defi\sup\{\|A\pro B\trn\colon \|A\trn,\|B\trn\le 1\} \fin .
\end{equation}
Recall that there are two natural Banach space isomorphisms between $\babil$ (or $\babilsa$) and
the Banach space
\begin{equation}
\bltrc\equiv\bbtrc \ \ \ (\bltrcsa\equiv\bbtrcsa)
\end{equation}
that are given by
\begin{equation} \label{stabaiso}
\bifo\mapsto (A\mapsto A\pro\argo) \ \ \mbox{and}
\ \ \bifo\mapsto (A\mapsto \argo\pro A) \fin .
\end{equation}
These isomorphisms justify our use, in the following, of the symbol $\normb$ to denote, as well,
the standard operator norm in $\bltrc$ and in $\bltrcsa$.

\begin{definition}
We say that a map $\bifo\colon\trc\times\trc\rightarrow\trc$ is \emph{contractive}
on a set $\suse\subset\trc\times\trc$ if
\begin{equation}
\left\|A \pro B\right \trn \le \| A \trn \tre \| B \trn , \ \ \ \forall\cinque (A,B)\in\suse \fin .
\end{equation}
We say that $\bifo$ is \emph{mildly contractive} on $\suse$ if it is contractive on $\suse$ and
\begin{equation}
\sup\{\|A\pro B\trn /(\|A\trn \|B\trn)\colon (A,B)\in\suse,\ A,B\neq 0\} =1 \fin .
\end{equation}
We say \emph{tout court} that $\bifo$ is \emph{contractive} (alternatively, \emph{mildly contractive}) if it is contractive
(respectively, mildly contractive) on $\suse=\trc\times\trc$. Analogous definitions we set for a map from $\trcsa\times\trcsa$
to $\trcsa$.
\end{definition}

We have the following analogue of Proposition~{\ref{mainpro}}:

\begin{proposition} \label{mainprobis}
Let $\bifone$ be a bilinear map on $\trc$ {\rm (alternatively, on $\trcsa$)}. Then, the following facts are equivalent:
\begin{description}

\item[{\tt (P1)}]
$\bifone$ is stochastic.

\item[{\tt (P1$^\prime$)}]
The transpose $\bifonet$ of $\bifone$ --- $A\pronet B\defi B\prone A$ --- is stochastic.

\item[{\tt (P2)}]
$\bifone$ is the complexification {\rm (respectively, the restriction to a binary operation on $\trcsa$)} of a
stochastic map $\bifore\colon\trcsa\times\trcsa\rightarrow\trcsa$ {\rm (respectively, of a stochastic map
$\bifo\colon\trc\times\trc\rightarrow\trc$)}.

\item[{\tt (P3)}]
$\bifone$ is positive and trace-preserving.

\item[{\tt (P4)}]
For every $A\in\sta$, the linear map $A\prone\argo$ is trace-preserving,
bounded and satisfies the condition $\|A\prone\argo\noro=1$.

\item[{\tt (P5)}]
$\bifone$ is contractive on $\sta\times\trcsa$ and trace-preserving.

\item[{\tt (P6)}]
$\bifone$ is mildly contractive on $\sta\times\trcsa$ and trace-preserving.

\item[{\tt (P7)}]
$\bifone$ is contractive on $\trcsa\times\trcsa$ and trace-preserving.

\item[{\tt (P8)}]
$\bifone$ is mildly contractive on $\trcsa\times\trcsa$ and trace-preserving.

\end{description}
\end{proposition}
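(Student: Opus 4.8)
The plan is to reduce the whole statement to the linear theory of Proposition~\ref{mainpro}, Proposition~\ref{tecpro} and Proposition~\ref{stopotra} by the elementary device of freezing the first argument. The key observation is that, for fixed $A\in\sta$, the partial map $\ptm_A\defi A\prone\argo$ is a \emph{linear} map in $\trc$, and, by bilinearity, $\bifone$ is state-preserving if and only if $\ptm_A(\sta)\subset\sta$ for every $A\in\sta$; that is, $\bifone$ is stochastic precisely when each $\ptm_A$, $A\in\sta$, is a stochastic linear map. Consequently \texttt{(P1)}$\Leftrightarrow$\texttt{(P4)} is immediate from Proposition~\ref{mainpro}, because \texttt{(P4)} asserts exactly that every such $\ptm_A$ is trace-preserving, bounded and mildly contractive in the sense $\|\ptm_A\noro=1$.

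Next I would clear the easy equivalences. \texttt{(P1)}$\Leftrightarrow$\texttt{(P1$^{\prime}$)} is trivial, since the condition ``$\rho\prone\sigma\in\sta$ for all $\rho,\sigma\in\sta$'' is symmetric under interchange of the two arguments. For \texttt{(P1)}$\Leftrightarrow$\texttt{(P3)}: one implication is Proposition~\ref{stopotra}, and the converse holds because, if $\bifone$ is positive and trace-preserving, then $\rho\prone\sigma\ge 0$ and $\tr(\rho\prone\sigma)=\tr(\rho)\tr(\sigma)=1$ for $\rho,\sigma\in\sta$. \texttt{(P3)}$\Leftrightarrow$\texttt{(P2)} is obtained exactly as in the linear case: a positive bilinear map on $\trc$ is adjoint-preserving, hence restricts to a (necessarily stochastic, by the $\trcsa$-version of \texttt{(P1)}$\Leftrightarrow$\texttt{(P3)}) bilinear map on $\trcsa$ of which it is the complexification; conversely, the complexification of a stochastic bilinear map on $\trcsa$ agrees with it on $\sta\subset\trcsa$ and is therefore stochastic. (The ``alternatively, on $\trcsa$'' version of the Proposition runs along identical lines, or follows through \texttt{(P2)}.)

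It remains to insert the metric conditions \texttt{(P5)}--\texttt{(P8)}. Assuming \texttt{(P3)}, a routine computation gives contractivity even on $\trcsa\times\trcsa$: writing $A=\ap-\am$, $B=\bp-\bm$ with $\ap\am=0=\bp\bm$, positivity and the trace identity give $\|\ap\prone\bp\trn=\tr(\ap\prone\bp)=\|\ap\trn\,\|\bp\trn$ and likewise for the three other products, so the triangle inequality together with $\|A\trn=\|\ap\trn+\|\am\trn$, $\|B\trn=\|\bp\trn+\|\bm\trn$ yields $\|A\prone B\trn\le\|A\trn\,\|B\trn$; moreover $\|\rho\prone\sigma\trn=\tr(\rho\prone\sigma)=1=\|\rho\trn\|\sigma\trn$ for states $\rho,\sigma$, so the relevant suprema are attained (value $1$) already on state--state pairs. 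Hence \texttt{(P3)} implies all of \texttt{(P5)}--\texttt{(P8)}, while each of \texttt{(P6)}, \texttt{(P7)}, \texttt{(P8)} trivially implies \texttt{(P5)} (mild contractivity implies contractivity, and contractivity on $\trcsa\times\trcsa$ restricts to contractivity on $\sta\times\trcsa$).

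The cycle is then closed by the single genuinely nontrivial step, \texttt{(P5)}$\Rightarrow$\texttt{(P3)}, which I expect to be the main obstacle: under \texttt{(P5)} one knows only that $\bifone$ is trace-preserving and contractive on $\sta\times\trcsa$, so a priori it need not even be selfadjoint-preserving and Proposition~\ref{tecpro} cannot be quoted verbatim for $\ptm_A$. To get around this, fix $A\in\sta$; for $0\le B\in\trc$ one has $\|\ptm_A(B)\trn\le\|B\trn=\tr(B)=\tr(\ptm_A(B))\le\|\ptm_A(B)\trn$, forcing $\|\ptm_A(B)\trn=\tr(\ptm_A(B))$ with $\tr(\ptm_A(B))\ge 0$; a short argument with the singular value decomposition then shows that a trace class operator $C$ with $\tr(C)=\|C\trn$ is necessarily positive (if $C=\sum_k s_k|f_k\rangle\langle g_k|$ with $s_k>0$, then $\sum_k s_k=\tr(C)=\sum_k s_k\langle g_k,f_k\rangle$ forces $\langle g_k,f_k\rangle=1$, hence $f_k=g_k$ and $C\ge 0$). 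Thus $A\prone B\ge 0$ for all $A\in\sta$, $B\in\trcp$, and by homogeneity $\bifone$ is positive; being trace-preserving, it satisfies \texttt{(P3)}. This establishes the full circle of equivalences, and the $\trcsa$-version follows by the same arguments, or by complexification via \texttt{(P2)}.
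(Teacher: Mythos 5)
Your proof is correct, and its overall architecture --- freezing the first argument and invoking the linear results (Propositions~\ref{tecpro}, \ref{mainpro}, \ref{stopotra}), plus the positive/negative-part computation for {\tt (P3)}$\Rightarrow${\tt (P7)} and the saturation on state--state pairs for mildness --- coincides with the paper's. The one place where you genuinely diverge is {\tt (P5)}$\Rightarrow${\tt (P3)}. The paper handles it by observing that, for $A\in\sta$, the partial map $\trcsa\ni B\mapsto A\prone B\in\trcsa$ is trace-preserving and contractive and then quoting the second assertion of Proposition~\ref{tecpro}; in the $\trc$ version of the statement this tacitly presupposes that $A\prone B$ is selfadjoint for selfadjoint $B$, which is not among the hypotheses of {\tt (P5)}. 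Your route avoids this: from $\|A\prone B\trn\le\|B\trn=\tr(B)=\tr(A\prone B)\le\|A\prone B\trn$ for $B\ge 0$ you extract the equality $\tr(A\prone B)=\|A\prone B\trn$, and your singular-value-decomposition lemma (a trace class $C$ with $\tr(C)=\|C\trn$ is positive --- correct, since $\sum_k s_k(1-\langle g_k,f_k\rangle)=0$ with $|\langle g_k,f_k\rangle|\le 1$ forces $f_k=g_k$) then yields positivity directly, with no a priori selfadjointness needed. This makes your argument slightly more robust than the paper's at that step, at the modest cost of the extra lemma; the rest of the two proofs buys the same things in essentially the same way, and your implication graph ({\tt (P1)}$\Leftrightarrow${\tt (P4)}, {\tt (P1)}$\Leftrightarrow${\tt (P1$'$)}$\Leftrightarrow${\tt (P2)}$\Leftrightarrow${\tt (P3)}, {\tt (P3)}$\Rightarrow${\tt (P5)}--{\tt (P8)}$\Rightarrow${\tt (P5)}$\Rightarrow${\tt (P3)}) does close.
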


\begin{proof}
The equivalence between~{\tt (P1)} and~{\tt (P1$^\prime$)}, and between~{\tt (P1)} and~{\tt (P2)}, is clear.
By Proposition~{\ref{stopotra}}, {\tt (P1)} and~{\tt (P3)} are equivalent too.

Moreover, {\tt (P3)} implies~{\tt (P4)}. Indeed, if~{\tt (P3)} holds, then, for every $A\in\sta$,
the linear map $A\prone\argo$ is positive and trace-preserving. Hence, by the first assertion of
Proposition~{\ref{tecpro}}, $A\prone\argo$ is bounded and we have that $\|A\prone\argo\noro=1$. Conversely,
let~{\tt (P4)} hold. By linearity in its first argument, the map $\bifone$ is trace-preserving. Moreover, by the second
assertion of Proposition~{\ref{tecpro}}, the linear map $A\prone\argo$ is positive, for every $A\in\sta$;
hence, by linearity in its first argument, $\bifone$ is positive and {\tt (P3)} holds true.

Notice that {\tt (P4)} implies~{\tt (P6)}. Indeed, if {\tt (P4)} holds,  then --- since {\tt (P4)} $\Rightarrow$ {\tt (P3)}
--- $\bifone$ is trace-preserving and positive. Moreover,
\begin{equation}
\sup\{\|A\prone B\trn /\|B\trn\colon A\in\sta \fin , \sei B\in\trcsa\fin ,\sei B\neq 0\} =
\sup\{\|A\prone\argo\noro\colon A\in\sta\} = 1 \fin ;
\end{equation}
i.e., $\bifone$ is mildly contractive on $\sta\times\trcsa$. Here, in the case where $\bifone$ is regarded as a bilinear map on $\trc$,
we are using the fact that, for every $A\in\sta$, $A\prone\argo$ is positive and trace-preserving, so that
\begin{eqnarray}
\|A\prone\argo\noro = 1 = \sup\{\|A\prone B\trn\colon B\in\sta\}  =
\sup\{\|A\prone B\trn /\|B\trn\colon B\in\trcsa\fin ,\sei B\neq 0\} \fin .
\end{eqnarray}

Obviously, {\tt (P6)} implies~{\tt (P5)}. Also, {\tt (P5)} implies~{\tt (P3)}. In fact, if the map $\bifone$ is
trace-preserving and $\|A\prone B\trn\le \|B\trn$, for all $A\in\sta$ and $B\in\trcsa$,
then, for every $A\in\sta$, the linear map $\trcsa\ni B\mapsto A\prone B\in\trcsa$ is trace-preserving and contractive; hence positive, by
the second assertion of Proposition~{\ref{tecpro}}. Thus, $\bifone$ is trace-preserving and, by linearity in its first argument, positive.

{\tt (P3)} implies~{\tt (P7)}. Indeed, if the bilinear map $\bifone$ is positive and trace-preserving --- denoting by
$\bifore$ its restriction to a bilinear map on $\trcsa$, in the case where $\bifone$ is regarded as a binary operation
on $\trc$, or, otherwise, the map $\bifone$ itself --- we have:
\begin{eqnarray}
\left\|A\prore B \right\trn
\equa
\left\|(\ap - \am) \prore (\bp - \bm)\right\trn
\nonumber \\
& \le & \spa
\left\|\ap \prore \bp \right\trn + \left\|\ap \prore \bm \right\trn +
\left\| \am \prore \bp \right\trn + \left\| \am \prore \bm\right\trn
\equb
\tr((\ap + \am) \prore (\bp + \bm))
\equb
\tr(\ap + \am) \sei \tr(\bp + \bm)
\equb
\| A \trn \tre \| B \trn \fin , \ \ \ \forall\cinque A,B\in\trcsa \fin .
\end{eqnarray}
Thus, $\bifone$ is (trace-preserving and) contractive on $\trcsa\times\trcsa$.

Clearly, {\tt (P8)} $\Rightarrow$~{\tt (P7)}. Let us prove that {\tt (P7)} $\Rightarrow$~{\tt (P8)},
as well. If {\tt (P7)} holds, then
\begin{equation} \label{inesa}
\sup\{\|A\prone B\trn /(\|A\trn \|B\trn)\colon (A,B)\in\trcsa\times\trcsa,\ A,B\neq 0\} \le 1 \fin .
\end{equation}
Moreover, {\tt (P7)} $\Rightarrow$~{\tt (P5)} $\Rightarrow$~{\tt (P3)}; hence, $\|A\prone B\trn /(\|A\trn \|B\trn)=1$,
for all $A,B\in\sta$, i.e., inequality~{(\ref{inesa})} is actually saturated, and $\bifone$ is mildly contractive
on $\trcsa\times\trcsa$.

In conclusion, {\tt (P1$^\prime$)} $\Leftrightarrow$ {\tt (P1)} $\Leftrightarrow$ {\tt (P2)},
{\tt (P1)} $\Leftrightarrow$ {\tt (P3)} $\Rightarrow$ {\tt (P4)} $\Rightarrow$ {\tt (P6)}
$\Rightarrow$ {\tt (P5)} $\Rightarrow$ {\tt (P3)} $\Rightarrow$ {\tt (P7)} $\Leftrightarrow$ {\tt (P8)}
and {\tt (P7)} $\Rightarrow$ {\tt (P5)}; thus, the proof is complete.
\end{proof}

By the Principle of Uniform Boundedness~\cite{Conway}, it turns out that the \emph{separate} continuity of a
bilinear map $\bifo$ on $\trc$ or $\trcsa$ is equivalent to its \emph{joint} continuity --- wrt the norm
$\|(A, B)\btrn\defi\max\{\|A\trn, \|B\trn\}$ in $\trc\times\trc$ --- or, equivalently, to its boundedness.
Besides, the boundedness  of $\bifo$ is implied by the property of preserving the set of all states; precisely:

\begin{proposition} \label{boustoc}
Every stochastic map $\bifo\colon\trc\times\trc\rightarrow\trc$ is bounded, contractive on the sets
$\trcsa\times\trc$ and $\trc\times\trcsa$, and its norm satisfies $\|\bifo\norb\le 2$; whereas,
its restriction $\bifore$ to a bilinear map on $\trcsa$ is such that $\|\bifore\norb = 1$.
\end{proposition}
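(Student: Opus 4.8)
The plan is to read everything off Proposition~\ref{mainprobis}, using three of its consequences: that $\bifo$ is positive and trace-preserving; that for every $A\in\sta$ the linear map $A\pro\argo$ on $\trc$ satisfies $\|A\pro\argo\noro=1$, hence $\|A\pro B\trn\le\|B\trn$ for all $B\in\trc$ (property~{\tt (P4)}); and that $\bifo$ is mildly contractive on $\trcsa\times\trcsa$ (property~{\tt (P8)}) --- which, since $\bifore$ is precisely the restriction of $\bifo$ to $\trcsa$, is exactly the assertion $\|\bifore\norb=1$. So all that remains is to establish contractivity on $\trcsa\times\trc$ and on $\trc\times\trcsa$, and the bound $\|\bifo\norb\le2$.

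First I would prove contractivity on $\trcsa\times\trc$. Fix $A\in\trcsa$ and $B\in\trc$, and write $A=\ap-\am$ as the difference of its positive and negative parts, so that $\ap,\am\in\trcp$, $\ap\tre\am=0$ and $\|A\trn=\tr(\ap)+\tr(\am)$. If $\ap\neq 0$ then $\tr(\ap)^{-1}\ap\in\sta$, whence, by bilinearity and property~{\tt (P4)}, $\|\ap\pro B\trn=\tr(\ap)\,\|(\tr(\ap)^{-1}\ap)\pro B\trn\le\tr(\ap)\,\|B\trn$; the same holds trivially when $\ap=0$, and likewise $\|\am\pro B\trn\le\tr(\am)\,\|B\trn$. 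Thus, by bilinearity in the first argument,
\begin{equation}
\|A\pro B\trn\le\|\ap\pro B\trn+\|\am\pro B\trn\le\big(\tr(\ap)+\tr(\am)\big)\|B\trn=\|A\trn\,\|B\trn \fin .
\end{equation}
Contractivity on $\trc\times\trcsa$ then follows at once by applying this conclusion to the transposed operation $(A,B)\mapsto B\pro A$, which is stochastic by Proposition~\ref{mainprobis}~{\tt (P1$^\prime$)}.

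For the norm bound I would split only the first argument: given $A,B\in\trc$, write $A=\au+\ima\sei\ad$ with $\au,\ad\in\trcsa$ and $\|\au\trn,\|\ad\trn\le\|A\trn$ (since $2\au=A+A^\ast$ and $2\ima\ad=A-A^\ast$). Then, by bilinearity and the contractivity on $\trcsa\times\trc$ just obtained,
\begin{equation}
\|A\pro B\trn\le\|\au\pro B\trn+\|\ad\pro B\trn\le\big(\|\au\trn+\|\ad\trn\big)\|B\trn\le 2\,\|A\trn\,\|B\trn \fin ,
\end{equation}
so $\|\bifo\norb\le2$; in particular $\bifo$ is bounded. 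Finally, $\|\bifore\norb=1$ has already been noted; alternatively, restricting the contractivity on $\trcsa\times\trc$ to $\trcsa\times\trcsa$ gives $\|\bifore\norb\le1$, while $\|A\pro B\trn=\tr(A\pro B)=\tr(A)\sei\tr(B)=1$ for $A,B\in\sta$ (as $A\pro B\in\sta$) forces equality. The one point that needs care is the sharp constant $1$ on the mixed sets: decomposing instead the \emph{second} argument into its selfadjoint real and imaginary parts would yield there only $\|A\pro B\trn\le2\|A\trn\|B\trn$, too weak to give $\|\bifo\norb\le2$ afterwards; the remedy is to split the \emph{selfadjoint} argument into positive and negative parts and to use the genuine contractivity of $A\pro\argo$ for $A\in\sta$ together with the norm-additivity $\|A\trn=\tr(\ap)+\tr(\am)$ of that decomposition.
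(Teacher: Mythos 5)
Your proof is correct and follows essentially the same route as the paper's: decompose the (first) argument into at most four positive parts, normalize them to states so that property {\tt (P4)} of Proposition~\ref{mainprobis} applies, use the norm-additivity of the Jordan decomposition to get the sharp constant $1$ on $\trcsa\times\trc$, pass to the transpose for $\trc\times\trcsa$, bound $\|\au\trn+\|\ad\trn$ by $2\|A\trn$ for the global estimate, and invoke {\tt (P8)} for $\|\bifore\norb=1$. The only difference is cosmetic: you stage the decomposition in two steps and prove the mixed-set contractivity first, whereas the paper runs the full four-term decomposition in a single chain of inequalities and then specializes to $\ad=0$.
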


\begin{proof}
Exploiting decomposition~{(\ref{nota})--(\ref{notd})} for $A\in\trc$, and, next, the implication
{\tt (P1)} $\Rightarrow$ {\tt (P4)} in Proposition~{\ref{mainprobis}}, we have:
\begin{eqnarray}
\left\|A \pro B\right \trn \spa & \le & \spa
\left\|\au \pro B\right \trn + \left\|\ad \pro B\right \trn
\nonumber \\
& \le & \spa
\aupc\|\baup \pro B \trn + \cdots + \admc\|\badm \pro B \trn
\nonumber \\
& \le & \spa
(\aupc + \aumc + \adpc + \admc) \sei \| B \trn
\nonumber \\
& = & \spa
(\|\au \trn + \|\ad \trn) \sei \| B \trn \le 2 \sei \| A \trn \tre \| B \trn
\fin , \ \ \ \forall\cinque A,B\in\trc \fin .
\end{eqnarray}
In particular, setting $\ad=0$, we conclude that $\bifo$ is contractive on $\trcsa\times\trc$.
Replacing $\bifo$ with its transpose, we see that $\left\|A \pro B\right \trn\le 1$, for
$(A,B)\in\trc\times\trcsa$. Moreover, by the fact that~{\tt (P1)} implies~{\tt (P8)} in Proposition~{\ref{mainprobis}},
$\bifore$ is mildly contractive.
\end{proof}


\section{Stochastic products and algebras}
\label{products}


In this section, we will introduce the notion of stochastic product and study its main consequences. In particular, it
turns out that an associative stochastic product is always associated with a `stochastic algebra'. We will also consider
some special classes of points in the domain of a stochastic product and a natural property of group-covariance.
As in the previous section, we will suppose that $\dim(\hh)\ge 2$.

\subsection{Definition and basic facts}

We now define a stochastic product as a binary operation on $\sta$ preserving the natural convex structure of this set.
The sets $\sta$ and $\sta\times\sta$, endowed with the distances $\trd(\rho,\sigma) \defi \|\rho-\sigma\trn$ and
$\trdd((\rho,\sigma),(\tau,\upsilon)) \defi \max\{\|\rho-\sigma\trn,\|\tau-\upsilon\trn\}$, respectively,
become metric spaces. It would be natural to require that a stochastic product be continuous wrt the associated
topologies, but, as it will be clear soon, this property is automatically satisfied.

\begin{remark} \label{topologies}
The weak and the strong operator topologies on $\sta$, and the topology induced on $\sta$ by any
Schatten $\pp$-norm $\normp\fin$, $1\le\pp\le\infty$, all coincide; we will call this topology the
\emph{standard topology} on $\sta$. Indeed, applying Theorem~{2.20} of~{\cite{Simon}} with $\pp=1$,
we conclude that, if a sequence in $\sta$ converges wrt the weak operator topology, then it converges
wrt the trace norm topology as well; hence, \emph{a fortiori}, wrt any other of the previously mentioned
topologies. Thus, the topology induced on $\sta\times\sta$ by the distance $\trdd$ coincides with the
product topology associated with the standard topology on $\sta$.
\end{remark}

\begin{definition}
A \emph{stochastic product} on $\sta$ is a map $\sprod\colon\sta\times\sta\rightarrow\sta$ that is \emph{convex-linear}
in both its arguments, i.e.,
\begin{equation}
(\alpha\tre\rho + (1-\alpha)\sigma)\spr(\epsilon\tau + (1-\epsilon)\upsilon)
= \alpha\tre\epsilon\sei\rho\spr\tau + \alpha(1-\epsilon)\sei\rho\spr\upsilon +
(1-\alpha)\epsilon\sei\sigma\spr\tau + (1-\alpha)(1-\epsilon)\sei\sigma\spr\upsilon \fin ,
\end{equation}
for all $\rho,\sigma,\tau,\upsilon\in\sta$ and $\alpha,\epsilon\in [0,1]$.
\end{definition}

A fundamental connection of the previous definition with the state-preserving bilinear maps on $\trc$ is provided by
the following fact:

\begin{proposition} \label{estende}
Every stochastic product can be obtained as a suitable restriction of a unique state-preserving bilinear map; namely,
for every stochastic product $\sprod\colon\sta\times\sta\rightarrow\sta$, there is a uniquely determined
stochastic map $\bifo\colon\trc\times\trc\rightarrow\trc$ --- the so-called \emph{canonical extension} of $\sprod$
--- such that
\begin{equation}
\rho\spr\sigma = \rho\pro\sigma \fin , \ \ \
\forall\cinque\rho,\sigma\in\sta \fin .
\end{equation}
\end{proposition}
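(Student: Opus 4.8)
The statement is that a convex-linear map $\sprod\colon\sta\times\sta\to\sta$ extends uniquely to a stochastic bilinear map $\bifo\colon\trc\times\trc\to\trc$. The natural strategy is a two-stage linear extension, mirroring the linear-map case already treated in the excerpt: first extend in each argument separately from $\sta$ to $\trcsa$ by a barycentric/difference construction, then complexify. First I would fix $\sigma\in\sta$ and look at the map $\rho\mapsto\rho\spr\sigma$ on $\sta$; convex-linearity in the first argument means this is an affine map on the convex set $\sta$, so it extends to a (real-)linear map on the real span of $\sta$. Since every $A\in\trcsa$ can be written as $A=\ap-\am$ with $\ap,\am\in\trcp$ (and $\trcp$ is generated by nonzero elements $t\,P$, $t>0$, $P\in\sta$), one sets, following the notation $A=\aupc\,\baup-\aumc\,\baum$ of equations~(\ref{notb})--(\ref{notd}),
\begin{equation}
A\pro\sigma\defi \aupc\,(\baup\spr\sigma) - \aumc\,(\baum\spr\sigma)\fin .
\end{equation}
The key point to check here is \emph{well-definedness}: if $A=t_1P_1-t_2P_2=s_1Q_1-s_2Q_2$ are two such representations with $t_i,s_j\ge 0$ and $P_i,Q_j\in\sta$, then $t_1P_1+s_2Q_2=s_1Q_1+t_2P_2$ have equal (nonzero, say) traces, so dividing by that common trace gives an identity between convex combinations of states; applying convex-linearity of $\spr$ in the first slot and scaling back shows $t_1(P_1\spr\sigma)-t_2(P_2\spr\sigma)=s_1(Q_1\spr\sigma)-s_2(Q_2\spr\sigma)$. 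This is the main technical obstacle — the rest is bookkeeping.

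Once $A\pro\sigma$ is well-defined for $A\in\trcsa$, $\sigma\in\sta$, linearity in the first argument on $\trcsa$ follows from the well-definedness argument applied to sums and positive scalar multiples (handling the zero-trace case separately, where $A=\aupc\baup-\aupc\baum$ forces a consistent definition). Next I would symmetrically extend in the second argument: for $A\in\trcsa$ fixed, the map $\sigma\mapsto A\pro\sigma$ is convex-linear on $\sta$ (it is a real-linear combination of the convex-linear maps $\baup\spr\cdot$, $\baum\spr\cdot$), so the same difference construction yields $A\pro B$ for $B\in\trcsa$, with bilinearity on $\trcsa\times\trcsa$. That this bilinear map $\bifore$ is state-preserving is immediate since it agrees with $\spr$ on $\sta\times\sta$; hence by Proposition~\ref{mainprobis} ((P1)$\Rightarrow$(P3)) it is positive and trace-preserving on $\trcsa\times\trcsa$, i.e.\ stochastic in the $\trcsa$ sense.

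Finally I would pass to $\trc$ by complexification: using the decomposition $A=\au+\ima\,\ad$, $B=\bu+\ima\,\bd$ from~(\ref{nota}), set
\begin{equation}
A\pro B\defi \au\prore\bu - \ad\prore\bd + \ima\,(\au\prore\bd + \ad\prore\bu)\fin ,
\end{equation}
which, as already noted in the excerpt for bilinear maps, is the unique adjoint-preserving bilinear extension of $\bifore$ to $\trc\times\trc$; it is state-preserving because it restricts to $\spr$ on $\sta\times\sta$, hence stochastic. For \emph{uniqueness} of the canonical extension, suppose $\bifo'$ is another stochastic map with $\rho\pro'\sigma=\rho\spr\sigma$ on $\sta\times\sta$. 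A stochastic bilinear map is in particular adjoint-preserving (positive $\Rightarrow$ adjoint-preserving, as shown earlier), so $\bifo'$ is the complexification of its restriction to $\trcsa$, and that restriction is bilinear and agrees with $\bifore$ on $\sta\times\sta$; since $\sta$ spans $\trcsa$ and bilinear maps agreeing on a spanning set of each factor coincide, $\bifo'|_{\trcsa\times\trcsa}=\bifore$, whence $\bifo'=\bifo$. Continuity is then automatic by Proposition~\ref{boustoc}. I would remark that the whole argument is the bilinear analogue of the chain of implications in Proposition~\ref{mainpro}, and the only genuinely new computation is the well-definedness step in the first paragraph.
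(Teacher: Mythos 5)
Your proposal is correct and follows essentially the same route as the paper: extend the product one argument at a time from $\sta$ to $\trcsa$ via the Jordan decomposition and convex-linearity, then complexify, with uniqueness forced by the fact that $\sta$ spans $\trc$. The only organizational difference is that the paper first extends to the positive cone by normalization ($A\pro\rho=\tr(A)\,(\ba\spr\rho)$) and then handles differences via the identity $(A+B)_{+}+\am+\bm=(A+B)_{-}+\ap+\bp$, whereas you prove a single well-definedness lemma for arbitrary representations $A=t_1P_1-t_2P_2$; these amount to the same computation.
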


\begin{proof}
Let us define a bilinear map $\bifo\colon\trc\times\trc\rightarrow\trc$ by extending
the stochastic product $\sprod$, which will be now regarded as a map from $\sta\times\sta$
to $\trc$, convex-linear in both its arguments. We will extend this map from $\sta\times\sta$ to $\trcp\times\sta$,
then to $\trcsa\times\sta$ and $\trc\times\sta$, and finally to $\trc\times\trc$. We denote by $\erres$ the set
of nonzero real numbers and by $\erreps$ the set of strictly positive real numbers.

Let us first fix some arbitrary state $\rho\in\sta$ and set $0\pro\rho\equiv 0$. For every $A\in\trcp$,
$A\neq 0$, we define
\begin{equation} \label{indefi}
A\pro\rho\defi\tra\sei (\ba \spr \rho) \fin ,
\ \ \ \mbox{where} \ \ba\equiv\tramo A\in\sta \fin .
\end{equation}
Then, for every $r\in\erreps$ and $0\neq A\in\trcp$,
\begin{equation}
(rA)\pro\rho =\trra\sei \bile(\trramo\tre rA)\spr\rho\biri =
r\sei\tra\sei (\ba\spr\rho) = r\tre (A\pro\rho) \fin ,
\end{equation}
and, by the convex-linearity of $\sprod$ in its first argument, for $A,B\in\trcp$,
$A\neq 0\neq B$,
\begin{eqnarray}
(A+B)\pro\rho
\equa
\trab\sei \bile(\trabmo(A+B))\spr\rho\biri
\equb
\trab\sei \bile(\trabmo\tre\tra\sei \ba
+ \trabmo\tre\trb\sei \bb ) \spr\rho\biri
\equb
\tra\sei(\ba\spr\rho) + \trb\sei(\bb\spr\rho)
\equb
A\pro\rho + B\pro\rho \fin .
\end{eqnarray}

Let us further extend the map $\sprod$ to $\trcsa\times\sta$. To this aim, as usual it is convenient to write
\begin{equation}
\trcsa\ni A = \ap - \am , \ \ \ \ap\defi\frac{1}{2}\sei(|A|+A) \fin , \ \ \ \am\defi\frac{1}{2}\sei(|A|-A) \fin ,
\end{equation}
where $\ap,\am\in\trcp$. We then set
\begin{equation}
A\pro\rho\defi \ap\mtre\pro\rho - \am\mtre\pro\rho \fin , \ \ \ \forall\cinque A\in\trcsa \fin .
\end{equation}
By the previous definition and by~{(\ref{indefi})}, for every $r\in\erres$ and $A\in\trcsa$, we have:
\begin{eqnarray}
(r A)\pro\rho \equa
\bigg(\frac{r}{|r|}\sei(|r|\tre\ap-|r|\tre\am)\bigg) \mtre \pro\rho
\equb
\frac{r}{|r|}\tre\bile(|r|\tre\ap)\pro\rho - (|r|\tre\am)\pro\rho\biri
\equb
r\tre(\ap\mtre\pro\rho - \am\mtre\pro\rho) = r\tre(A\pro\rho) \fin ;
\end{eqnarray}
moreover, $(0\tre A)\pro\rho=0\pro\rho\equiv 0=0\sei(A\pro\rho)$. Next, notice that, given $A,B\in\trcsa$, we have:
\begin{equation}
(A+B\parp - (A+B\parm = A + B = (\ap-\am)+(\bp-\bm) \fin .
\end{equation}
Therefore, $(A+B\parp + \am + \bm = (A+B\parm + \ap + \bp$, and from this relation we get
\begin{equation}
(A+B\parp\mtre\pro\rho + \am\mtre\pro\rho + \bm\mtre\pro\rho
= (A+B\parm\mtre\pro\rho + \ap\mtre\pro\rho + \bp\mtre\pro\rho \fin .
\end{equation}
By this equation, we obtain:
\begin{eqnarray}
(A+B)\pro\rho \equdef
(A+B\parp\mtre\pro\rho - (A+B\parm\mtre\pro\rho
\equb
\ap\mtre\pro\rho - \am\mtre\pro\rho + \bp\mtre\pro\rho - \bm\mtre\pro\rho
\equb
A\pro\rho+B\pro\rho \fin .
\end{eqnarray}
We have therefore constructed a (real-)linear map $\argo\pro\rho$ in $\trcsa$.

We then extend the map $\sprod$ to $\trc\times\sta$, by setting, for every $A\in\trc$,
\begin{equation}
A\pro\rho\defi\au\pro\rho+\ima\sei\ad\pro\rho \fin , \ \ \
\au\defi \frac{1}{2}\sei(A+A^\ast)\in\trcsa \fin , \
\ad\defi -\frac{\ima}{2}\sei(A-A^\ast)\in\trcsa \fin .
\end{equation}
Then, for $z=x+\ima\tre y$, $x,y\in\erre$, and $A\in\trc$, we have:
\begin{equation}
(z A)\pro\rho = x (\au\pro\rho) - y (\ad\pro\rho) + \ima\tre( x (\ad\pro\rho) + \tre y (\au\pro\rho))
= z (\au\pro\rho+ \ima\tre\ad\pro\rho) = z (A\pro\rho) \fin .
\end{equation}
Moreover,
$(A+B\parre=\au+\bu$ and $(A+B\parim=\ad +\bd$; hence, $(A+B)\pro\rho=A\pro\rho + B\pro\rho$,
and we get a (complex-)linear map $\argo\pro\rho$ in $\trc$.

Observe that the mapping $\bifo\colon\trc\times\sta\rightarrow\trc$ constructed so far is such that, for every $A\in\trc$,
the map $A\pro\argo\colon\sta\rightarrow\trc$ is convex-linear.

Thus, arguing as above for extending the second argument of $\argo\pro\argo\colon\trc\times\sta\rightarrow\trc$,
we finally obtain a bilinear map from $\trc\times\trc$ to $\trc$, which is of course stochastic.
Notice that, if a stochastic product $\sprod$ is the restriction of a bilinear map $\bifo$, then the previous procedure
--- without the arguments \emph{proving} linearity --- can be regarded as a reconstruction of $\bifo$ \emph{by} linearity;
hence, this bilinear map is uniquely determined.
\end{proof}

\begin{remark} \label{samrea}
By the same kind of reasoning that one uses in the proof of the previous result, we may have called \emph{stochastic}
a convex-linear map in $\sta$, and then argued that such a map is the restriction to $\sta$ of a state-preserving linear
map in $\trc$.
\end{remark}

\begin{corollary} \label{autocont}
Every stochastic product on $\sta$ is jointly continuous.
\end{corollary}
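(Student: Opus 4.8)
The plan is to obtain joint continuity as an immediate consequence of Propositions~\ref{estende} and~\ref{boustoc}, via a Lipschitz estimate on $\sta\times\sta$. Let $\sprod\colon\sta\times\sta\rightarrow\sta$ be a stochastic product and let $\bifo\colon\trc\times\trc\rightarrow\trc$ be its canonical extension, furnished by Proposition~\ref{estende}, so that $\rho\spr\sigma=\rho\pro\sigma$ for all $\rho,\sigma\in\sta$. By Proposition~\ref{boustoc}, the bilinear map $\bifo$ is bounded; more precisely, it is contractive on $\trcsa\times\trc$ and on $\trc\times\trcsa$, which is all that will be needed.

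Next I would estimate, for $\rho,\sigma,\tau,\upsilon\in\sta$, the distance $\|\rho\spr\sigma-\tau\spr\upsilon\trn$. Adding and subtracting $\tau\pro\sigma$ and using bilinearity, one has $\rho\pro\sigma-\tau\pro\upsilon=(\rho-\tau)\pro\sigma+\tau\pro(\sigma-\upsilon)$. Since $\rho-\tau\in\trcsa$ and $\|\sigma\trn=1$, contractivity on $\trcsa\times\trc$ gives $\|(\rho-\tau)\pro\sigma\trn\le\|\rho-\tau\trn$; since $\sigma-\upsilon\in\trcsa$ and $\|\tau\trn=1$, contractivity on $\trc\times\trcsa$ gives $\|\tau\pro(\sigma-\upsilon)\trn\le\|\sigma-\upsilon\trn$. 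Hence
\begin{equation}
\|\rho\spr\sigma-\tau\spr\upsilon\trn\le\|\rho-\tau\trn+\|\sigma-\upsilon\trn\le 2\,\trdd\bile(\rho,\sigma),(\tau,\upsilon)\biri ,
\end{equation}
so that $\sprod$ is $2$-Lipschitz, hence jointly continuous, wrt the metric $\trdd$ on $\sta\times\sta$. By Remark~\ref{topologies}, the topology induced by $\trdd$ is precisely the product of the standard topologies on the two factors, which is the sense in which joint continuity is to be understood; this also vindicates the remark made before the definition of stochastic product that continuity is automatic.

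There is essentially no obstacle here: the substantive work has already been done in Proposition~\ref{estende} (existence of the canonical stochastic bilinear extension) and in Proposition~\ref{boustoc} (boundedness/contractivity of stochastic bilinear maps), and the corollary only uses the elementary fact that a bounded bilinear map is jointly continuous, specialised to the subsets $\sta\times\sta\subset\trc\times\trc$, on which the relevant trace norms equal $1$. The single point requiring a moment's care is the identification — supplied by Remark~\ref{topologies} — of the $\trdd$-topology on $\sta\times\sta$ with the product of the standard topologies on $\sta$, so that the conclusion matches the intended notion of joint continuity; alternatively one could avoid even the extension of Proposition~\ref{estende} by invoking separate convex-linearity together with the Principle of Uniform Boundedness, but the route through the canonical extension is cleaner.
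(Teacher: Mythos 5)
Your proof is correct and follows essentially the same route as the paper, which likewise deduces joint continuity from the canonical extension of Proposition~\ref{estende} together with the boundedness established in Proposition~\ref{boustoc}. The only difference is that you make the implicit quantitative content explicit, spelling out the $2$-Lipschitz estimate via the decomposition $(\rho-\tau)\pro\sigma+\tau\pro(\sigma-\upsilon)$ and the contractivity on $\trcsa\times\trc$ and $\trc\times\trcsa$, which is a harmless (and slightly more informative) elaboration.
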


\begin{proof}
By Proposition~{\ref{estende}}, every stochastic product is the restriction of a state-preserving bilinear map, which,
by Proposition~{\ref{boustoc}}, is bounded; thus, (jointly) continuous.
\end{proof}

Let us denote by $\pot\subset\ltrc$ the convex set of all trace-preserving, positive linear maps in $\trc$.
Proposition~{\ref{estende}} also implies the following:

\begin{corollary} \label{lrmaps}
For every stochastic product $\sprod\colon\sta\times\sta\rightarrow\sta$, there exist two bounded linear maps
\begin{equation}
\lmap\colon\trc\rightarrow\ltrc \ \ \mbox{and} \ \ \rmap\colon\trc\rightarrow\ltrc \fin ,
\end{equation}
--- respectively, the \emph{left partial map} and the \emph{right partial map} associated with $\sprod$ --- such that
$\lmap(\sta)\subset\pot\supset\rmap(\sta)$, uniquely determined by
\begin{equation}
\bile\lmap(\rho)\biri(\sigma) = \rho\spr\sigma = \bile\rmap(\sigma)\biri(\rho) \fin ,
\ \ \ \forall\cinque\rho,\sigma\in\sta \fin .
\end{equation}
Moreover, the sets $\lmap(\trcsa)$, $\rmap(\trcsa)$ consist of adjoint-preserving bounded maps, and
\begin{equation} \label{estima}
\|\lmap\norb =\|\rmap\norb =\|\bifo\norb \le 2 \fin ,\ \ \
\|\lmapre\norb = \|\rmapre\norb =\|\bifore\norb\le 1 \fin ,
\end{equation}
where $\bifo$ is the canonical extension of the stochastic product $\sprod$, $\bifore$ is the restriction
of $\bifo$  to a bilinear map on $\trcsa$ and the bounded linear map $\lmapre\in\bltrcsa$ is defined
by $\lmapre\colon\trcsa\ni A\mapsto\lmapasa=A\prore\argo\in\ltrcsa$.
\end{corollary}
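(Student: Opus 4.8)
The plan is to read off the maps $\lmap$ and $\rmap$ directly from the canonical extension $\bifo$ of $\sprod$ (which exists by Proposition~\ref{estende}) and then to invoke the Banach-space isomorphisms~\eqref{stabaiso} together with the norm bounds already established in Propositions~\ref{mainprobis} and~\ref{boustoc}. Concretely, I would \emph{define} $\lmap(A)\defi A\pro\argo$ and $\rmap(A)\defi\argo\pro A$ for $A\in\trc$; bilinearity of $\bifo$ immediately makes each of these a linear map in $\trc$, and bilinearity again makes the assignments $A\mapsto\lmap(A)$, $A\mapsto\rmap(A)$ linear from $\trc$ into $\ltrc$. The defining identity $\bile\lmap(\rho)\biri(\sigma)=\rho\pro\sigma=\rho\spr\sigma=\bile\rmap(\sigma)\biri(\rho)$ then holds by construction, and uniqueness is clear since the values of $\lmap$, $\rmap$ on $\sta$ are forced by $\sprod$ and $\sta$ spans $\trc$.

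Next I would check the structural claims. For $\rho\in\sta$: since $\bifo$ is stochastic, $\lmap(\rho)=\rho\pro\argo$ is stochastic, hence positive and trace-preserving by Proposition~\ref{mainpro}, i.e.\ $\lmap(\rho)\in\pot$; the same argument via the transpose $\bifonet$ (stochastic by {\tt (P1$'$)} of Proposition~\ref{mainprobis}) gives $\rmap(\rho)\in\pot$. For $A\in\trcsa$, the operator $\lmap(A)=A\pro\argo$ is adjoint-preserving because $\bifo$ is adjoint-preserving (being positive, hence selfadjoint-preserving): from $A^\ast=A$ one gets $\lmap(A)(B^\ast)=A\pro B^\ast=(A^\ast\pro B^\ast)^\ast=(A\pro B)^\ast=\lmap(A)(B)^\ast$; likewise for $\rmap(A)$. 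Finally the restriction $\lmapre$ is the obvious corestriction of $\lmap$ to $\trcsa$, with values in $\ltrcsa$ by selfadjoint-preservation.

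The norm identities follow from the two isomorphisms in~\eqref{stabaiso}: the map $\bifo\mapsto(A\mapsto A\pro\argo)$ is a Banach-space isometry $\babil\to\bltrc$, so $\|\lmap\norb=\|\bifo\norb$, and the transpose version gives $\|\rmap\norb=\|\bifonet\norb=\|\bifo\norb$; then $\|\bifo\norb\le 2$ is exactly Proposition~\ref{boustoc}. Restricting everything to $\trcsa$ and using the same isomorphisms on $\babilsa$ yields $\|\lmapre\norb=\|\rmapre\norb=\|\bifore\norb$, and $\|\bifore\norb=1$ is again Proposition~\ref{boustoc}.

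There is no serious obstacle here: the statement is essentially a repackaging of Proposition~\ref{estende} through the partial-map isomorphisms, and every quantitative input has already been proved. The only point requiring a word of care is that $\lmap$, $\rmap$ take values in \emph{bounded} operators on $\trc$ before one knows they are bounded as maps \emph{into} $\ltrc$ --- but this is immediate, since for each fixed $A$ the operator $A\pro\argo$ is bounded by Proposition~\ref{boustoc} (boundedness of $\bifo$), and then the linear-in-$A$ assignment inherits its norm from $\|\bifo\norb$ via the isometry. So the write-up will be short.
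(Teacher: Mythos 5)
Your proposal is correct and follows essentially the same route as the paper's own (much terser) proof: both read $\lmap$ and $\rmap$ off the canonical extension $\bifo$ from Proposition~\ref{estende}, and obtain the norm identities from the isometric isomorphisms~(\ref{stabaiso}) together with the estimates of Proposition~\ref{boustoc}. The extra details you supply (membership in $\pot$, adjoint-preservation, uniqueness via the span of $\sta$) are exactly the observations the paper leaves implicit.
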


\begin{proof}
We only need to observe that the map $\lmap$ is bounded --- $\|\lmap(A)\argo\noro\le\|A\trn\sei\|\bifo\norb$
--- and to justify relations~{(\ref{estima})}. To this aim, let us recall the norm
estimates in Proposition~{\ref{boustoc}}, and the Banach space isomorphisms~{(\ref{stabaiso})} between
$\babil$ (or $\babilsa$) and $\bltrc$ ($\bltrcsa$).
\end{proof}

We will say that a stochastic product $\sprod$ is \emph{left-constant} (alternatively, \emph{right-constant})
on $\convset\subset\sta$ if
\begin{equation}
\rho\spr\tau = \sigma\spr\tau \fin ,\ \ \mbox{(respectively, $\tau\spr\rho = \tau\spr\sigma$),}
\ \ \ \forall\cinque\rho,\sigma\in\convset \fin , \ \forall\quattro\tau\in\sta \fin ;
\end{equation}
i.e., if the left partial map $\lmap$ (the right partial map $\rmap$) is constant on $\convset$.
Clearly, $\sprod$ is left-constant (right-constant) \emph{tout court} when $\convset=\sta$.

\begin{remark}
Taking into account the fact that an element of $\trc$ can be expressed as a linear combination of
(at most) four density operators, it is clear that $\sprod$ is left-constant (right-constant) if and only if
the associated linear map $\lmap\in\bltrc$ ($\rmap$) is of the form $\lmap(A)=\tr(A)\sei\lmap(\rho_0)\equiv\tr(A)\sei\tau_0$
($\rmap(A)=\tr(A)\sei\rmap(\rho_0)$), where $\rho_0\in\sta$ is arbitrary; i.e., a collapse channel.
\end{remark}

A (nonempty) subset of $\sta$ of the form
\begin{equation}
\{\rho\in\sta\colon \ \lmap(\rho)=\ptm\} \ \ (\{\rho\in\sta\colon \ \rmap(\rho)=\ptm\}) \fin ,
\ \ \ \mbox{for some $\ptm\in\pot$} \fin ,
\end{equation}
is called a \emph{left level set} (a \emph{right level set}) for the stochastic product $\sprod$. These level sets are
\emph{convex} subsets of $\sta$. Clearly, two states $\rho$, $\sigma$ belong to the same left (right) level set
if and only if $\rho-\sigma\in\ker(\lmap)$ ($\rho-\sigma\in\ker(\rmap)$); thus: $\ker(\lmap)=\{0\}\Rightarrow\lmap$
is injective on $\sta$. Actually:

\begin{proposition} \label{lmapinj}
The  linear map $\lmap\colon\trc\rightarrow\ltrc$ $(\rmap)$ is injective on $\sta$ if and only if $\ker(\lmap)=\{0\}$
$(\ker(\rmap)=\{0\})$; i.e., if and only if it is injective.
\end{proposition}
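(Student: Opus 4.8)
The plan is to prove the nontrivial implication by contraposition: assuming $\ker(\lmap)\neq\{0\}$, I will produce two \emph{distinct} density operators having the same image under $\lmap$. The reverse implication needs no work, since a linear map with trivial kernel is injective, hence in particular injective on the subset $\sta$.

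So suppose $0\neq A\in\ker(\lmap)$, i.e.\ $A\pro B=0$ for every $B\in\trc$, where $\bifo$ is the canonical extension of $\sprod$. First I would reduce to a selfadjoint kernel element: since $\bifo$ is adjoint-preserving, $A^\ast\pro B=(A\pro B^\ast)^\ast=0$ for all $B$, so $\ker(\lmap)$ is stable under the adjoining map and is therefore spanned by its selfadjoint elements; pick a nonzero $A=A^\ast$ in it. Now take the Jordan decomposition $A=\ap-\am$ into positive operators of mutually orthogonal support. The crucial preliminary point is that neither summand can vanish: if, say, $\am=0$, then $A=\ap$ is a nonzero positive operator, so $\rho:=\tr(A)^{-1}A$ would be a state with $\lmap(\rho)=0$; but $\lmap(\rho)\in\pot$ by Corollary~{\ref{lrmaps}}, hence $\lmap(\rho)$ is trace-preserving and in particular nonzero --- a contradiction. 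Thus $\ap,\am\neq 0$.

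From $\lmap(A)=0$ one gets $\lmap(\ap)=\lmap(\am)$ in $\ltrc$; evaluating both sides on an arbitrary $\sigma\in\trc$ and using that $\bifo$ is trace-preserving (Proposition~{\ref{stopotra}}) gives $\tr(\ap)\tr(\sigma)=\tr(\am)\tr(\sigma)$, so $t:=\tr(\ap)=\tr(\am)>0$. Putting $\rho_1:=t^{-1}\ap$ and $\rho_2:=t^{-1}\am$, we then have $\rho_1,\rho_2\in\sta$ with $\lmap(\rho_1)=\lmap(\rho_2)$, while $\rho_1\neq\rho_2$ because $\ap=\am$ would force $\ap^2=\ap\am=0$, hence $\ap=0$. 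This contradicts injectivity of $\lmap$ on $\sta$, finishing the argument for $\lmap$. For $\rmap$, I would simply observe that the transpose of $\sprod$ --- the stochastic product $(\rho,\sigma)\mapsto\sigma\spr\rho$ --- has left partial map equal to $\rmap$, and invoke the case already treated.

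I do not anticipate a genuine difficulty: the only point needing care is that one cannot normalize an indefinite $A$ directly, so the work lies entirely in the reduction to a selfadjoint element, its Jordan decomposition, and the observation --- forced by trace-preservation of the partial maps --- that both positive parts are nonzero and of equal trace, which is precisely what turns them into two bona fide, distinct states.
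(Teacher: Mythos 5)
Your proposal is correct and follows essentially the same route as the paper: pass to a nonzero selfadjoint element of $\ker(\lmap)$, take its Jordan decomposition, rule out definiteness of that element by using that $\lmap$ sends states into $\pot$ (so their images are trace-preserving, hence nonzero), deduce equality of the traces of the two positive parts, and normalize them into two distinct, mutually orthogonal states with the same image. The only cosmetic difference is that the paper phrases this via its standing decomposition~{(\ref{nota})--(\ref{notd})} rather than redoing the normalization explicitly.
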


\begin{proof}
We need to prove the `only if part'. For $A\in\trc$, using notations~{(\ref{nota})--(\ref{notd})},
we have that
\begin{equation}
0=\lmap(A) = \lmap(\au) + \ima \sei \lmap(\ad) \ \Longrightarrow \ \lmap(\au)=0=\lmap(\ad) \fin ,
\end{equation}
because $\lmap(\au),\lmap(\ad)\in\ltrc$ are adjoint-preserving maps. Moreover:
\begin{equation}
0=\lmap(\au) = \aupc\tre\lmap(\baup) - \aumc\tre\lmap(\baum) \ \Longrightarrow \
\mbox{($\aupc=\aumc$ and)} \ \lmap(\baup)=\lmap(\baum) \fin ,
\end{equation}
because $\lmap(\baup),\lmap(\baum)\in\pot\cup\{0\}$ (since $\baup,\baum\in\sta\cup\{0\}$). Similarly,
$\lmap(\ad)=0$ implies that $\lmap(\badp)=\lmap(\badm)$. If $A\neq 0$, then $\au\neq 0$ and/or $\ad\neq 0$.
Therefore, if $A\in\ker(\lmap)$ ($\Rightarrow\lmap(\au)=0=\lmap(\ad)$), with $A\neq 0$, supposing without
loss of generality that $\au\neq 0$, we have:
\begin{equation}
\lmap(\baup)=\lmap(\baum), \ \mbox{with $\baup\neq 0\neq\baum$ and $\baup\neq\baum$} \fin .
\end{equation}
Indeed, $\au=\aupc\tre\baup - \aumc\tre\baum\neq 0$ cannot be positive or negative (otherwise, $\lmap(\au)$
would be a strictly positive or negative multiple of a trace-preserving, positive linear map) --- thus,
$\baup,\baum\neq 0$ are both density operators --- and we recall that these density operators are mutually
orthogonal, i.e., $\baup\tre\baum=0$. In conclusion, we find: $\baup,\baum\in\sta$, $\baup\neq\baum$ and
$\lmap(\baup)=\lmap(\baum)$. Therefore, if $\ker(\lmap)\neq\{0\}$, then $\lmap$ is not injective on $\sta$.
\end{proof}

\begin{example}
It is clear that there is a one-to-one correspondence between the set $\pot$ and the left-constant
stochastic products on $\sta$, i.e.,
\begin{equation}
\ptm\in\pot \ \ \longleftrightarrow \ \
\sprod\colon\sta\times\sta\ni(\rho,\sigma)\mapsto\ptm(\sigma)\in\sta \fin ,
\end{equation}
and an analogous correspondence holds, of course, for the right-constant products.
However, these products are `trivial', in the sense that they actually `involve'
just one of their arguments. Moreover, they are, in general, non-associative.
The class of stochastic products that are \emph{simultaneously left-constant and
right-constant} consists of all \emph{collapse products}; i.e., the stochastic products
of the form
\begin{equation} \label{colla}
\rho\spr\sigma \defi \omega \fin ,
\ \ \ \forall\cinque\rho,\sigma\in\sta \fin ,
\end{equation}
where $\omega$ is a fixed state in $\sta$.
\end{example}

\begin{example}
There is a special class of left-constant stochastic products. Consider the stochastic
product $\sprod$ defined by
\begin{equation} \label{lefco}
\rho\spr\sigma \defi U \sigma \tre U^\ast \fin ,
\end{equation}
for all $\rho,\sigma\in\sta$, where $U$ is a unitary operator in $\hh$. We call such a product
a \emph{left-constant unitary product} in $\sta$. More generally, given a subset $\convset$ of $\sta$,
we will say that $\sprod$ is \emph{left-constant unitary on} $\convset$ if~{(\ref{lefco})} holds for all
$\rho\in\convset$ and $\sigma\in\sta$, and for some unitary operator $U$. An analogous definition can be given
for a \emph{left-constant antiunitary product}, associated with an antiunitary operator in $\hh$,
and, of course, for the right-constant counterparts of these products.
\end{example}

\begin{remark}
The (constant on $\sta$) left partial map $\lmap$ associated with a left-constant \emph{anti}unitary product
on $\sta$ is of the form
\begin{equation}
\bile\lmap(A)\biri(B) =   \tr(A)\sei U B^\ast\tre  U^\ast \fin ,
\ \ \ \forall\cinque A,B\in\trc \fin ,
\end{equation}
where $U$ is, of course, an antiunitary operator in $\hh$.
\end{remark}

The two previous examples motivate the following:

\begin{definition}
We say that a stochastic product $\sprod\colon\sta\times\sta\rightarrow\sta$ is
\emph{genuinely binary} if it is neither left-constant nor right-constant.
\end{definition}

\begin{example} \label{mainex}
Is is easy to construct a genuinely binary (but, in general, non-associative) stochastic product on $\sta$.
Indeed, for every pair of maps $\ptm,\ptmb\in\pot$, and for every $\alpha\in (0,1)$, let us set
\begin{equation}
\rho\spr\sigma\defi \alpha \sei\ptm(\rho) + (1-\alpha)\tre \ptmb(\sigma) \fin .
\end{equation}
This is a genuinely binary stochastic product if and only if neither $\ptm$ nor $\ptmb$ are constant on $\sta$; i.e.,
of the form $\trc\ni A\mapsto \tr(A)\sei\rho_0$, where $\rho_0$ is some \emph{fixed} state (collapse channel).
The left and right partial maps associated with this product are given by
\begin{equation}
\lmap(A) = \alpha\sei\tr\argo\sei\ptm(A) + (1-\alpha)\tre\tr(A)\sei\ptmb\argo \fin ,
\ \ \ \rmap(A) = \alpha\sei\tr(A)\sei\ptm\argo + (1-\alpha)\tre\tr\argo\sei\ptmb(A) \fin .
\end{equation}
Denoting by $\bifo$ the stochastic bilinear map extending this product, we have that
\begin{eqnarray}
\|A\pro B\trn
\lequa
\alpha\sei |\tr(B)|\sei \|\ptm(A)\trn + (1-\alpha)\sei |\tr(A)|\sei \|\ptmb(B)\trn
\lequb
\bile\alpha\sei \|\ptm\noro + (1-\alpha)\sei \|\ptmb\noro\biri\tre \|A\trn\sei\|B\trn
= \|A\trn\sei\|B\trn \fin ,
\end{eqnarray}
where we have used the inequality $|\tr(A)|\le \tr(|A|)\ifed\|A\trn$ --- see Theorem~{3.1} of~{\cite{Simon}}
--- and the fact that $\ptm,\ptmb\in\pot$ (hence: $\|\ptm\noro=\|\ptmb\noro=1$). Thus, in this case,
the inequality $\|\bifo\norb\le 2$ (Proposition~{\ref{boustoc}}) is not saturated.
\end{example}

\begin{example} \label{povmex}
Let $E\in\bopp$ be an \emph{effect}~\cite{Busch,Heino} --- i.e., $0\le E\le I$ --- and let $\ptm,\ptmb\in\pot$. We set:
\begin{equation} \label{duecase}
\rho\spr\sigma\defi \tr(\rho\tre E) \sei\ptm(\sigma) + \tr(\rho\tre (I-E)) \sei\ptmb(\sigma)
= \tr(\rho\tre E) \sei\ptm(\sigma) + (1-\tr(\rho\tre E)) \sei\ptmb(\sigma)\fin .
\end{equation}
We then obtain a stochastic product that admits a straightforward generalization. Denoting by $\effe$
the convex set of all effects in $\hh$, let $\{\eu, \ldots ,\en\}\subset\effe$ be a discrete POVM~\cite{Busch,Heino}; i.e.,
$\sum_k \ek =I$. Let, moreover, $\ptmu,\ldots ,\ptmn$ be linear maps in $\pot$. A stochastic product in $\sta$
is defined by setting
\begin{equation} \label{ncase}
\rho\spr\sigma\defi \sum_{k=1}^n \tr(\rho\tre \ek) \sei\ptmk(\sigma) \fin .
\end{equation}
\end{example}

\begin{example} \label{partra}
Let $\ptmt\colon\hh\otimes\hh\rightarrow\hh\otimes\hau$ be a trace-preserving positive linear map, where
$\hau$ is an `auxiliary' (separable complex) Hilbert space. Denoting by $\trau\colon\trcha\rightarrow\trc$
the partial trace, we set
\begin{equation} \label{protra}
\rho\spr\sigma\defi \trau(\ptmt(\rho\otimes\sigma)) \fin .
\end{equation}
We then obtain a stochastic product whose left and right partial maps are given by
\begin{equation}
\lmap(A) = \trau(\ptmt(A\otimes\argo)) \fin ,
\ \ \ \rmap(A) = \trau(\ptmt(\argo\otimes A)) \fin .
\end{equation}
\end{example}


\subsection{Stochastic algebras}
\label{stochal}

The notion of stochastic product is related, in a natural way, to a notion of `stochastic algebra'. We will require that
such an algebra be associative.

\begin{definition}
The Banach space $\trc$, endowed with a map $\bifo\colon\trc\times\trc\rightarrow\trc$ that is both stochastic and
an \emph{associative} binary operation, is called a \emph{stochastic algebra}.
\end{definition}

\begin{remark}
Since, by Proposition~{\ref{estende}}, every stochastic product $\sprod$ on $\sta$ is the restriction of a unique stochastic
bilinear map on $\trc$, which is easily seen to be an associative operation if $\sprod$ is associative, one may equivalently
define a stochastic algebra as a Banach space of trace class operators $\trc$ together with an \emph{associative} stochastic
product on $\sta$.
\end{remark}

\begin{remark} \label{banalge}
Restricting the binary operation of a stochastic algebra $(\trc,\bifo)$ to a bilinear map on $\trcsa$, by Proposition~{\ref{boustoc}}
one obtains a (real) \emph{Banach algebra} $(\trcsa,\bifore)$; in particular, $\left\|A\prore B \right\trn \le\| A \trn \tre \| B \trn$.
Taking into account the same proposition, it is clear that on may `renormalize' by a factor $1/2$ the binary operation $\bifo$ in such a way
to obtain a complex Banach algebra; but the renormalized product would not be state-preserving. Besides, as Example~{\ref{mainex}}
shows, a stochastic algebra may well be a (complex) Banach algebra.
\end{remark}


\subsection{Bijective and pureness-preserving points for a stochastic product}

It is natural to distinguish certain classes of points in $\sta$, relatively to a given stochastic product.

\begin{definition}
We say that $\rho\in\sta$ is a \emph{left injective point} (a \emph{right injective point}) for a stochastic
product $\sprod\colon\sta\times\sta\rightarrow\sta$ if $\lmap(\rho)\colon\trc\rightarrow\trc$ ($\rmap(\rho)$)
maps $\sta$ \emph{injectively} into itself. Left (and right) \emph{surjective} and \emph{bijective} points
for $\sprod$ are defined analogously. Denoting by $\lspe$ ($\rspe$) the set of left (right) bijective
points for $\sprod$, we say that this product is \emph{left bijective} (\emph{right bijective}) if
$\lspe=\sta$ ($\rspe=\sta$). A point in $\sta\setminus\lspe$ ($\sta\setminus\rspe$) is called a
\emph{left non-bijective point} (a \emph{right non-bijective point}) for $\sprod$.
\end{definition}

\begin{remark}
The term `left bijective product', which is coherent with the definition of a `left bijective point' for
the product, should not confuse the reader; a stochastic product which is \emph{left} bijective is actually
bijective wrt its \emph{right} entry (say `bijective on the right'), for each point in its \emph{left} entry
(kept fixed).
\end{remark}

\begin{proposition} \label{ulter}
$\rho\in\sta$ is a left (right) injective point for the stochastic product $\sprod$ if and only if
$\ker(\lmap(\rho))=\{0\}$ $(\ker(\rmap(\rho))=\{0\})$; i.e., if and only if the linear map $\lmap(\rho)$
$(\rmap(\rho))$ is injective. Moreover, if $\rho$ is a left (right) surjective point, then $\lmap(\rho)$
$(\rmap(\rho))$ is surjective.
\end{proposition}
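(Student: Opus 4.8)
The plan is to reduce the statement to two properties of the single stochastic map $\ptm:=\lmap(\rho)$, the right-entry version (with $\rmap(\rho)$) being proved identically: namely, that $\ptm$ is injective on $\sta$ if and only if $\ker(\ptm)=\{0\}$, and that $\ptm(\sta)=\sta$ implies $\ptm$ is surjective on $\trc$. Observe first that, by Corollary~\ref{lrmaps}, $\lmap(\sta)\subset\pot$, so $\ptm$ is indeed a positive, trace-preserving linear map in $\trc$, i.e.\ a stochastic map; and the clause ``i.e., if and only if the linear map $\lmap(\rho)$ is injective'' is just the tautology $\ker(\ptm)=\{0\}\Leftrightarrow\ptm$ injective, valid for any linear map. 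The injectivity equivalence is essentially the argument used for Proposition~\ref{lmapinj}, with $\ptm$ in the role of $\lmap$, which I would transplant to the present setting.

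Concretely, for the injectivity part the ``if'' direction is immediate: $\ptm(\rho_1)=\ptm(\rho_2)$ forces $\rho_1-\rho_2\in\ker(\ptm)=\{0\}$. For the converse I would take $0\neq A\in\ker(\ptm)$ and exhibit two distinct density operators with equal image. Writing $A=\au+\ima\,\ad$ as in~(\ref{nota}), and using that a positive linear map in $\trc$ is adjoint-preserving, one has $\ptm(\au),\ptm(\ad)\in\trcsa$, so $\ptm(A)=0$ gives $\ptm(\au)=0=\ptm(\ad)$; since $A\neq 0$, at least one of $\au,\ad$, say $\au$, is nonzero. Decomposing $\au=\aupc\,\baup-\aumc\,\baum$ as in~(\ref{notb})--(\ref{notd}) (with $\aup\,\aum=0$, hence $\baup\,\baum=0$), the operator $\au\neq 0$ can be neither $\ge 0$ nor $\le 0$: were it, say, $\ge 0$, then $\au=\aupc\,\baup$ with $\aupc=\tr(\au)>0$ and $\ptm(\au)=\aupc\,\ptm(\baup)\neq 0$ because $\ptm(\baup)\in\sta$, contradicting $\ptm(\au)=0$. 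Hence $\aupc,\aumc>0$ and $\baup,\baum\in\sta$ are distinct (being orthogonal). Finally, from $\aupc\,\ptm(\baup)=\aumc\,\ptm(\baum)$, taking traces and using trace-preservation of $\ptm$ yields $\aupc=\aumc$, whence $\ptm(\baup)=\ptm(\baum)$; so $\ptm$ is not injective on $\sta$.

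For the surjectivity claim, assuming $\ptm(\sta)=\sta$, I would recall (cf.\ the remark after Corollary~\ref{lrmaps}) that every $B\in\trc$ is a linear combination of at most four density operators, $B=\sum_j c_j\,\sigma_j$ with $c_j\in\ccc$ and $\sigma_j\in\sta$. By hypothesis each $\sigma_j=\ptm(\tau_j)$ for some $\tau_j\in\sta$, so $B=\ptm\bigl(\sum_j c_j\,\tau_j\bigr)\in\ran(\ptm)$; thus $\ptm$ is surjective on $\trc$.

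The only step that is not pure bookkeeping is the one highlighted in the second paragraph: ruling out $\baup=0$ or $\baum=0$, i.e.\ showing that the nonzero selfadjoint kernel element $\au$ is genuinely indefinite. This is exactly where positivity together with trace-preservation of $\ptm\in\pot$ enters, and it is the crux of the converse implication in the injectivity equivalence; everything else follows from the decompositions~(\ref{nota})--(\ref{notd}) and linearity.
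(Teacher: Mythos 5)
Your proposal is correct and follows essentially the same route as the paper: the decomposition $A=\au+\ima\,\ad$ with $\au=\aupc\,\baup-\aumc\,\baum$, the use of adjoint-preservation and of $\lmap(\rho)\in\pot$ to rule out that a nonzero kernel element is semi-definite, the resulting pair of distinct orthogonal states with equal image, and the four-density-operator decomposition plus linearity for the surjectivity claim. The only difference is that you spell out the trace computation showing $\aupc=\aumc$ and the contradiction for $\au\ge 0$ slightly more explicitly than the paper does; the substance is identical.
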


\begin{proof}
To prove the first assertion, we use an argument similar to the proof of Proposition~{\ref{lmapinj}}.
Again, we prove the `only if part', the converse implication being obvious. For $A\in\trc$, since $\lmap(\rho)$
is adjoint-preserving, we have that
\begin{equation}
0=\lmapr(A) = \lmapr(\au) + \ima \sei \lmapr(\ad) \ \Longrightarrow \ \lmapr(\au)=0=\lmapr(\ad) \fin .
\end{equation}
Moreover, if $0=\lmapr(\au) = \aupc\tre\lmapr(\baup) - \aumc\tre\lmapr(\baum)$, then ($\aupc=\aumc$ and)
\begin{equation}
\lmapr(\baup)=\lmapr(\baum) \fin ,
\end{equation}
because $\baup,\baum\in\sta\cup\{0\}$ and $\lmap(\rho)\in\pot$. Analogously, $\lmapr(\ad)=0$ implies
that $\lmapr(\badp)=\lmapr(\badm)$. If $A\neq 0$, then $\au\neq 0$ and/or $\ad\neq 0$.
Therefore, if $A\in\ker(\lmap(\rho))$ --- hence, $\lmapr(\au)=0=\lmapr(\ad)$ --- with $A\neq 0$,
supposing that, say, $\au\neq 0$, we have:
\begin{equation}
\lmap(\baup)=\lmap(\baum), \ \mbox{with $\baup\neq 0\neq\baum$ and $\baup\neq\baum$} \fin .
\end{equation}
Indeed, $\au =\aupc\tre\baup - \aumc\tre\baum\neq 0$ cannot be positive or negative (otherwise, $\lmapr(\au)$
would be a strictly positive or negative multiple of a state) --- thus, $\baup,\baum\neq 0$ are both density
operators --- and we recall that these density operators are mutually orthogonal, i.e., $\baup\tre\baum=0$.
In conclusion: $\baup,\baum\in\sta$, $\baup\neq\baum$ and $\lmapr(\baup)=\lmapr(\baum)$.
Therefore, if $\ker(\lmap(\rho))\neq\{0\}$, then $\lmap(\rho)$ is not injective on $\sta$.

The second assertion is an immediate consequence of the fact that every trace class operator is a linear
combination of (at most) four density operators and of the linearity of $\lmap(\rho)$.
\end{proof}

Another natural notion is the following:

\begin{definition}
We say that $\rho$ is a \emph{left (right) pureness-preserving point} for a stochastic product
$\sprod\colon\sta\times\sta\rightarrow\sta$ if $\lmap(\rho)$ ($\rmap(\rho)$) is a pureness-preserving
map (Definition~{\ref{purepres}}). The subset of $\sta$ formed by all such points will be denoted by $\lpur$ ($\rpur$).
If $\lpur=\sta$ ($\rpur=\sta$), the product is called \emph{left (right) pureness-preserving}.
\end{definition}

A characterization of the bijective points is provided by the following result:

\begin{theorem} \label{chasp}
Given $\rho\in\sta$ and a stochastic product $\sprod$ on $\sta$, and denoting by $\lmap$ $(\rmap)$ the left (right)
partial map associated with this product, the following facts are equivalent:
\begin{enumerate}[\rm (i)]

\item \label{lefbi}
$\rho$ is a left (right) bijective point for $\sprod$.

\item \label{suriso}
$\lmap(\rho)$ $(\rmap(\rho))$ is isometric --- i.e., $\|\bile\lmap(\rho)\biri(A)\trn= 1$,
for all $A\in\trc$ --- and surjective.

\item \label{lefpps}
$\rho$ is a left (right) pureness-preserving point for $\sprod$ such that $\bile\lmap(\rho)\biri(\pusta)=\pusta$
$(\bile\rmap(\rho)\biri(\pusta) = \pusta)$.

\item \label{sytra}
$\rho$ is such that
\begin{equation} \label{symtra}
\rho\spr\sigma =  U \sigma \tre U^\ast \fin ,
\ \ (\sigma\spr\rho =  U \sigma \tre U^\ast) \fin ,
\ \ \ \forall\cinque\sigma\in\sta \fin ,
\end{equation}
where $U$ is a unitary or antiunitary operator in $\hh$, uniquely defined up to a phase factor;
equivalently, $\lmap(\rho)$ $(\rmap(\rho))$ is a symmetry transformation in $\trc$.

\end{enumerate}
\end{theorem}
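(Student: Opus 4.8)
The plan is to prove the equivalence of~{(\ref{lefbi})}--{(\ref{sytra})} only for the left partial map $\lmap$; the statements for $\rmap$ then follow by an entirely symmetric argument (alternatively, by passing to the transpose product, cf.\ Proposition~{\ref{mainprobis}}). The key preliminary observation is that, by Corollary~{\ref{lrmaps}}, $\lmap(\rho)$ is always an element of $\pot$, hence a stochastic map in $\trc$, so that $\|\lmap(\rho)\noro=1$ by Proposition~{\ref{tecpro}}. I would also record at the outset that the two formulations inside~{(\ref{sytra})} are equivalent: two stochastic linear maps on $\trc$ coinciding on $\sta$ coincide everywhere, since $\sta$ spans $\trc$ (cf.\ Proposition~{\ref{estende}}), so~{(\ref{symtra})} is the same as the assertion that $\lmap(\rho)$ \emph{is} the symmetry transformation $A\mapsto U A \sei U^\ast$ (or $A\mapsto U A^\ast \mtre U^\ast$). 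I would then establish the cyclic chain {(\ref{lefbi})}~$\Rightarrow$~{(\ref{suriso})}~$\Rightarrow$~{(\ref{sytra})}~$\Rightarrow$~{(\ref{lefpps})}~$\Rightarrow$~{(\ref{lefbi})}.

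For {(\ref{lefbi})}~$\Rightarrow$~{(\ref{suriso})}, the idea is to pass to the inverse on $\sta$. If $\lmap(\rho)$ maps $\sta$ bijectively onto itself, then the set-theoretic inverse of $\lmap(\rho)|_{\sta}$ is again convex-linear (an immediate check), hence, by Remark~{\ref{samrea}}, it is the restriction to $\sta$ of a stochastic linear map $\Psi$ on $\trc$. Since $\Psi\circ\lmap(\rho)$ and $\lmap(\rho)\circ\Psi$ are linear and agree with the identity on $\sta$, they agree with it on all of $\trc$; thus $\lmap(\rho)$ is invertible on $\trc$ --- in particular, surjective --- with $\lmap(\rho)^{-1}=\Psi$. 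As $\Psi$ is stochastic too, $\|\Psi\noro=1$, and the squeeze $\|A\trn=\|\Psi(\lmap(\rho)(A))\trn\le\|\lmap(\rho)(A)\trn\le\|A\trn$ forces $\|\lmap(\rho)(A)\trn=\|A\trn$ for every $A\in\trc$; i.e., $\lmap(\rho)$ is isometric.

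The remaining implications are short. For {(\ref{suriso})}~$\Rightarrow$~{(\ref{sytra})} I would simply apply Corollary~{\ref{varwig}}: a positive, isometric, surjective linear map in $\trc$ is a symmetry transformation, with the associated unitary/antiunitary operator uniquely defined up to a phase factor. For {(\ref{sytra})}~$\Rightarrow$~{(\ref{lefpps})}, writing $\lmap(\rho)(A)=U A \sei U^\ast$ with $U$ unitary (the antiunitary case being analogous) one gets $\lmap(\rho)(|\psi\rangle\langle\psi|)=|U\psi\rangle\langle U\psi|$, so $\lmap(\rho)$ is pureness-preserving, and surjectivity of $U$ yields $\lmap(\rho)(\pusta)=\pusta$. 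For {(\ref{lefpps})}~$\Rightarrow$~{(\ref{lefbi})}, I would invoke Corollary~{\ref{corpur}}: a pureness-preserving stochastic map has one of the forms~{(\ref{pseucanfor})} or~{(\ref{colcol})}; the collapse channel~{(\ref{colcol})} sends $\pusta$ onto a single point, which is incompatible with $\lmap(\rho)(\pusta)=\pusta$ when $\dim(\hh)\ge 2$, so $\lmap(\rho)(A)=T A \sei T^\ast$ for a linear isometry $T$ (or $S\tre A^\ast \mtre S^\ast$ for an antilinear isometry $S$); then the condition $\lmap(\rho)(\pusta)=\pusta$ forces every unit vector of $\hh$ to lie in $\ran(T)$, hence $\ran(T)=\hh$, i.e., $U:=T$ is unitary (resp.\ $S$ antiunitary), and $\sigma\mapsto U\sigma\tre U^\ast$ is a bijection of $\sta$ onto itself (inverse $\sigma\mapsto U^\ast\sigma\tre U$), so $\rho$ is a left bijective point.

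I expect the main obstacle to be precisely the step {(\ref{lefbi})}~$\Rightarrow$~{(\ref{suriso})}: upgrading bijectivity of $\lmap(\rho)$ \emph{on $\sta$} to its being an \emph{isometry on all of $\trc$}. The trick that makes it work is to notice that the inverse bijection of $\sta$ is itself induced by a stochastic (hence norm-one) linear map, and that two mutually inverse norm-one linear maps on $\trc$ are necessarily isometric. Everything else reduces to a direct application of the earlier structural results --- Corollaries~{\ref{varwig}} and~{\ref{corpur}} --- together with elementary bookkeeping.
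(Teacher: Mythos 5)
Your proposal is correct, but it takes a genuinely different route from the paper. The paper uses (iv) as a hub: it notes that (iv) trivially implies (i)--(iii), and then proves each of (i), (ii), (iii) $\Rightarrow$ (iv) separately --- (i) $\Rightarrow$ (iv) by observing that a bijective convex-linear map of $\sta$ onto itself is a ``Kadison automorphism'' and invoking the Kadison/Wigner theorem, (ii) $\Rightarrow$ (iv) by Corollary~{\ref{varwig}}, and (iii) $\Rightarrow$ (iv) by a linear version of Wigner's theorem on maps sending $\pusta$ onto $\pusta$. You instead close a cycle (i) $\Rightarrow$ (ii) $\Rightarrow$ (iv) $\Rightarrow$ (iii) $\Rightarrow$ (i), and your two nontrivial steps replace two of the paper's external invocations by arguments internal to the paper's toolkit: your (i) $\Rightarrow$ (ii) substitutes Kadison's theorem with the elementary observation that the inverse bijection of $\sta$ is convex-linear, hence (Remark~{\ref{samrea}}) induced by a stochastic map $\Psi$ of norm one, and that two mutually inverse contractions are isometries --- a clean squeeze argument that is correct as written; your (iii) $\Rightarrow$ (i) substitutes the linear Wigner theorem with Corollary~{\ref{corpur}} (ultimately Davies' classification of pure positive maps), ruling out the collapse channel by surjectivity onto $\pusta$ and upgrading the isometry $T$ to a unitary. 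What the paper's version buys is brevity and directness; what yours buys is self-containedness --- only Corollary~{\ref{varwig}} (Russo's theorem) remains as the Wigner-type rigidity input, and the reduction of (i) to (ii) is arguably the most instructive step, since it isolates exactly where bijectivity on $\sta$ becomes isometry on $\trc$. Both proofs are sound.
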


\begin{proof}
We prove the equivalence between each of properties~{(\ref{lefbi})}--{(\ref{lefpps})} and~{(\ref{sytra})}.
Clearly, {(\ref{sytra})} implies ~{(\ref{lefbi})}--{(\ref{lefpps})}. If $\lmap(\rho)$ maps $\sta$ bijectively
onto itself, the map $\sta\ni\sigma\mapsto\bile\lmap(\rho)\biri(\sigma)=\rho\spr\sigma\in\sta$ is a
`Kadison automorphism'~\cite{SimonWT} (i.e., a convex-linear map from $\sta$ onto itself), so that,
by a variant of Wigner's theorem on symmetry transformations~\cite{Kadison,SimonWT,AnielloSW,AnielloSW-bis},
relation~{(\ref{symtra})} is verified (and, by linearity, $\lmap(\rho)$ is a symmetry transformation); therefore,
{(\ref{lefbi})} $\Rightarrow$ {(\ref{sytra})}. Moreover, if the positive linear map $\lmap(\rho)$ is a surjective
isometry, then, by Proposition~{\ref{varwig}}, $\lmap(\rho)$ is a symmetry transformation in $\trc$ and
relation~{(\ref{symtra})} is again satisfied; namely, {(\ref{suriso})} $\Rightarrow$ {(\ref{sytra})}.
Finally, {(\ref{lefpps})} implies~{(\ref{sytra})} because, if $\lmap(\rho)$ maps $\pusta$ onto itself, then,
by a linear version of Wigner's theorem~\cite{AnielloSW,AnielloSW-bis}, $\lmap(\rho)$ is once again
a symmetry transformation.
\end{proof}

By the previous result, $\lspe\subset\lpur$; we will see that, actually, $\lspe\subsetneq\lpur$. Moreover:

\begin{corollary}
If $\dim(\hh)<\infty$, $\rho$ is a left non-bijective (a right non-bijective) point for a stochastic
product $\sprod$ on $\sta$ if and only if there is a non-positive operator $A\in\trc$, with
$\|A\trn =1$, such that $\|\bile\lmap(\rho)\biri(A)\trn <1$ $(\|\bile\rmap(\rho)\biri(A)\trn <1)$.
This result does not hold if $\dim(\hh)=\infty$.
\end{corollary}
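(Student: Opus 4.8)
The plan is to derive the finite-dimensional equivalence by combining Corollary~\ref{fincase} with Theorem~\ref{chasp}, and then to refute the equivalence for $\dim(\hh)=\infty$ by means of an explicit left-constant stochastic product. First I would observe that, for any $\rho\in\sta$, the left partial map $\lmap(\rho)$ belongs to $\pot$ by Corollary~\ref{lrmaps}, hence is a stochastic map in $\trc$. When $\dim(\hh)<\infty$, Corollary~\ref{fincase} applied to $\ptm=\lmap(\rho)$ yields the dichotomy: \emph{either} $\lmap(\rho)$ is a symmetry transformation in $\trc$, \emph{or} there is a non-positive $A\in\trc$ with $\|A\trn=1$ and $\|\bile\lmap(\rho)\biri(A)\trn<1$, and exactly one of these occurs. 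On the other hand, the equivalence $(\ref{lefbi})\Leftrightarrow(\ref{sytra})$ in Theorem~\ref{chasp} says that $\rho$ is a left bijective point for $\sprod$ if and only if $\lmap(\rho)$ is a symmetry transformation. Putting the two together: $\rho$ is a left non-bijective point precisely when $\lmap(\rho)$ fails to be a symmetry transformation, hence (by the dichotomy) precisely when a non-positive $A$ as above exists. This is the asserted equivalence, and the statement about right non-bijective points follows by the same argument with $\rmap$ in place of $\lmap$.

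For the claim that the result fails when $\dim(\hh)=\infty$, I would exhibit a counterexample to the ``only if'' implication. Using Example~\ref{tert} with $\dim(\hh)=\infty$, choose an isometric stochastic map $\ptm$ in $\trc$ that is \emph{not} a symmetry transformation --- for instance, take $\nume=1$ with $\ran(\tu)$ a proper closed subspace of $\hh$ (e.g.\ $\tu$ a unilateral shift), or take $\nume\ge 2$. Since $\ptm\in\pot$, the formula $\rho\spr\sigma\defi\ptm(\sigma)$ defines a left-constant stochastic product on $\sta$ whose left partial map is the constant map $\lmap(\rho)=\ptm$, for every $\rho\in\sta$. By Theorem~\ref{chasp}, $\ptm$ not being a symmetry transformation means that every $\rho\in\sta$ is a left non-bijective point for this product. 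Yet $\ptm$ is an isometry, so $\|\ptm(A)\trn=\|A\trn$ for all $A\in\trc$ and no $A$ --- positive or not, of trace norm $1$ --- can satisfy $\|\ptm(A)\trn<1$. Hence the equivalence breaks down in infinite dimensions.

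The proof involves no substantial obstacle; the one point to get right is to identify precisely which half of the biconditional survives in infinite dimensions. The ``if'' direction holds in \emph{every} dimension: a symmetry transformation is a surjective linear isometry (Corollary~\ref{varwig}), so the mere existence of one $A$ with $\|\bile\lmap(\rho)\biri(A)\trn<\|A\trn$ already rules out $\lmap(\rho)$ being a symmetry transformation, whence $\rho$ is left non-bijective by Theorem~\ref{chasp}. It is only the ``only if'' direction that relies on the finite-dimensional dichotomy of Corollary~\ref{fincase}, and the left-constant product built from the maps of Example~\ref{tert} shows that this dichotomy genuinely fails when $\dim(\hh)=\infty$.
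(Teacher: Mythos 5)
Your proof is correct and takes essentially the same route as the paper: the finite-dimensional equivalence is obtained by combining the dichotomy of Corollary~\ref{fincase} for $\lmap(\rho)\in\pot$ with the equivalence of~(\ref{lefbi}) and~(\ref{sytra}) in Theorem~\ref{chasp}, and the infinite-dimensional failure is derived from Example~\ref{tert}. Your explicit construction of the left-constant product built from the isometric non-symmetry stochastic map is exactly the intended way to use that example, and your remark that only the ``only if'' direction breaks down is a correct (if unneeded) refinement.
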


\begin{proof}
Taking into account the equivalence of properties~{(\ref{lefbi})} and~{(\ref{sytra})} in Theorem~{\ref{chasp}},
just recall the dichotomy in Corollary~{\ref{fincase}}, for the first assertion. The second one follows from
Example~{\ref{tert}}.
\end{proof}

Let us now consider the pureness-preserving points.

\begin{theorem} \label{chapur}
Let $\rho$ be a left (right) pureness-preserving point for $\sprod$. Then, $\lmap(\rho)$ $(\rmap(\rho))$
is of one of the following types:
\begin{enumerate}

\item an isometric stochastic map of the form $\trc\ni A\mapsto T A \sei T^\ast$, or
$\trc\ni A\mapsto S\tre A^\ast \mtre S^\ast$, where $T$, $S$ are, respectively, a linear isometry
and an antilinear isometry in $\hh$, uniquely determined up to a phase factor;

\item a collapse channel of the form $\trc\ni A\mapsto\tr(A)\sei P$, for some (fixed) pure state $P\in\pusta$.

\end{enumerate}
Therefore, $\lspe\subsetneq\lpur$ $(\rspe\subsetneq\rpur)$, and $\rho\in\lpur$ $(\rho\in\rpur)$ is an
injective point if and only if $\lmap(\rho)$ $(\rmap(\rho))$ is not a collapse channel. Moreover, a
left (right) pureness-preserving point is left (right) bijective if and only if it is left (right) surjective.
\end{theorem}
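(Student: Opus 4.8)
The plan is to recognise that the stated dichotomy is nothing but Corollary~\ref{corpur} applied to the partial map evaluated at a state, and that the remaining assertions then drop out of Proposition~\ref{ulter} and Theorem~\ref{chasp}; essentially all of the real work has already been carried out in proving Theorem~\ref{genewt} (hence Corollary~\ref{corpur}) and Theorem~\ref{chasp}.

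First I would note that, by Corollary~\ref{lrmaps}, $\lmap(\rho)\in\pot$ for every $\rho\in\sta$, i.e.\ $\lmap(\rho)$ is a trace-preserving positive linear map in $\trc$, hence a stochastic map. If, moreover, $\rho$ is a left pureness-preserving point, then $\lmap(\rho)$ is by definition a pureness-preserving map (Definition~\ref{purepres}), so that $\lmap(\rho)$ is a \emph{pureness-preserving stochastic map}. Corollary~\ref{corpur} then forces $\lmap(\rho)$ to be either of the form~{(\ref{pseucanfor})} --- an isometric stochastic map $A\mapsto TA\sei T^\ast$ or $A\mapsto S\tre A^\ast\mtre S^\ast$, with the isometry $T$ (respectively $S$) unique up to a phase factor --- or of the form~{(\ref{colcol})}, a collapse channel onto a pure state. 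This is exactly the claimed alternative; the right-handed version follows verbatim with $\rmap(\rho)$ in place of $\lmap(\rho)$.

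Next I would establish $\lspe\subset\lpur$ (as already anticipated just after Theorem~\ref{chasp}): if $\rho\in\lspe$, then property~{(\ref{lefbi})} of Theorem~\ref{chasp} holds, hence so does property~{(\ref{lefpps})}, which in particular asserts that $\rho$ is a left pureness-preserving point; thus $\rho\in\lpur$. For strictness it suffices to produce one stochastic product having a left pureness-preserving point that is not left bijective: the collapse product $\rho\spr\sigma\defi P$, with $P\in\pusta$ fixed, does this, since there $\lmap(\rho)$ is the collapse channel $A\mapsto\tr(A)\sei P$ for every $\rho$, so that $\lpur=\sta$, whereas a collapse channel is neither injective (its kernel is the hyperplane of traceless operators) nor surjective (its range is one-dimensional, while $\dim(\trc)>1$ since $\dim(\hh)\ge 2$), so that $\lspe=\emptyset$. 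One could equally invoke a left-constant product built from a non-surjective isometric stochastic map as in Example~\ref{tert}.

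Finally, for the last two assertions I would argue as follows. By Proposition~\ref{ulter}, $\rho\in\lpur$ is a left injective point if and only if $\lmap(\rho)$ is injective; a map of the form~{(\ref{pseucanfor})} is injective (from $TAT^\ast=0$ one gets $A=T^\ast(TAT^\ast)T=0$, and similarly in the antilinear case), whereas a collapse channel~{(\ref{colcol})} is not, so being a left injective point is equivalent to $\lmap(\rho)$ not being a collapse channel. And if $\rho\in\lpur$ is left surjective, then $\lmap(\rho)$ is surjective by Proposition~\ref{ulter}, hence not a collapse channel, hence of the form~{(\ref{pseucanfor})}; surjectivity of $A\mapsto TA\sei T^\ast$ (respectively $A\mapsto S\tre A^\ast\mtre S^\ast$) forces $T$ (respectively $S$) to be onto, i.e.\ unitary (respectively antiunitary), so $\lmap(\rho)$ is a symmetry transformation and, by Theorem~\ref{chasp}, $\rho\in\lspe$; the converse implication is trivial. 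The only steps requiring any care --- and these are the closest thing here to an obstacle --- are the elementary verifications that the forms~{(\ref{pseucanfor})} are injective and that a collapse channel is never surjective once $\dim(\hh)\ge 2$.
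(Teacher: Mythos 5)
Your proposal is correct and follows essentially the same route as the paper: the dichotomy is read off from Corollary~\ref{corpur} applied to the stochastic map $\lmap(\rho)$, and the remaining assertions follow from Proposition~\ref{ulter}, Theorem~\ref{chasp} and the elementary observations that the isometric forms are injective while a collapse channel is neither injective nor surjective when $\dim(\hh)\ge 2$. Your version is, if anything, slightly more explicit than the paper's (which asserts $\lpur\setminus\lspe\neq\varnothing$ without exhibiting a witnessing product), but the substance is identical.
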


\begin{proof}
The first assertion follows immediately from the definition of a pureness-preserving point and from
Corollary~{\ref{corpur}}. By the assumption that $\dim(\hh)\ge 2$, a pureness-preserving point $\rho$,
such that $\lmap(\rho)$ is collapse channel, belongs to $\lpur\setminus\lspe\neq\varnothing$, and all
non-injective points in $\lpur$ are of this type. Moreover, if $\rho$ is a left pureness-preserving point,
the stochastic map $\lmap(\rho)$ is bijective if and only if it is of the first kind in the dichotomy of the
first assertion of the theorem \emph{and} surjective; i.e., a symmetry
transformation in $\trc$.
\end{proof}

By the second assertion of the previous theorem, the pureness preserving points fall in two classes.

\begin{definition}
$\rho\in\sta$ is called a \emph{left collapsing point} (a \emph{right collapsing point})
for $\sprod$ if $\lmap(\rho)$ ($\rmap(\rho)$) is a collapse channel. In particular, the
set $\lpur$ ($\rpur$) can be partitioned into the subset of all collapsing  and the subset
of all injective left (right) pureness-preserving points.
\end{definition}

Notice that, in the finite-dimensional case, every linear or antilinear isometry in $\hh$ is a
unitary or an antiunitary operator; thus, in this case, $\lspe$ coincides with the set of all
injective left pureness-preserving points.

Let us now derive a few further consequences of Theorem~{\ref{chapur}}.

\begin{corollary}
A stochastic product $\sprod$ on $\sta$ is left-constant (right-constant) on every convex set
consisting of left (right) pureness-preserving points. Therefore, $\lpur$ $(\rpur)$ is either empty or
partitioned into maximal convex subsets that are left (right) level sets for the stochastic product.
In particular, a left (right) pureness-preserving stochastic product cannot be genuinely binary,
because it must be left-constant (right-constant).
\end{corollary}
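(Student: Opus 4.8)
The plan is to derive the whole statement from its first assertion, treating the left-handed case (the right-handed one being entirely symmetric). \emph{Step 1: left-constancy on a convex set $\convset\subseteq\lpur$.} Fix $\rho,\sigma\in\convset$ and $\alpha\in(0,1)$. By convexity of $\convset$, the state $\rho_\alpha\defi\alpha\tre\rho+(1-\alpha)\sigma$ again lies in $\convset\subseteq\lpur$, and convex-linearity of $\sprod$ in its first argument gives, for every pure state $P\in\pusta$, $\bile\lmap(\rho_\alpha)\biri(P)=\alpha\,\bile\lmap(\rho)\biri(P)+(1-\alpha)\,\bile\lmap(\sigma)\biri(P)$. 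Since $\rho,\sigma\in\lpur$, both $\bile\lmap(\rho)\biri(P)$ and $\bile\lmap(\sigma)\biri(P)$ are pure states, and since $\rho_\alpha\in\lpur$ so is $\bile\lmap(\rho_\alpha)\biri(P)$; as an extreme point of $\sta$ cannot be a nontrivial convex combination of two distinct states, it follows that $\bile\lmap(\rho)\biri(P)=\bile\lmap(\sigma)\biri(P)$ for every $P\in\pusta$. The linear span of $\pusta$ contains every finite-rank operator and is therefore dense in $\trc$, while $\lmap(\rho)$ and $\lmap(\sigma)$ are bounded (they lie in $\pot$, by Corollary~\ref{lrmaps}); hence $\lmap(\rho)=\lmap(\sigma)$, so that $\lmap$ is constant on $\convset$, i.e., $\sprod$ is left-constant on $\convset$. (One could alternatively feed the dichotomy of Theorem~\ref{chapur} into this argument, but it is not needed.)

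\emph{Step 2: the partition.} For $\ptm\in\pot$ let $L_\ptm\defi\{\rho\in\sta\colon\lmap(\rho)=\ptm\}$ be the corresponding left level set; recall that the $L_\ptm$ are convex and that distinct $\ptm$ give disjoint level sets covering $\sta$. If $L_\ptm\cap\lpur\neq\varnothing$, choosing $\rho$ in this intersection shows that $\ptm=\lmap(\rho)$ maps $\pusta$ into $\pusta$, so every $\sigma\in L_\ptm$ has $\lmap(\sigma)=\ptm$ pureness-preserving, i.e., $L_\ptm\subseteq\lpur$. Hence, if $\lpur\neq\varnothing$, it is the disjoint union of the convex sets $L_\ptm$ that meet it; and each such $L_\ptm$ is maximal among the convex subsets of $\lpur$, for any convex $C$ with $L_\ptm\subseteq C\subseteq\lpur$ carries, by Step~1, a constant value of $\lmap$, necessarily equal to $\ptm$ (as $L_\ptm\subseteq C$), whence $C\subseteq L_\ptm$.

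\emph{Step 3 and the main point.} If $\sprod$ is left pureness-preserving, then $\lpur=\sta$, which is convex, so Step~1 with $\convset=\sta$ shows that $\sprod$ is left-constant, hence not genuinely binary by definition (and symmetrically for a right pureness-preserving product, which must be right-constant). The argument has no real obstacle; the only step requiring a moment's care is the passage from agreement of $\lmap(\rho)$ and $\lmap(\sigma)$ on $\pusta$ to their agreement on all of $\trc$, which uses boundedness of these maps together with the density of $\mathrm{span}\,\pusta$ in $\trc$.
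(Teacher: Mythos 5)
Your proof is correct, and its key step is genuinely different from — and more elementary than — the one in the paper. The paper first invokes Theorem~\ref{chapur} (hence, ultimately, Davies' classification of pure positive maps) to know that each $\lmap(\rho)$, $\rho\in\lpur$, is either of the isometric form $T(\cdot)\tre T^\ast$ / $S(\cdot)^\ast S^\ast$ or a collapse channel, and then runs a case analysis on a convex combination $\upsilon=\alpha\rho+(1-\alpha)\sigma$: two isometric points force $U\tau\tre U^\ast=\alpha\tre V\tau\tre V^\ast+(1-\alpha)W\tau\tre W^\ast$ and hence $U,V,W$ equal up to phase; a mixed (collapsing/non-collapsing) pair is excluded; two collapsing points must yield the same collapse channel. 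You bypass the classification entirely: evaluating $\lmap(\alpha\rho+(1-\alpha)\sigma)$ on a pure state $P$ and using only that $\bile\lmap(\rho_\alpha)\biri(P)$ is an extreme point of $\sta$ (so it cannot be a nontrivial convex combination of two distinct states), you get $\bile\lmap(\rho)\biri(P)=\bile\lmap(\sigma)\biri(P)$ for all $P\in\pusta$, and then pass to $\lmap(\rho)=\lmap(\sigma)$ via boundedness and density of $\mathrm{span}\,\pusta$ in $\trc$ — both of which are legitimately available (the polarization identity puts every finite-rank operator in the span of rank-one projections, and $\lmap(\rho),\lmap(\sigma)\in\pot$ are bounded). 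What the paper's route buys is the explicit identification of the constant value of $\lmap$ on each maximal convex piece (a symmetry-type map or a collapse channel, information it reuses in Corollary~\ref{bolera}); what your route buys is a shorter, self-contained argument that needs nothing beyond extremality of pure states and Corollary~\ref{lrmaps}. Your Steps 2 and 3 (level sets as the maximal convex pieces; the non-genuinely-binary conclusion) match the paper's in substance and are fine.
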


\begin{proof}
The statement is trivial if the convex set of pureness-preserving points is a singleton.
Then, let $\rho,\sigma\in\sta$ be two states belonging to a convex set $\convset\subset\sta$
of left pureness-preserving points for $\sprod$, with $\rho\neq\sigma$. Let us first suppose
that these states are both injective points for the stochastic product. Then, for every
$\alpha\in (0,1)$, $\upsilon=\alpha\tre\rho + (1-\alpha)\sigma\in\convset$ and, supposing that
$\upsilon$ is an injective point too, by Theorem~{\ref{chapur}} we have that
\begin{eqnarray}
U \tau \tre U^\ast \equa
\bile\lmap(\alpha\tre\rho + (1-\alpha)\sigma)\biri(\tau)
\equb
\alpha \bile\lmap(\rho)\biri(\tau) + (1-\alpha) \bile\lmap(\sigma)\biri(\tau) =
\alpha \sei V \tau \tre V^\ast + (1-\alpha) \tre W \tau \tre W^\ast \fin ,
\ \ \ \forall\cinque\tau\in\sta\fin ,
\end{eqnarray}
where $U,V,W$ are linear or antilinear isometries in $\hh$, uniquely defined up to a phase factor.
This relation implies that, actually, $U,V,W$ generate the same stochastic map in $\trc$;
i.e., that they coincide, up to an irrelevant phase factor. This is easily shown by taking
$\tau=P$, where $P$ ranges over the pure states $\pusta$. Besides, it is also clear that
$\upsilon$ cannot be a collapsing point, if both $\rho$ and $\sigma$ are not. By an analogous
reasoning, one concludes that the pair $(\rho,\sigma)$ cannot be formed by a collapsing and a
non-collapsing (i.e., an injective pureness-preserving) point. The only other possibility is that
they are both collapsing points, as well as their convex combinations, and they give rise to the same
collapse channel. Hence, every convex subset of $\sta$ formed by left pureness-preserving points is contained
in a left level set of $\sprod$, which is a convex set itself and, then, a maximal convex subset of $\lpur$.
\end{proof}

We can also establish some constraints concerning the injective pureness-preserving points and the collapsing
points.

\begin{corollary} \label{bolera}
If a stochastic product possesses both left and right collapsing points, then all the associated collapse
channels must coincide. If the product possesses both left and right injective pureness-preserving points,
then all these points are states of the same rank. If the product admits both left collapsing points
and right injective pureness-preserving points, or vice versa, then each collapsing point has rank equal
to the image rank of the associated collapse channel.
\end{corollary}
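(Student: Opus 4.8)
The plan is to reduce all three assertions to the single identity $\bile\lmap(\rho)\biri(\sigma)=\rho\spr\sigma=\bile\rmap(\sigma)\biri(\rho)$ from Corollary~\ref{lrmaps}, evaluated at judiciously chosen pairs of states, combined with the structural dichotomy of Theorem~\ref{chapur}. The one preliminary fact I would record is that the two admissible shapes of a partial map at a pureness-preserving point behave oppositely under rank: if $\lmap(\rho)$ (or $\rmap(\tau)$) is of the isometric type $A\mapsto TAT^\ast$, with $T$ a linear isometry, or $A\mapsto SA^\ast S^\ast$, with $S$ an antilinear isometry, then it preserves the rank of \emph{every} trace class operator --- since $T^\ast$ is surjective and $T$ injective one has $\ran(TAT^\ast)=T\,\ran(A)$, and the antilinear case is handled in the same way using that conjugation preserves rank --- whereas, if it is a collapse channel $A\mapsto\tr(A)\sei P$ with $P\in\pusta$, then it maps every state onto the single pure state $P$, whose rank is the image rank of that channel.

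For the first assertion I would take a left collapsing point $\rho$ and a right collapsing point $\tau$, with collapse images $P,Q\in\pusta$; evaluating the identity at $(\rho,\tau)$ gives $P=\rho\spr\tau=Q$. Since this works for every left collapsing point tested against a fixed right collapsing one, and conversely, all collapse channels attached to collapsing points of $\sprod$ share one and the same pure image. For the second assertion, take a left injective pureness-preserving point $\rho$ and a right injective one $\tau$; by Theorem~\ref{chapur} both $\lmap(\rho)$ and $\rmap(\tau)$ are of the rank-preserving isometric type (being injective, they are not collapse channels), whence $\rank(\tau)=\rank(\rho\spr\tau)=\rank(\rho)$ --- the first equality because $\rho\spr\tau=\bile\lmap(\rho)\biri(\tau)$ and the second because $\rho\spr\tau=\bile\rmap(\tau)\biri(\rho)$ --- and varying the pair shows that all left and all right injective pureness-preserving points carry a common rank. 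For the third assertion, let $\rho$ be a left collapsing point with collapse image $P$ and let $\tau$ be a right injective pureness-preserving point (the reverse case being symmetric); then $P=\bile\lmap(\rho)\biri(\tau)=\rho\spr\tau=\bile\rmap(\tau)\biri(\rho)$, so, since $\rmap(\tau)$ preserves rank, $\rank(\rho)=\rank(P)$, i.e.\ the rank of $\rho$ equals the image rank of its collapse channel.

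Once the key identity and Theorem~\ref{chapur} are in place the remainder is essentially bookkeeping; the only step that is not a pure substitution is the preliminary rank fact, and the mildly delicate point there --- which I expect to be the main obstacle, such as it is --- is confirming rank invariance under $A\mapsto TAT^\ast$ and $A\mapsto SA^\ast S^\ast$ for isometries $T,S$ that need \emph{not} be unitary or antiunitary. This follows from the surjectivity of $T^\ast$ (giving $\ran(AT^\ast)=\ran(A)$) together with the injectivity of $T$, and from the obvious antilinear counterpart, after which everything reduces to plugging the explicit forms of $\lmap$ and $\rmap$ into the defining relation of the stochastic product.
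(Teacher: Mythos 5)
Your proof is correct and follows essentially the same route as the paper: evaluate $\bile\lmap(\rho)\biri(\sigma)=\rho\spr\sigma=\bile\rmap(\sigma)\biri(\rho)$ at suitably chosen pairs and invoke the dichotomy of Theorem~\ref{chapur}, with your explicit verification that $A\mapsto TAT^\ast$ preserves rank even for a non-surjective isometry being a welcome detail the paper leaves implicit. The only slip is your assumption that the collapse images lie in $\pusta$: a collapsing point is merely one whose partial map is a collapse channel $A\mapsto\tr(A)\sei\rho_0$ with $\rho_0\in\sta$ arbitrary (hence the phrase ``image rank'' in the statement), but your argument goes through verbatim with a general $\rho_0$ in place of $P$.
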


\begin{proof}
Let $(\rho,\sigma)$ be a pair formed by a left and a right collapsing point, respectively.
Then, $\bile\lmap(\rho)\biri(\tau)=\rho_0$ and $\bile\rmap(\sigma)\biri(\upsilon)=\sigma_0$
--- for all $\tau,\upsilon\in\sta$, and for some fixed states $\rho_0$ and $\sigma_0$ ---
so that, actually, $\rho_0=\bile\lmap(\rho)\biri(\sigma) = \rho\spr\sigma = \bile\rmap(\sigma)\biri(\rho)=\sigma_0$.
Next, let $(\rho,\sigma)$ be a pair formed by a left and a right injective pureness-preserving point,
respectively. Then, by Theorem~{\ref{chapur}} we have:
$U \sigma \tre U^\ast=\bile\lmap(\rho)\biri(\sigma)=\rho\spr\sigma=\bile\rmap(\sigma)\biri(\rho)=V \rho \tre V^\ast$,
for some linear or antilinear isometries $U$ and $V$; hence, $\rank(\rho)=\rank(\sigma)$.
Finally, if, say, $\rho$ is a left collapsing point whereas $\sigma$ is a non-collapsing right pureness-preserving point,
then $\rho_0=\bile\lmap(\rho)\biri(\sigma) = \rho\spr\sigma = \bile\rmap(\sigma)\biri(\rho)=V \rho \tre V^\ast$,
with $V$ a linear or antilinear isometry; hence, the image rank of the collapsing channel $\lmap(\rho)$ ($\defi\rank(\rho_0)$)
is equal to the rank of the collapsing point $\rho$. The reader will easily complete this reasoning.
\end{proof}

Finally, something can be also said about the points that are not pureness-preserving.

\begin{proposition}
$\rho$ belongs to $\sta\setminus\lpur$ $(\sta\setminus\rpur)$ if and only if, for some state $\sigma\in\sta$,
$\tr((\rho\spr\sigma)^2)<\tr(\sigma^2)$ $(\tr((\sigma\spr\rho)^2)<\tr(\sigma^2))$.
\end{proposition}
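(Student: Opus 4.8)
The plan is to recognize this proposition as essentially a contrapositive restatement of Corollary~\ref{punode}, applied pointwise to the partial maps. First I would recall from Corollary~\ref{lrmaps} that, for a stochastic product $\sprod$ with left partial map $\lmap$, one has $\lmap(\rho)\in\pot$ for every $\rho\in\sta$; in particular $\lmap(\rho)\colon\trc\rightarrow\trc$ is a trace-preserving, positive linear map, hence (by Proposition~\ref{mainpro}) a stochastic map in $\trc$. The same holds for the right partial map $\rmap(\rho)$.

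Next I would unwind the definitions. By definition of $\lpur$, the state $\rho$ lies in $\lpur$ precisely when $\lmap(\rho)$ is pureness-preserving. Now apply Corollary~\ref{punode} to the stochastic map $\ptm\defi\lmap(\rho)$: it says that $\lmap(\rho)$ is pureness-preserving if and only if it is purity-nondecreasing, i.e.\ $\tr\big(\bile\lmap(\rho)\biri(\sigma)^2\big)\ge\tr(\sigma^2)$ for all $\sigma\in\sta$. Since, by the defining property of the left partial map, $\bile\lmap(\rho)\biri(\sigma)=\rho\spr\sigma$ for every $\sigma\in\sta$, this is exactly the condition $\tr\big((\rho\spr\sigma)^2\big)\ge\tr(\sigma^2)$ for all $\sigma\in\sta$. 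Taking the logical negation of this equivalence yields: $\rho\in\sta\setminus\lpur$ if and only if there exists $\sigma\in\sta$ with $\tr\big((\rho\spr\sigma)^2\big)<\tr(\sigma^2)$, which is the first assertion. The statement for $\sta\setminus\rpur$ follows in the same way, replacing $\lmap$ by $\rmap$ and using $\bile\rmap(\rho)\biri(\sigma)=\sigma\spr\rho$.

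There is no genuine obstacle here; the only thing to be careful about is making explicit that $\lmap(\rho)$ and $\rmap(\rho)$ are honest stochastic maps in $\trc$ (so that Corollary~\ref{punode} applies verbatim), which is immediate from Corollary~\ref{lrmaps}. Everything else is a direct translation between ``pureness-preserving'', ``purity-nondecreasing'', and the partial-map/product dictionary.
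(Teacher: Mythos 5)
Your proposal is correct and follows exactly the paper's route: the paper's entire proof is the single line ``The statement follows from Corollary~\ref{punode},'' and your argument is just that corollary applied to the stochastic maps $\lmap(\rho)$ and $\rmap(\rho)$, followed by logical negation. You have merely spelled out the details (that the partial maps lie in $\pot$ and hence are stochastic, and the dictionary $\bile\lmap(\rho)\biri(\sigma)=\rho\spr\sigma$, $\bile\rmap(\rho)\biri(\sigma)=\sigma\spr\rho$) that the paper leaves implicit.
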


\begin{proof}
The statement follows from Corollary~{\ref{punode}}.
\end{proof}

\begin{example}
We now reconsider the stochastic product~{(\ref{duecase})} in Example~{\ref{povmex}}. Let us assume that $\ptm\neq\ptmb$
and $0\neq E\neq I$. Suppose, moreover, that $\ptm$ is a symmetry transformation, and let  $\unef$ be the (possibly empty)
convex subset of $\sta$ defined by $\unef\defi\{\rho\in\sta\colon \ \tr(\rho\tre E)=1\}$. Notice that
$0\neq E\neq I\Rightarrow\unef,\zeref\subsetneq\sta$. Moreover $\unef\cap\zeref=\varnothing$ and, by the equivalence
of the properties~{(\ref{lefbi})} and~{(\ref{sytra})} in Theorem~{\ref{chasp}}, we have:
\begin{equation*}
\mbox{$\rho$ left bijective point} \ \ \Longleftrightarrow \ \ \mbox{$\rho\in\unef$,
or $\ptmb$ is a symmetry transformation too and $\rho\in\zeref$} \fin .
\end{equation*}
Therefore, the product is left constant --- unitary or antiunitary --- on the convex set $\unef$, which is (either empty or)
a left level set for the stochastic product; moreover, $\unef=\lspe$ if and only if $\ptmb$ is \emph{not} a symmetry transformation
and/or $\zeref=\varnothing$. In the case where $\ptmb$ is a symmetry transformation too (with $\ptmb\neq\ptm$) and
$\unef\neq\varnothing\neq\zeref$, we have that $\lspe=\unef\sqcup\zeref$, with the partition formed by two maximal convex subsets
of $\lspe$. Consider next the more general stochastic product~{(\ref{ncase})}. If the product is right injective, then the discrete
POVM $\{\eu, \ldots ,\en\}$ must be \emph{informationally complete}~\cite{Busch,Heino}; i.e., for every $\rho,\sigma\in\sta$, with
$\rho\neq\sigma$, we must have that $\tr(\rho\tre\ek)\neq\tr(\sigma\tre\ek)$, for some $k\in\{1,\ldots ,n\}$. Hence, by a well known
elementary result --- see Proposition~{3.49} of~\cite{Heino} --- if $\dim(\hh)<\infty$, the constraint $n\ge\dim(\hh)^2$ must be satisfied.
\end{example}


\subsection{Group-covariant stochastic products and equivariant families of products}

Let $G$ be a locally compact, second countable, Hausdorff topological group; in short, a l.c.s.c.\ group,
and let $\ure\colon G \rightarrow\unih$, $\vre\colon G \rightarrow\unih$ pair of projective representations~\cite{Raja} of
this group in a (separable complex) Hilbert space $\hh$. As usual, these representations are supposed to be weakly Borel maps~\cite{Raja}.

Let $\spitrc\simeq\punanh$ be the group of positive, surjective linear isometries in $\trc$ --- see Remark~{\ref{psli}} ---
endowed with the strong operator topology of bounded linear operators in $\trc$. The map
\begin{equation} \label{isorep}
\urep\colon G\rightarrow\spitrc \fin , \ \ \
\urep(g) A\defi\ure(g)\tre A \tre \ure(g)^\ast = \ure(g)\tre A \tre \ure(g)^{-1} ,
\end{equation}
associated with the projective representation $\ure$, is a continuous group homomorphism (see~\cite{AnielloBM}, Proposition~{4.1}).
This map gives rise to a (symmetry) action of the group $G$ on the space of trace class operators where the quantum states live.
We stress that, although $\ure$ is (in general) projective,
\begin{equation}
\ure(gh)= \mul (g,h) \tre \ure(g)\ure(h), \ \ \ \mul (g,h)\in\toro,
\end{equation}
$\urep$ behaves like an ordinary group representation (i.e., as already mentioned, homomorphically):
\begin{equation}
\urep(gh)= \urep(g)\otto\urep(h).
\end{equation}

Let us first introduce a natural notion of covariance for a stochastic product.

\begin{definition}
We say that a stochastic product $\sprod\colon\sta\times\sta\rightarrow\sta$ is \emph{left (right) covariant} wrt the pair
of projective representations $(\ure,\vre)$ if
\begin{equation} \label{procov}
(\urep(g)\tre\rho)\spr\sigma = \vrep(g)\tre(\rho\spr\sigma) \ \ \ (\rho\spr(\urep(g)\tre \sigma) = \vrep(g)\tre(\rho\spr\sigma))
\fin ,
\end{equation}
for all $g\in G$ and all $\rho,\sigma\in\sta$. We say that the product is \emph{left (right) covariant} wrt $\ure$ if~{(\ref{procov})}
holds with $\vre\equiv\ure$. An analogous definition we set for any algebra product (i.e., any bilinear map) on $\trc$.
\end{definition}

Let us deduce a few immediate consequences of covariance of a stochastic product. Observe that,
in the case where $\dim(\hh)=n<\infty$, $\sta$ admits a \emph{maximally mixed state} $\mms\defi n^{-1}I$.

\begin{lemma} \label{stact}
Let the projective representation $\vre\colon G \rightarrow\unih$ be irreducible, and let $\rho\in\sta$.
Then:
\begin{equation}
\vrep(g)\tre\rho =\rho \fin , \ \forall\cinque g\in G \ \Longleftrightarrow \
\dim(\hh)<\infty \ \mbox{and}\ \rho=\mms \fin .
\end{equation}
\end{lemma}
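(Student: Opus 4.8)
The plan is to convert the fixed-point condition into a commutation relation with the representation and then invoke Schur's lemma, in the concrete form afforded by the spectral theorem for $\rho$. First I would note that, since each $\vre(g)$ is unitary, $\vrep(g)\tre\rho=\vre(g)\tre\rho\tre\vre(g)^\ast=\vre(g)\tre\rho\tre\vre(g)^{-1}$; hence the condition ``$\vrep(g)\tre\rho=\rho$ for all $g\in G$'' is equivalent to saying that $\rho$ commutes with $\vre(g)$ for every $g\in G$. This reformulation is insensitive to the multiplier $\mul$ of the projective representation, since rescaling each $\vre(g)$ by a phase factor does not affect commutation with $\rho$.

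Next I would exploit that $\rho\in\sta$ is selfadjoint and of trace class, hence compact with purely discrete spectrum and finite-dimensional eigenspaces for its nonzero eigenvalues. Every eigenprojection $P$ of $\rho$ lies in the von Neumann algebra generated by $\rho$ (it can be written via continuous functional calculus as $f(\rho)$, $f$ continuous and equal to $1$ near the relevant isolated eigenvalue and to $0$ elsewhere on the spectrum, or as a Riesz projection), so $P$ commutes with every $\vre(g)$ as well; equivalently, $\ran(P)$ is a closed $\vre$-invariant subspace. Since $\vre$ is irreducible, $\ran(P)\in\{\{0\},\hh\}$ for each such $P$. Because $\tr(\rho)=1$ we have $\rho\neq 0$, so $\rho$ admits at least one nonzero eigenvalue; the corresponding eigenprojection cannot have range $\{0\}$, hence it has range $\hh$, which forces $\rho=c\tre I$ with $c>0$. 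But $c\tre I$ is of trace class only when $\dim(\hh)<\infty$; writing $n=\dim(\hh)$, the identity $\tr(\rho)=c\tre n=1$ yields $c=n^{-1}$, i.e., $\rho=n^{-1}I=\mms$. This establishes the forward implication and, at the same time, shows that no $\vrep$-fixed state exists when $\dim(\hh)=\infty$.

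For the converse, if $\dim(\hh)=n<\infty$ and $\rho=\mms=n^{-1}I$, then for every $g\in G$ one has $\vrep(g)\tre\mms=\vre(g)\tre(n^{-1}I)\tre\vre(g)^\ast=n^{-1}\vre(g)\tre\vre(g)^\ast=n^{-1}I=\mms$, by unitarity of $\vre(g)$. I do not expect a genuine obstacle in this argument; the only point requiring mild care is the use of irreducibility for a \emph{projective} representation, but ``irreducible'' still means that the only closed $\vre$-invariant subspaces are $\{0\}$ and $\hh$, so the spectral-projection argument goes through verbatim, and the phase ambiguity intrinsic to $\vre$ is harmless since only commutation with the selfadjoint operator $\rho$ is ever used.
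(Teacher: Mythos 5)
Your proof is correct and follows essentially the same route as the paper: both reduce the fixed-point condition to membership of $\rho$ in the commutant of $\vre$ and then use irreducibility (Schur's lemma for projective representations) to force $\rho = c\tre I$, whence trace-class membership requires $\dim(\hh)<\infty$ and $\rho=\mms$. The only difference is that you prove the relevant instance of Schur's lemma explicitly via the spectral projections of the compact selfadjoint operator $\rho$, whereas the paper simply cites it; this is a harmless (and correct) elaboration.
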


\begin{proof}
Note that $\vrep(g)\tre\rho =\rho$, for all $g\in G$, if and only if $\rho$ belongs to the commutant $\commv$
of $\vre$. In the case where $\vre$ is irreducible, by Schur's lemma (that holds also for genuinely
projective representations), $\commv=\{z\tre I\}_{z\in\ccc}$; hence, in this case, $\rho\in\commv$ if and only if
$\dim(\hh)<\infty$ and $\rho$ is the maximally mixed state.
\end{proof}

\begin{proposition} \label{propamms}
Suppose that $\hh$ is finite-dimensional and $\vre\colon G \rightarrow\unih$ is irreducible, and
let $\sprod\colon\sta\times\sta\rightarrow\sta$ be a stochastic product, left (right) covariant
wrt the pair $(\ure,\vre)$. Then: $\mms\spr\sigma=\mms$ $(\rho\spr\mms=\mms)$.
\end{proposition}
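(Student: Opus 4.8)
The plan is to exploit the fact that the maximally mixed state $\mms = n^{-1}I$ is a fixed point of the $G$-action $\urep$, and then to read off from the covariance relation that $\mms\spr\sigma$ (resp.\ $\rho\spr\mms$) is a $\vrep$-invariant state, at which point Lemma~\ref{stact} forces it to coincide with $\mms$.

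First I would record the elementary observation that, since each $\ure(g)$ is unitary,
\[
\urep(g)\tre\mms = \ure(g)\tre(n^{-1}I)\tre\ure(g)^\ast = n^{-1}I = \mms \fin , \ \ \ \forall\cinque g\in G \fin ;
\]
that is, $\mms$ lies in the fixed-point set of the action induced by $\ure$ on $\sta$ (being a scalar multiple of the identity). Next, assuming that $\sprod$ is \emph{left} covariant wrt $(\ure,\vre)$, I would apply the covariance identity~(\ref{procov}) with $\rho=\mms$: for every $\sigma\in\sta$ and every $g\in G$,
\[
\vrep(g)\tre(\mms\spr\sigma) = (\urep(g)\tre\mms)\spr\sigma = \mms\spr\sigma \fin .
\]
Hence $\mms\spr\sigma\in\sta$ satisfies $\vrep(g)\tre(\mms\spr\sigma)=\mms\spr\sigma$ for all $g\in G$. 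Since $\hh$ is finite-dimensional and $\vre$ is irreducible, Lemma~\ref{stact}, applied to the state $\mms\spr\sigma$, yields $\mms\spr\sigma=\mms$, as claimed.

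The \emph{right}-covariant case is entirely analogous: applying~(\ref{procov}) with $\sigma=\mms$ and using $\urep(g)\tre\mms=\mms$ gives $\vrep(g)\tre(\rho\spr\mms)=\rho\spr\mms$ for all $g\in G$, whence $\rho\spr\mms=\mms$ by Lemma~\ref{stact}. There is no genuine obstacle here; the only point worth flagging is the initial remark that $\mms$ is automatically $\urep$-invariant, which is precisely what makes the covariance identity collapse the relevant entry of the product — everything else is an immediate invocation of Lemma~\ref{stact} and Schur's lemma through it.
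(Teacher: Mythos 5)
Your proof is correct and follows exactly the paper's own argument: note that $\mms$ is fixed by $\urep(g)$ for all $g$, use covariance to conclude that $\mms\spr\sigma$ (resp.\ $\rho\spr\mms$) is $\vrep$-invariant, and invoke Lemma~\ref{stact}. The only difference is that you spell out the (trivial) verification that $\urep(g)\tre\mms=\mms$, which the paper leaves implicit.
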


\begin{proof}
Indeed, for every $g\in G$, we have that $\vrep(g)\tre(\mms\spr\sigma)=(\urep(g)\tre\mms)\spr\sigma=\mms\spr\sigma$.
Therefore, by Lemma~{\ref{stact}}, for every $\sigma\in\sta$, $\mms\spr\sigma=\mms$.
\end{proof}

If the stochastic product $\sprod$ is commutative, then all collapsing points are both left and right collapsing.
Then, by the first assertion of Corollary~{\ref{bolera}}, all the associated collapse channels must coincide.
This conclusion is coherent with the following result.

\begin{proposition} \label{propbmms}
Let $\vre\colon G \rightarrow\unih$ be irreducible, and let $\sprod\colon\sta\times\sta\rightarrow\sta$
be a commutative stochastic product, (left and right) covariant wrt the pair $(\ure,\vre)$. If $\rho\in\sta$
is a (left and right) collapsing point, then $\dim(\hh)<\infty$ and
\begin{equation}
\rho\spr\sigma =\mms=\sigma\spr \rho \fin , \ \ \ \forall\cinque\sigma\in\sta \fin .
\end{equation}
\end{proposition}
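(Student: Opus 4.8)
The plan is to reduce the statement to the rigidity provided by Lemma~\ref{stact}. First I would translate the hypothesis that $\rho$ is a collapsing point into a concrete identity. By the definition of a left collapsing point, $\lmap(\rho)$ is a collapse channel, so there is a fixed state $\omega\in\sta$ with $\rho\spr\sigma=\bile\lmap(\rho)\biri(\sigma)=\omega$ for every $\sigma\in\sta$; by commutativity of $\sprod$ one also has $\sigma\spr\rho=\omega$ for every $\sigma$, so $\rho$ is simultaneously a left and a right collapsing point and the left and right collapse channels coincide (this is also forced by the first assertion of Corollary~\ref{bolera}). Thus a single constant state $\omega$ governs the whole situation.

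Next I would feed the covariance relation into this identity, using the \emph{right} covariance of $\sprod$ so that the collapsing property acts on the argument the channel actually annihilates. For every $g\in G$ and every $\sigma\in\sta$, right covariance wrt $(\ure,\vre)$ gives
\[
\rho\spr(\urep(g)\tre\sigma)=\vrep(g)\tre(\rho\spr\sigma)=\vrep(g)\tre\omega .
\]
Since $\urep(g)\tre\sigma$ is again a state, the left-hand side equals $\omega$ by the previous step, whence $\vrep(g)\tre\omega=\omega$ for all $g\in G$.

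Finally, since $\vre$ is irreducible, Lemma~\ref{stact} applies and yields $\dim(\hh)<\infty$ together with $\omega=\mms$. Hence $\rho\spr\sigma=\mms$ for every $\sigma\in\sta$, and by commutativity $\sigma\spr\rho=\rho\spr\sigma=\mms$, as claimed. There is no serious obstacle here: the only points needing a little care are the choice of covariance direction — one must collapse the argument on which $\lmap(\rho)$ is constant — and the remark (via commutativity, or Corollary~\ref{bolera}) that the left and right collapse channels of $\rho$ agree, so that a single $\omega$ can be tracked throughout.
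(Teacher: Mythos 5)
Your argument is correct and is essentially the paper's own proof: both translate the collapsing hypothesis into $\rho\spr\sigma=\omega$ for a fixed state $\omega$, use right covariance to get $\vrep(g)\tre\omega=\rho\spr(\urep(g)\tre\sigma)=\omega$ for all $g$, and then invoke Lemma~\ref{stact} to conclude $\dim(\hh)<\infty$ and $\omega=\mms$. The extra remarks about Corollary~\ref{bolera} and the choice of covariance direction are fine but not needed beyond what the paper already does.
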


\begin{proof}
Let $\rho$ be a collapsing point for the commutative stochastic product $\sprod$; i.e.,
$\rho\spr\sigma=\rho_0=\sigma\spr\rho$, for all $\sigma\in\sta$. Since $\sprod$ is both
left and right covariant wrt the pair $(\ure,\vre)$, we have:
$\vrep(g)\tre\rho_0=\vrep(g)\tre(\rho\spr\sigma)=\rho\spr(\urep(g)\tre \sigma)=\rho_0$,
for all $g\in G$. Hence, by Lemma~{\ref{stact}}, for every $\sigma\in\sta$, $\rho_0=\rho\spr\sigma=\mms$.
\end{proof}

In addition to covariance, there are two further group-theoretical notions concerning stochastic products,
but this time involving a \emph{family} of products: invariance and equivariance. Precisely, let $X$ be a
$G$-space~\cite{Raja} wrt to a (left) group action $\act\colon G\times X\ni (g,\xi) \mapsto g[\xi]\in X$.
Suppose that the points of $X$ label a family of stochastic products, all defined on the same set of states $\sta$.

\begin{definition}
We say that the family of stochastic products $\{\sprodx\colon\sta\times\sta\rightarrow\sta\}_{\xi\in X}$ is \emph{invariant}
wrt the $G$-action $\act$ if
\begin{equation}
\rho\sprx\sigma = \rho\sprgx\sigma  \fin , \ \ \ \forall\cinque g \in G \fin ,
\ \forall\cinque \rho,\sigma\in\sta \fin .
\end{equation}
We say that $\{\sprodx\}_{\xi\in X}$ is \emph{left (right) inner equivariant} wrt the pair $(\act,\ure)$ if
\begin{equation}
(\urep(\gmo)\tre\rho)\sprx\sigma = \rho\sprgx\sigma \ \ \ (\rho\sprx(\urep(\gmo)\tre \sigma) = \rho\sprgx\sigma)
\fin , \ \ \ \forall\cinque g \in G \fin , \ \forall\cinque \rho,\sigma\in\sta \fin .
\end{equation}
Moreover, we say that $\{\sprodx\}_{\xi\in X}$ is \emph{outer equivariant} wrt the pair $(\act,\ure)$ if
\begin{equation}
\urep(\gmo)\sei(\rho\sprx\sigma) = \rho\sprgx\sigma \fin , \ \ \
\forall\cinque g \in G \fin , \ \forall\cinque \rho,\sigma\in\sta \fin .
\end{equation}
Analogous definitions we set for any algebra product in $\trc$.
\end{definition}

\begin{example}
One can easily construct examples of equivariant families of stochastic products from a given stochastic
product $\sprod\colon\sta\times\sta\rightarrow\sta$ and a projective representation $\ure\colon G \rightarrow\unih$.
Indeed, consider the family of stochastic products defined by
\begin{equation}
\{\sprodh\colon\sta\times\sta\rightarrow\sta\}_{h\in G} \fin , \ \ \
\rho\sprh\sigma\defi (\urep(\hmo)\tre\rho)\spr\sigma \fin .
\end{equation}
This family of products is left inner equivariant wrt the pair $(\act,\ure)$, where $\act$ is the left action
of $G$ on itself; i.e.,
\begin{equation}
(\urep(\gmo)\tre\rho)\sprh\sigma = (\urep(\hmo\gmo)\tre\rho)\spr\sigma = \rho\sprgh\sigma \fin .
\end{equation}
In an analogous way, one can construct an outer equivariant family of stochastic products. It is clear that,
conversely, every family of stochastic products labeled by the points of the group $G$, and left inner equivariant
wrt to the pair $(\act\colon G\times G\rightarrow G,\ure)$, where $\act$ is the left action of $G$ on itself, is
of the previously specified form. In fact, denoting by $e$ the identity of $G$, it is sufficient to set
\begin{equation}
\sprod\defi \sprode \fin .
\end{equation}
\end{example}


\section{Constructing a class of stochastic products: twirled products}
\label{main}


We will now explicitly construct a class of group-covariant, associative stochastic products. As clarified in the previous
section, every associative stochastic product is embedded in a suitable algebra on the Banach space $\trc$ of trace class
operators, a so-called \emph{stochastic algebra}. It will be convenient to develop some technical tools first, then to construct
such an algebra structure --- we will actually achieve a larger class of algebras, the `twirled algebras' --- next to prove the
covariance properties of these algebras wrt the relevant group actions and, finally, to obtain our stochastic products by imposing
suitable conditions and by restricting to density operators.


\subsection{First step: technical tools}

The basic ingredients of our construction are the following:

\begin{enumerate}

\item We consider a \emph{pair of projective representations} $\ure\colon G \rightarrow\unih$,
$\vre\colon G \rightarrow\unih$ of a l.c.s.c.\ group $G$ in a separable complex Hilbert space $\hh$.
We will further assume that the group $G$ is \emph{unimodular} --- i.e., $\modu\equiv 1$, where $\modu$ is the
modular function~\cite{Folland-AA,Raja} on $G$ --- and that the representation $\ure$ is (irreducible and)
\emph{square integrable}~\cite{Duflo,AnielloSDP,AnielloPR,Aniello-new,AnielloDMF}.

\item We fix a \emph{complex Borel measure}~\cite{Folland-RA} $\nu$ on $G$. We will denote by $\come$ the Banach space of all
complex Borel measures on $G$. Notice that, since $G$ is a l.c.s.c.\ topological space, every measure in $\come$ is
regular, hence, a Radon measure; see, e.g., Theorem~{7.8} of~\cite{Folland-RA}. We will denote by $\posme\subset\come$ the
convex cone of all finite, positive Borel measures on $G$ and by $\prome\subset\posme$ the convex set of Borel probability measures.
Recall that, for every $\nu\in\posme$, the union of all $\nu$-null open subsets of $G$ is a $\nu$-null open set too, whose
complement is the support $\supp(\nu)$ of the Radon measure $\nu$ (equivalently, $\supp(\nu)$ is the intersection of all
closed subsets of $G$ of full $\nu$-measure); see~\cite{Folland-RA}, chapt.~{7}, sect.~{1}, or~\cite{Bogachev}, chapt.~{7},
sect.~{2}. Let us also point out that, since the topological group $G$ is second countable, for every pair $\mu_1$, $\mu_2$
of Radon measures on $G$, the standard product measure $\mu_1\otimes\mu_2$ is a Radon measure too; see Theorem~{7.20}
of~\cite{Folland-RA}.

\item We also fix a \emph{fiducial operator} $\fiv$ in $\trc$. The twirled algebra that we are going to define will
depend on the choice of this trace class operator, as it will be clear soon, as well as on the choice of
the previously mentioned complex measure $\nu$.

\end{enumerate}

The fact that the representation $\ure$ is square integrable entails that, for every pair of vectors
$\phi,\psi\in\hh$, the Borel function $G\ni g\mapsto\langle \phi, \ure(g) \psi \rangle\in\ccc$ is square integrable
wrt (a normalization of) the Haar measure~\cite{Raja,Folland-AA} $\hame$ on $G$, which, in this case ($G$ unimodular),
is both left and right invariant. Moreover, the so-called \emph{orthogonality relations} hold; i.e.,
\begin{equation} \label{ortorel}
\intG \de\hame(g) \nove \langle \eta, \ure(g) \phi \rangle \langle \ure(g) \psi , \chi \rangle =
\cu\tre \langle \eta, \chi \rangle \langle \psi , \phi  \rangle, \ \ \
\forall\cinque\eta, \chi,\psi , \phi\in\hh \fin ,
\end{equation}
where $\cu$ is a (strictly) positive constant, depending only on $\ure$ and on the normalization of $\hame$.

\begin{remark} \label{duflom}
In the general (i.e., not necessarily unimodular) case, the orthogonality relations for a square integrable
representation involve a positive selfadjoint linear operator in $\hh$ --- the so-called \emph{Duflo-Moore
operator}~\cite{AnielloSDP,AnielloPR,Aniello-new,AnielloDMF} --- which is bounded if and only if $G$ is unimodular,
and, in such case, this operator is a multiple of the identity: $\duf = \ddu \sei I$, $\ddu\equiv\sqrcu>0$. In particular,
every irreducible unitary representation of a \emph{compact} group is square integrable, since the Haar measure
$\hame$ of such a group is \emph{finite}. Moreover, if $\hame$ is normalized as a probability measure,
then $\cu=\dim(\mathcal{H})^{-1}$ (according to the Peter-Weyl theorem~\cite{Folland-AA}).
\end{remark}

Recall that the projective representations $\ure$ and $\vre$ give rise to (strongly) continuous isometric representations
\begin{equation}
\urep,\vrep\colon G\rightarrow\spitrc
\end{equation}
of the l.c.s.c.\ group $G$, acting in the Banach space $\trc$; see~{(\ref{isorep})}. Notice that, if $\mul\colon G\times G\rightarrow\toro$
is the multiplier~\cite{Raja} of $\ure$, then $\ure(g)^\ast=\mul(g,\gmo)\quattro\ure(\gmo)$; hence, by the cyclic property of the trace,
\begin{equation} \label{cicpro}
\tr\big(A\tre (\urep(g)\fiv)\big) = \tr\big((\urep(\gmo) A)\tre\fiv\big) ,
\ \ \ \forall\cinque A,\fiv \in \trc \fin , \forall\cinque g\in G \fin .
\end{equation}

We will now establish two fundamental technical facts.

\begin{lemma} \label{tralem}
For every pair of trace class operators $A,\fiv\in\trc$, the bounded continuous function
$G\ni g\mapsto \tr\big(A\tre (\urep(g) \fiv)\big)=\tr(A\tre \ure(g) \tre\fiv\tre \ure(g)^\ast)\in\ccc$ is $\hame$-integrable and
\begin{equation} \label{fundrel}
\cumo\intG \de\hame(g) \nove \tr\big(A\tre (\urep(g) \fiv )\big) = \tr(A) \sei \tr(\fiv) \fin ,
\end{equation}
where $\cu > 0$ is the constant appearing in the orthogonality relations~{(\ref{ortorel})} for the square integrable
representation $\ure$. Therefore, we can define a complex measure $\mesaf\in\come$ by setting
\begin{equation} \label{defmesaf}
\de\mesaf(g)\defi\cumo\sei\tr\big(A\tre (\urep(g) \fiv)\big)\sei\de\hame(g) \fin .
\end{equation}
\end{lemma}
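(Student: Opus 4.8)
The plan is to reduce the integrability statement \eqref{fundrel} to the orthogonality relations \eqref{ortorel} by decomposing $A$ and $\fiv$ into rank-one pieces, verifying the identity on these pieces, and then controlling the interchange of summation and integration by an absolute-convergence estimate. The constant $\cu$ and the bilinear form on the right will come out automatically from \eqref{ortorel}.

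First I would write spectral-type decompositions for the two trace class operators. Using the canonical (singular-value) decomposition, write $A=\sum_k \erk\,\cek$ and $\fiv=\sum_l \esl\,\fpl$, where $\{\chk\},\{\etk\},\{\phl\},\{\psl\}$ are orthonormal systems and $\sum_k|\erk|=\|A\trn$, $\sum_l|\esl|=\|\fiv\trn$ (a Schmidt/Schatten decomposition; alternatively, since $A,\fiv\in\trc$ are linear combinations of four positive operators, one may first assume $A,\fiv\ge 0$ and then extend by linearity). For the rank-one building blocks I compute directly, using the cyclic property of the trace,
\begin{equation}
\tr\big(\cek\,\big(\urep(g)\,\fpl\big)\big) = \tr\big(\cek\,\ure(g)\,\fpl\,\ure(g)^\ast\big)
= \langle\etk,\ure(g)\phl\rangle\,\langle\ure(g)\psl,\chk\rangle \fin ,
\end{equation}
which is exactly the integrand appearing on the left-hand side of \eqref{ortorel} (with $\eta=\etk$, $\phi=\phl$, $\psi=\psl$, $\chi=\chk$). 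Hence, by the orthogonality relations,
\begin{equation}
\cumo\intG \de\hame(g)\nove \tr\big(\cek\,(\urep(g)\,\fpl)\big)
= \langle\etk,\chk\rangle\,\langle\psl,\phl\rangle
= \tr(\cek)\,\tr(\fpl) \fin .
\end{equation}

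The main obstacle — and the only genuinely nontrivial point — is to justify summing these identities over $k$ and $l$ and exchanging the (double) sum with $\intG\de\hame$. I would handle this by a Fubini/Tonelli argument: using the Cauchy--Schwarz inequality pointwise, $|\langle\etk,\ure(g)\phl\rangle\,\langle\ure(g)\psl,\chk\rangle|\le \|\etk\|\,\|\phl\|\,\|\psl\|\,\|\chk\| = 1$, but this crude bound is not $\hame$-integrable when $G$ is non-compact, so instead I apply Cauchy--Schwarz in $L^2(G,\hame)$ after the orthogonality relations have been invoked in their diagonal form: $\int_G |\langle\etk,\ure(g)\phl\rangle|^2\,\de\hame(g)=\cu$ and likewise $\int_G|\langle\ure(g)\psl,\chk\rangle|^2\,\de\hame(g)=\cu$, so that
\begin{equation}
\intG \de\hame(g)\,\big|\langle\etk,\ure(g)\phl\rangle\,\langle\ure(g)\psl,\chk\rangle\big|
\le \cu \fin ,
\end{equation}
uniformly in $k,l$. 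Therefore $\sum_{k,l}|\erk|\,|\esl|\int_G|\cdots|\,\de\hame \le \cu\,\|A\trn\,\|\fiv\trn<\infty$, which licenses both the $\hame$-integrability of $g\mapsto\tr(A\,(\urep(g)\fiv))$ (the partial sums converge in $L^1(G,\hame)$, and they converge pointwise to the integrand because $\urep(g)$ is a contraction on $\trc$ and the decompositions converge in trace norm) and the term-by-term integration giving $\cumo\int_G\de\hame(g)\,\tr(A\,(\urep(g)\fiv)) = \sum_{k,l}\erk\esl\,\tr(\cek)\tr(\fpl)=\tr(A)\,\tr(\fiv)$. The boundedness and continuity of $g\mapsto\tr(A\,(\urep(g)\fiv))$ are immediate from $|\tr(A\,(\urep(g)\fiv))|\le\|A\trn\,\|\urep(g)\fiv\trn=\|A\trn\,\|\fiv\trn$ and the strong continuity of $\urep$ noted before the lemma. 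Finally, the last assertion is just the observation that \eqref{defmesaf} defines a finite complex Borel measure on $G$ because its density with respect to $\hame$ lies in $L^1(G,\hame)$ by \eqref{fundrel}; its total variation satisfies $\|\mesaf\|\le \cumo\|A\trn\,\|\fiv\trn\cdot(\text{normalization})$, and $\mesaf\in\come$ since $\hame$-absolutely-continuous finite measures on a l.c.s.c.\ group are Radon.
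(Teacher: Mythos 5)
Your proof is correct, but it takes a genuinely different route from the paper's. The paper disposes of relation~(\ref{fundrel}) in one line, by observing that strong continuity of $\urep$ gives continuity of the integrand and then quoting the ``second trace formula for square integrable representations'' (Proposition~7 of~\cite{Aniello-FT}); you instead give a self-contained derivation directly from the orthogonality relations~(\ref{ortorel}). Your argument --- singular value decomposition of $A$ and $\fiv$, evaluation of the integral on rank-one pieces, and the estimate $\|\akl\tre\overline{\bkl}\norluno\le\|\akl\norldue\tre\|\bkl\norldue=\cu$ obtained by applying Cauchy--Schwarz \emph{after} passing to $\elledue(G)$ (you are right that the crude pointwise bound $|\akl\tre\overline{\bkl}|\le 1$ is useless for non-compact $G$) --- is precisely the machinery the paper itself deploys later, in the proof of Lemma~\ref{lemnorms}, to establish $\|\mesaf\|\le\|A\trn\tre\|\fiv\trn$. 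So your proof in effect absorbs Lemma~\ref{lemnorms} into Lemma~\ref{tralem}: the same absolutely convergent double series yields both the value of the integral (term-by-term integration being licensed by $\elleuno$-convergence of the partial sums combined with their pointwise convergence to the integrand) and the total-variation bound. What the citation buys the paper is brevity and modularity; what your argument buys is a proof that makes the role of the orthogonality relations explicit and delivers the norm estimate of Lemma~\ref{lemnorms} as a free byproduct. One cosmetic point: your final total-variation estimate simplifies to $\|\mesaf\|\le\|A\trn\tre\|\fiv\trn$ with no extra normalization factor, since the $\cumo$ in definition~(\ref{defmesaf}) exactly cancels the $\cu$ coming from the Cauchy--Schwarz bound.
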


\begin{proof}
The bounded function $G\ni g\mapsto \tr\big(A\tre (\urep(g) \fiv)\big)\in\ccc$ is also continuous, because the representation $\urep$
is strongly continuous. Relation~{(\ref{fundrel})} is the `second trace formula for square integrable
representations'; see~\cite{Aniello-FT}, Proposition~7.
\end{proof}

\begin{remark}
By the previous lemma, it is clear that, if $A$ and $\fiv$ are \emph{density operators}, then $\mesaf$ is a Borel
\emph{probability measure} on $G$.
\end{remark}

\begin{lemma} \label{lempos}
For every trace class operator $B\in\trc$ and every complex measure $\nu\in\come$, the vector-valued function
$G\ni g\mapsto(\vrep(g) B )\in\trc$ is Bochner-integrable wrt to $\nu$. Moreover, the trace class operator
$\twirlnv\sei B$, defined as a Bochner integral
\begin{equation} \label{bochinte}
\twirlnv\sei B\defi\intG \de \nu(g) \nove  (\vrep(g) B )\in\trc \fin ,
\end{equation}
has trace equal to $\nu(G)\sei\tr(B)$, and, in the case where $\nu\in\posme$ and $B\in\trcp$, it is a positive
element of $\trc$. Finally, if $\nu\in\prome$, then $\twirlnv\sei B$ is contained in the closed
convex hull
\begin{equation}
\clco(\{\vrep(g) B\in\trc\colon g\in\supp(\nu)\})\subset\clco(\vrep(G) B)\subset\trc \fin .
\end{equation}
\end{lemma}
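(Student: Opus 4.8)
The plan is to prove the four assertions of Lemma~\ref{lempos} in the order they are stated, treating each as a routine consequence of standard facts about Bochner integration, the strong continuity and isometry of $\vrep$, and the properties of Radon measures on $G$ recalled in the preliminaries.

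First I would establish Bochner-integrability of $g\mapsto\vrep(g)B$. Since $\vrep\colon G\rightarrow\spitrc$ is strongly continuous, the map $G\ni g\mapsto\vrep(g)B\in\trc$ is continuous, hence Borel measurable, and since $G$ is second countable the range is separable; thus the function is strongly $\nu$-measurable. Because each $\vrep(g)$ is a (surjective) linear isometry of $\trc$, we have $\|\vrep(g)B\trn=\|B\trn$ for all $g$, so $\int_G\|\vrep(g)B\trn\,\de|\nu|(g)=|\nu|(G)\,\|B\trn<\infty$ (recall $\nu\in\come$ is finite). By the Bochner integrability criterion the integral in~{(\ref{bochinte})} is well defined and $\|\twirlnv B\trn\le|\nu|(G)\,\|B\trn$.

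Next, the trace formula: the trace functional $\tr(\,\cdot\,)=\tr(\,\cdot\,I)$ is a bounded linear functional on $\trc$ (via the pairing with $I\in\bop$), so it commutes with the Bochner integral, giving $\tr(\twirlnv B)=\int_G\tr(\vrep(g)B)\,\de\nu(g)=\int_G\tr(B)\,\de\nu(g)=\nu(G)\,\tr(B)$, where I used that $\vrep(g)$ is trace-preserving (it is a symmetry transformation, $\vrep(g)B=\vre(g)B\vre(g)^\ast$, and $\vre(g)$ is unitary). For positivity, suppose $\nu\in\posme$ and $B\in\trcp$. Each $\vrep(g)B=\vre(g)B\vre(g)^\ast\ge 0$, and positivity of a trace class operator is equivalent to $\tr(\vrep(g)B\,C)\ge 0$ for every $C\in\bopp$; since $C\mapsto\tr(\,\cdot\,C)$ is a bounded linear functional, it passes through the Bochner integral, so $\tr(\twirlnv B\,C)=\int_G\tr(\vrep(g)B\,C)\,\de\nu(g)\ge 0$ for all $C\in\bopp$, whence $\twirlnv B\in\trcp$. (Alternatively one invokes that $\trcp$ is a closed convex cone and Bochner integrals of cone-valued functions lie in the closed cone.)

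Finally, the convex-hull statement when $\nu\in\prome$. Here the key is the general fact that the Bochner integral of a function with values in a separable Banach space, taken against a probability measure, lies in the closed convex hull of the (essential) range of the integrand; applied to $g\mapsto\vrep(g)B$ with the probability measure $\nu$, and noting that $\vrep(g)B$ for $g\in G\setminus\supp(\nu)$ contributes nothing since $\nu$ is concentrated on $\supp(\nu)$ (its complement being $\nu$-null), we get $\twirlnv B\in\clco(\{\vrep(g)B\colon g\in\supp(\nu)\})$. The two inclusions $\clco(\{\vrep(g)B\colon g\in\supp(\nu)\})\subset\clco(\vrep(G)B)\subset\trc$ are immediate from $\supp(\nu)\subset G$. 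To justify the hull fact itself, if $\twirlnv B$ were outside the closed convex set $K\defi\clco(\{\vrep(g)B\colon g\in\supp(\nu)\})$, a Hahn-Banach separation would give $\Lambda\in\bop$ (the dual of $\trc$) and $c\in\erre$ with $\mathrm{Re}\,\tr(\twirlnv B\,\Lambda)>c\ge\mathrm{Re}\,\tr(\vrep(g)B\,\Lambda)$ for all $g\in\supp(\nu)$; integrating the right-hand inequality against $\nu$ (a probability measure supported on $\supp(\nu)$) and commuting $\tr(\,\cdot\,\Lambda)$ with the Bochner integral yields $\mathrm{Re}\,\tr(\twirlnv B\,\Lambda)\le c$, a contradiction.

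The main obstacle — and it is a mild one — is simply being careful that all the scalar-valued functionals used ($\tr$, $\tr(\,\cdot\,C)$, the separating $\Lambda$) are bounded and linear on $\trc$, so that the defining property of the Bochner integral (commutation with bounded linear functionals) applies; everything else is bookkeeping with the fact that $\vrep(g)$ are trace-preserving positive isometries and that finite Radon measures on the second-countable group $G$ are well behaved.
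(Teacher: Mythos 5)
Your proposal is correct and follows essentially the same route as the paper's proof: measurability via continuity plus separability (the paper invokes Pettis' theorem), norm-boundedness for Bochner integrability, commuting bounded linear functionals with the integral for the trace and positivity claims, and the standard closed-convex-hull property of Bochner integrals against probability measures (which the paper cites from Diestel--Uhl rather than re-deriving via Hahn--Banach as you do).
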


\begin{proof}
Taking into account that the Banach space $\trc$ is separable ($\hh$ being separable), by Pettis' measurability
theorem~\cite{Diestel}, the continuous function $G\ni g\mapsto(\vrep(g) B )\in\trc$ is $\nu$-measurable. Moreover,
this vector-valued function is bounded (wrt the norm $\trnor$). By these facts, it is Bochner-integrable wrt $\nu$.
Observe now that, exchanging the trace with the Bochner integral (so getting an ordinary integral of a $\ccc$-valued
function),
\begin{equation}
\tr(\twirlnv\sei B)= \intG \de \nu(g) \nove \tr(\vre(g)\tre B\tre \vre(g)^\ast)=
\nu(G)\sei\tr(B) \fin ,
\end{equation}
and, for every $\psi\in\hh$,
\begin{equation}
\langle \psi , (\twirlnv\sei B)\tre\psi\rangle =
\intG \de \nu(g) \nove \langle \psi , (\vre(g)\tre B\tre \vre(g)^\ast)\tre\psi\rangle \fin .
\end{equation}
Thus, if, in particular, $\nu$ is a positive measure and $B\in\trcp$, then $\twirlnv\sei B\in\trcp$.
The final assertion relies on the fact that we can restrict the integral in~{(\ref{bochinte})} to
$\supp(\nu)$ and on a well known property of Bochner's integral; see~\cite{Diestel},
chapt.~{II}, Corollary~{8}.
\end{proof}

By the two previous lemmas, for every tern of operators $A,\fiv,B\in\trc$, we can consistently define
a trace class operator $\ltripr A,\fiv,B\mtripr\in\trc$ by setting
\begin{equation} \label{expcon}
\ltripr A,\fiv,B\mtripr \defi\twirmenv=\intG \de(\mesaf\conv\nu)(g) \nove (\vrep(g) B ) \fin ,
\end{equation}
where the integral on the rhs is, as in~{(\ref{bochinte})}, a Bochner integral, and the measure $\mesaf\conv\nu\in\come$
is the \emph{convolution}~\cite{Folland-AA} of the complex Borel measures $\mesaf$ and $\nu$.

\begin{proposition} \label{promea}
For every $A,\fiv,B\in\trc$, we have that
\begin{eqnarray}
\ltripr A,\fiv,B\mtripr
\equa
\intGG \de\mesafnu (g,h) \nove (\vrep(gh) B)
\equb
\intG \de\mesaf(g) \intG \de \nu(h) \nove (\vrep(gh) B)
\equb \label{defmtripr}
\cumo \intG \de\hame(g)\intG \de \nu(h) \nove \tr\big(A\tre (\urep(g) \fiv)\big)\tre (\vrep(gh) B) \fin ,
\end{eqnarray}
where $\mesafnu$ is the standard product measure, all integrals are in the sense of Bochner and the iterated integrals
can be interchanged.
\end{proposition}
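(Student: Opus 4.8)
The plan is to start from the definition of $\ltripr A,\fiv,B\mtripr$ given in~{(\ref{expcon})}, namely the Bochner integral $\intG \de(\mesaf\conv\nu)(g)\nove(\vrep(g) B)$ against the convolution measure $\mesaf\conv\nu\in\come$, and to establish the three displayed equalities in the following order. First I would rewrite this integral as a double integral against the product measure $\mesafnu$ with integrand $(\vrep(gh)B)$, using the defining property of the convolution of complex measures. Then I would apply Fubini's theorem for the Bochner integral to split the double integral into an iterated one, and simultaneously to justify that the two orders of iteration coincide. Finally I would substitute into the outer integral the explicit density of $\mesaf$ furnished by Lemma~{\ref{tralem}}, i.e.\ $\de\mesaf(g)=\cumo\,\tr\big(A\tre(\urep(g)\fiv)\big)\,\de\hame(g)$, to reach the last line.

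For the first step I would recall that, on the l.c.s.c.\ group $G$, the convolution $\mu_1\conv\mu_2\in\come$ of two complex Borel measures is by definition~{\cite{Folland-AA}} the push-forward of the product measure $\mu_1\otimes\mu_2$ under group multiplication, so that $\intG f\,\de(\mu_1\conv\mu_2)=\intGG f(gh)\,\de(\mu_1\otimes\mu_2)(g,h)$ for every bounded Borel function $f$. To promote this scalar identity to the vector-valued statement I would pair both sides with an arbitrary $\Lambda\in\trc^\ast=\bop$: a bounded linear functional commutes with the Bochner integral, so the vector identity reduces to the scalar one applied to the bounded continuous function $f(g)=\Lambda\big(\vrep(g)B\big)$, and the Hahn--Banach theorem then forces equality of the two trace class operators. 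Existence of all the Bochner integrals here is guaranteed exactly as in the proof of Lemma~{\ref{lempos}}: $G$ and $G\times G$ are second countable, $\trc$ is separable, the integrands $g\mapsto(\vrep(g)B)$ and $(g,h)\mapsto(\vrep(gh)B)$ are norm-bounded and continuous, and $\mesafnu$ is a Radon measure because the product of two Radon measures on a second countable group is Radon (Theorem~{7.20} of~{\cite{Folland-RA}}).

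The second step is Fubini's theorem for the Bochner integral (see~{\cite{Diestel}}, chapt.~{II}): since $\mesaf$ and $\nu$ are finite complex measures and $(g,h)\mapsto(\vrep(gh)B)$ is strongly measurable and norm-bounded on $G\times G$, the integral against $\mesafnu$ equals each of the two iterated integrals, which therefore coincide; this gives the second and third members of the display together with the asserted interchangeability. The third step is then the substitution of the density from Lemma~{\ref{tralem}}: the function $g\mapsto\tr\big(A\tre(\urep(g)\fiv)\big)$ is bounded and continuous, so replacing $\de\mesaf(g)$ by $\cumo\,\tr\big(A\tre(\urep(g)\fiv)\big)\,\de\hame(g)$ in the outer integral is legitimate and produces the last line verbatim. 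I expect the only delicate point to be the bookkeeping of the measurability and integrability hypotheses needed to invoke the vector-valued convolution identity and Fubini's theorem at once; but all of these are already secured by the separability of $\hh$ (whence Pettis measurability), the strong continuity of $\urep$ and $\vrep$, the finiteness of $\mesaf$ and $\nu$, and the regularity of Radon measures on second countable groups, so no new idea beyond the tools of Lemmas~{\ref{tralem}} and~{\ref{lempos}} is required.
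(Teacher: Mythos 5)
Your argument is correct and follows essentially the same route as the paper's proof: both establish Bochner integrability of $(g,h)\mapsto\vrep(gh)\tre B$ from continuity, norm-boundedness and the separability of $\trc$, invoke Fubini's theorem for Bochner integrals to pass between the product-measure and iterated integrals, and reduce the vector-valued convolution identity to the scalar one by testing against functionals (the paper uses matrix elements $\langle\phi,\arguc\psi\rangle$ where you use general $\Lambda\in\bop$ with Hahn--Banach, a purely cosmetic difference). The final substitution of the density of $\mesaf$ from Lemma~{\ref{tralem}} is likewise exactly what the definition~{(\ref{defmesaf})} provides.
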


\begin{proof}
The function $G\times G\ni (g,h)\mapsto(\vrep(gh) B )\in\trc$ is continuous --- hence, measurable wrt the
product measure $\mesafnu$ ($\trc$ being separable) --- and norm bounded. Therefore, it is Bochner-integrable.
By Fubini's theorem for Bochner integrals (see, e.g., Theorem~{3.7.13} of~\cite{Hille}),
\begin{equation} \label{bochints}
\intGG \de\mesafnu (g,h) \nove (\vrep(gh) B) = \intG \de\mesaf(g) \intG \de \nu(h) \nove (\vrep(gh) B) \fin ,
\end{equation}
where the iterated Bochner integrals on the rhs can be interchanged. Moreover, for every $\phi,\psi\in\hh$, we have that
\begin{equation}
\intG \de(\mesaf\conv\nu)(g) \nove \langle\phi, (\vrep(g) B)\psi\rangle
= \intG \de\mesaf(g) \intG \de \nu(h) \nove \langle\phi, (\vrep(gh) B)\psi\rangle \fin ,
\end{equation}
by the definition of the convolution of measures (notice that this relation holds even if the integrand
functions are only assumed to be continuous and bounded; see Remark~2 of~\cite{AnielloSOP}). Thus,
by~{(\ref{bochints})},
\begin{equation}
\ltripr A,\fiv,B\mtripr \defi
\intG \de(\mesaf\conv\nu)(g) \nove (\vrep(g) B ) = \intGG \de\mesafnu (g,h) \nove (\vrep(gh) B) \fin ,
\end{equation}
where, again, Bochner integrals are understood.
\end{proof}

Note that --- for $\nu=\delta\equiv\delta_e$ (Dirac measure at the identity $e$ of $G$) ---
$\mesaf\conv\nu=\mesaf\conv\delta=\mesaf$ and we have:
\begin{equation}
\ltripr A,\fiv ,B\rtripr \equiv \ltripr A,\fiv ,B\dtripr = \twirmesa = \cumo
\intG \de\hame(g) \nove \tr\big(A\tre (\urep(g) \fiv)\big)\tre (\vrep(g) B) \fin .
\end{equation}

In the following, we will also need a further technical result. Denoting by $|\mesaf|\in\posme$
the \emph{total variation} measure~\cite{Folland-RA} associated with the complex measure $\mesaf$,
we have:
\begin{lemma} \label{lemnorms}
For every pair of trace class operators $A,\fiv\in\trc$, the $\come$-norm $\|\mesaf\|\defi|\mesaf|(G)$
of the associated complex measure $\mesaf$ satisfies the inequality
\begin{equation} \label{tecresu}
\|\mesaf\|\le \|A\trn \tre \|\fiv\trn \fin .
\end{equation}
\end{lemma}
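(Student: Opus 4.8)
The plan is to rewrite $\|\mesaf\|=|\mesaf|(G)$ as an ordinary $\hame$-integral of a scalar function, and then to bound that integral using the singular value decompositions of $A$ and $\fiv$ together with the orthogonality relations~{(\ref{ortorel})} for the square integrable representation $\ure$.

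First I would observe that, by Lemma~{\ref{tralem}}, the complex measure $\mesaf$ is absolutely continuous wrt the Haar measure $\hame$, with an $\hame$-integrable density $g\mapsto\cumo\tre\tr\big(A\tre(\urep(g)\fiv)\big)$; hence $\|\mesaf\|=|\mesaf|(G)=\cumo\intG\de\hame(g)\,\big|\tr\big(A\tre(\urep(g)\fiv)\big)\big|$, and it remains to show that this integral is at most $\cu\,\|A\trn\,\|\fiv\trn$.

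Next I would expand $A=\sum_k\erk\,\cek$ and $\fiv=\sum_l\esl\,\fpl$ in their singular value decompositions, where $\{\chk\},\{\etk\},\{\phl\},\{\psl\}$ are orthonormal systems in $\hh$, $\erk,\esl>0$, and $\sum_k\erk=\|A\trn$, $\sum_l\esl=\|\fiv\trn$. Since $\urep(g)\fiv=\ure(g)\tre\fiv\tre\ure(g)^\ast=\sum_l\esl\,|\ure(g)\phl\rangle\langle\ure(g)\psl|$, the trace splits into the absolutely convergent double series $\tr\big(A\tre(\urep(g)\fiv)\big)=\sum_{k,l}\erk\esl\,\langle\etk,\ure(g)\phl\rangle\langle\ure(g)\psl,\chk\rangle$. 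Applying the triangle inequality, then integrating over $G$ and interchanging the sum with the integral (legitimate by Tonelli's theorem, all terms being nonnegative), and finally applying the Cauchy--Schwarz inequality in $L^2(G,\hame)$ to each term, I would reduce the bound to the two integrals $\intG\de\hame(g)\,|\langle\etk,\ure(g)\phl\rangle|^2$ and $\intG\de\hame(g)\,|\langle\ure(g)\psl,\chk\rangle|^2$. Each of these equals $\cu$ by the orthogonality relations~{(\ref{ortorel})} (taking the four relevant vectors among $\etk,\phl$, respectively $\chk,\psl$, all of norm one). Hence every term of the series is bounded by $\cu^{1/2}\cu^{1/2}=\cu$, so the integral is at most $\cu\sum_{k,l}\erk\esl=\cu\,\|A\trn\,\|\fiv\trn$; dividing by $\cu$ yields~{(\ref{tecresu})}.

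The only delicate point is the bookkeeping of the double series: I need to justify the termwise expansion of $\tr\big(A\tre(\urep(g)\fiv)\big)$ and the interchange of the summation with the Haar integral. This will be routine, however, since the singular value decompositions converge in the trace norm (so that $\tr(A\tre X)$ and the matrix coefficients split termwise), each coefficient $|\langle\psi,\ure(g)\phi\rangle|$ is bounded by $\|\psi\|\,\|\phi\|$, and $\sum_{k,l}\erk\esl=\|A\trn\,\|\fiv\trn<\infty$; so no genuine obstacle arises beyond this verification.
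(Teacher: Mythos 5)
Your proof is correct and follows essentially the same route as the paper's: the singular value decompositions of $A$ and $\fiv$, the orthogonality relations~{(\ref{ortorel})} to identify the $\elledue(G)$-norms of the matrix coefficients as $\cu^{1/2}$, and the Cauchy--Schwarz inequality to bound each term by $\cu$. The only (immaterial) difference is bookkeeping: you interchange sum and integral via the pointwise triangle inequality and Tonelli, whereas the paper argues that the double series converges absolutely in $\elleuno(G)$ to the integrand.
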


\begin{proof}
Let us consider the singular value decomposition of $A,\fiv\in\trc$ (see, e.g.,~\cite{Moretti}, chapt.~{4}):
\begin{equation}
A=\sum_k\erk\sei\cek \fin , \ \ \ \fiv=\sum_l \esl\sei\fpl   \fin .
\end{equation}
Here, $\{\erk\}$, $\{\esl\}$ are the singular values of $A$, $\fiv$ --- thus: $\|A\trn=\sum_k\erk$
and $\|\fiv\trn=\sum_k\esl$ --- $\{\chk\}$, $\{\etk\}$, $\{\phl\}$, $\{\psl\}$ are orthonormal systems
in $\hh$ and convergence wrt the trace norm is understood. We then have:
\begin{equation} \label{serexp}
\tr\big(A\tre (\urep(g) \fiv )\big) = \sum_k\sum_l\erk\tre\esl\sei \tr\big(\cek\tre (\urep(g) \fpl )\big)
= \sum_{k,l}\erk\tre\esl\sei\akl(g)\tre\overline{\bkl(g)} \fin ,
\end{equation}
for all $g\in G$ (absolute convergence on the rhs), where $\akl(g)\defi\langle\etk,\ure(g)\tre\phl\rangle$,
$\bkl(g)\defi\langle\chk,\ure(g)\tre\psl\rangle$ ($|\akl(g)|,|\bkl(g)|\le 1$). Since $\ure$ is a square integrable representation,
$\akl,\bkl\in\ldg\equiv\elleg$ and, by the orthogonality relations~{(\ref{ortorel})},
\begin{equation}
\|\akl\norldue=\bbile\intG \de\hame(g)\nove |\akl(g)|^2 \bbiri^{\motto 1/2}\motto = \sei \sqrcu
= \bbile\intG \de\hame(g)\nove |\bkl(g)|^2\bbiri^{\motto 1/2} .
\end{equation}
Moreover, the function $\akl\tre\overline{\bkl}$ belongs to $\lug$. Hence, by~{(\ref{serexp})}, we have a series
of functions in $\lug$ converging point-wise to the function $\tr\big(A\tre (\urep\argo \fiv )\big)\in\lug$ (Lemma~{\ref{tralem}}).
By a well known result (an immediate consequence, e.g., of Corollary~{2.32} of~{\cite{Folland-RA}}), it follows that
the series $\sum_{k,l}\erk\tre\esl\sei\akl\tre\overline{\bkl}$ --- which converges absolutely in $\lug$, since
$\|\akl\tre\overline{\bkl}\norluno\le\|\akl\norldue\tre\|\bkl\norldue =\cu$ --- must converge to
$\tr\big(A\tre (\urep\argo \fiv )\big)$ in $\lug$, as well; therefore:
\begin{eqnarray}
\|\tr\big(A\tre (\urep\argo \fiv )\big)\norluno
\equa
\Big\|\sum_{k,l}\erk\tre\esl\sei\akl\tre\overline{\bkl}\tre\bnorluno
\nonumber \\
& \le & \spa
\sum_{k,l}\erk\tre\esl\sei \|\akl\tre\overline{\bkl}\norluno
\nonumber \\
& \le & \spa
\sum_{k,l}\erk\tre\esl\sei \|\akl\norldue\tre\|\bkl\norldue
=\cu\sei \|A\trn \tre \|\fiv\trn  \fin .
\end{eqnarray}
At this point, it is sufficient to notice that
\begin{equation}
|\mesaf|(G) = \cumo\intG \de\hame(g) \nove |\tr\big(A\tre (\urep(g) \fiv )\big)|
= \cumo\sei \|\tr\big(A\tre (\urep\argo \fiv )\big)\norluno \le
\|A\trn \tre \|\fiv\trn \fin ,
\end{equation}
and the proof is complete.
\end{proof}


\subsection{Second step: constructing twirled algebras}

Let us now study the main properties of the trace class operator $\ltripr A,\fiv,B\mtripr$. It will be convenient,
from this point onwards, to fix the normalization of the Haar measure $\hame$ in such a way that $\cu =1$.

It is clear that the map
\begin{equation} \label{fourma}
\mtriprod\colon\trc\times\trc\times\trc\times\come\rightarrow\trc
\end{equation}
is linear wrt each of its four arguments; besides:

\begin{proposition}
For every $A,\fiv,B\in\trc$ and every $\nu\in\come$, we have that
\begin{equation} \label{reladj}
\big(\ltripr A,\fiv,B\mtripr\big)^\ast = \ltripr A^\ast,\fidu,B^\ast\mtriprcc  \fin ,
\end{equation}
where $\ccme$ is the complex conjugate of the measure $\nu$. Hence, if the operators $A,\fiv,B$ are selfadjoint
and $\nu$ is real-valued (i.e., a finite signed Borel measure), then $\ltripr A,\fiv,B\mtripr$ is selfadjoint too.
Furthermore,
\begin{equation} \label{relpos}
A\ge 0,\ \fiv\ge 0,\ B\ge 0,\ \nu\ge 0 \ \Longrightarrow \ \ltripr A,\fiv,B\mtripr \ge 0 \fin .
\end{equation}
\end{proposition}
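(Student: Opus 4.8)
The plan is to derive all three assertions from the Bochner-integral representation~(\ref{expcon})--(\ref{defmtripr}) of $\ltripr A,\fiv,B\mtripr$, together with Lemmas~\ref{tralem}, \ref{lempos} and~\ref{lemnorms} and a couple of elementary manipulations of complex Borel measures. I would start with the adjoint identity~(\ref{reladj}), because the selfadjointness claim is an immediate specialization of it, whereas the positivity statement~(\ref{relpos}) can be treated separately.

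For~(\ref{reladj}): in $\ltripr A,\fiv,B\mtripr = \intG \de(\mesaf\conv\nu)(g)\,(\vrep(g)B)$ I would take adjoints and push the (conjugate-linear, isometric) adjoining map through the Bochner integral; this replaces the scalar measure $\mesaf\conv\nu$ by its complex conjugate $\overline{\mesaf\conv\nu}$, while the integrand becomes $(\vrep(g)B)^\ast = \vre(g)\tre B^\ast\tre\vre(g)^\ast = \vrep(g)B^\ast$, so that $\big(\ltripr A,\fiv,B\mtripr\big)^\ast = \intG \de\overline{(\mesaf\conv\nu)}(g)\,(\vrep(g)B^\ast)$. It then remains to identify the measure. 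First, $\overline{\mesaf\conv\nu} = \overline{\mesaf}\conv\overline{\nu}$, which is immediate from the integral definition of convolution. Second, $\overline{\mesaf}$ is precisely the measure that~(\ref{defmesaf}) assigns to the pair $(A^\ast,\fidu)$: conjugating the density $\de\mesaf(g) = \tr\big(A\tre(\urep(g)\fiv)\big)\de\hame(g)$ and using $\overline{\tr(X)} = \tr(X^\ast)$, $\big(A\,\ure(g)\fiv\,\ure(g)^\ast\big)^\ast = \ure(g)\fidu\,\ure(g)^\ast A^\ast$ and the cyclicity of the trace gives $\overline{\de\mesaf(g)} = \tr\big(A^\ast(\urep(g)\fidu)\big)\de\hame(g)$. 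Feeding these two identities back, $\big(\ltripr A,\fiv,B\mtripr\big)^\ast = \ltripr A^\ast,\fidu,B^\ast\mtriprcc$. The selfadjointness statement then follows by putting $A^\ast = A$, $\fidu = \fiv$, $B^\ast = B$ and $\overline{\nu} = \nu$ into~(\ref{reladj}).

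For the positivity statement~(\ref{relpos}): when $A\ge 0$ and $\fiv\ge 0$, the density of $\mesaf$ with respect to $\hame$ is $g\mapsto\tr\big(A\,\ure(g)\fiv\,\ure(g)^\ast\big)$, which is nonnegative, being the trace of a product of two positive trace class operators; by Lemma~\ref{lemnorms} this density is also $\hame$-integrable, so $\mesaf\in\posme$. If in addition $\nu\in\posme$, then $\mesaf\conv\nu\in\posme$ as well, and applying the positivity part of Lemma~\ref{lempos} with $\mesaf\conv\nu$ in the role of $\nu$ and with $B\in\trcp$ yields that the Bochner integral $\intG \de(\mesaf\conv\nu)(g)\,(\vrep(g)B) = \ltripr A,\fiv,B\mtripr$ is a positive element of $\trc$.

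The one point that needs a little care is the interchange of the adjoining map with the Bochner integral and the attendant conjugation of the scalar measure. I expect to handle it either by polar decomposition --- writing $\de(\mesaf\conv\nu) = h\,\de|\mesaf\conv\nu|$ with $|h|\equiv 1$, so that the integral becomes a Bochner integral against the positive measure $|\mesaf\conv\nu|$ (for which commuting with the bounded conjugate-linear adjoining map is standard) and conjugate-linearity reproduces $\overline{h}$, hence $\overline{\de(\mesaf\conv\nu)}$ --- or, more concretely, by testing against matrix elements: since a trace class operator is determined by the numbers $\langle\phi,(\cdot)\psi\rangle$, it suffices to note that $\overline{\langle\psi,(\vrep(g)B)\phi\rangle} = \langle\phi,(\vrep(g)B^\ast)\psi\rangle$ for all $\phi,\psi\in\hh$ and to pass the complex conjugate under the scalar integral giving $\langle\psi,\ltripr A,\fiv,B\mtripr\,\phi\rangle$. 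No deeper difficulty is anticipated.
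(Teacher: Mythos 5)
Your proposal is correct and follows essentially the same route as the paper: the paper likewise obtains~(\ref{reladj}) from $(\vrep(g)B)^\ast=\vrep(g)B^\ast$ together with $\tr\big(A\tre(\urep(g)\fiv)\big)^\ast=\tr\big(A^\ast(\urep(g)\fidu)\big)$ applied to the integral representation, and deduces~(\ref{relpos}) from the positivity of $\mesaf$ for $A,\fiv\ge 0$ combined with the second assertion of Lemma~\ref{lempos}. Your additional care about commuting the conjugate-linear adjoining map with the Bochner integral (via matrix elements or polar decomposition) is a sound way to justify a step the paper leaves implicit, but it does not change the argument.
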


\begin{proof}
Observe that we have $(\vrep(g) B )^\ast= \vrep(g) B^\ast$ and, by
the basic properties of the trace,
\begin{equation}
\tr\big(A\tre (\urep(g) \fiv)\big)^\ast = \tr\big((\urep(g) \fiv)^\ast A^\ast\big)
= \tr\big(A^\ast (\urep(g) \fidu) \big) .
\end{equation}
Then, by~{(\ref{defmtripr})}, relation~{(\ref{reladj})} holds. Relation~{(\ref{relpos})}
is a consequence of the second assertion of Lemma~{\ref{lempos}} and of the fact that,
for $A,\fiv\ge 0$, $\mesaf$ is a positive measure.
\end{proof}

Another remarkable property of the map $\mtriprod$ is the following:
\begin{proposition}
For every $A,\fiv,B\in\trc$ and every $\nu\in\come$, we have that
\begin{equation}
\tr\big(\ltripr A,\fiv,B\mtripr\big) = \nu(G)\sei\tr(A)\sei\tr(\fiv)\sei\tr(B)
\end{equation}
and
\begin{equation}
\left\|\ltripr A,\fiv,B\mtripr\right\trn \le \|\nu\|\sei \|A\trn\tre \|\fiv\trn\tre \|B\trn \fin .
\end{equation}
\end{proposition}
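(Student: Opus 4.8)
The plan is to read off both identities directly from the Bochner-integral representation $\ltripr A,\fiv,B\mtripr = \twirmenv = \intG \de(\mesaf\conv\nu)(g)\,(\vrep(g) B)$ established in~(\ref{expcon}), using the properties of the auxiliary complex measure $\mesaf$ recorded in Lemmas~\ref{tralem} and~\ref{lemnorms} together with the properties of the Bochner integral recorded in Lemma~\ref{lempos}. One first notes that $\mesaf\conv\nu\in\come$ (convolution is a well-defined binary operation on $\come$), so that~(\ref{expcon}) and Lemma~\ref{lempos} indeed apply with $\mesaf\conv\nu$ playing the role of the generic measure there.

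For the trace: by Lemma~\ref{lempos}, applied with $\mesaf\conv\nu$ in place of $\nu$, we get $\tr\big(\ltripr A,\fiv,B\mtripr\big) = (\mesaf\conv\nu)(G)\,\tr(B)$. The total mass of a convolution factorizes, $(\mesaf\conv\nu)(G)=\intGG\de\mesaf(g)\,\de\nu(h)=\mesaf(G)\,\nu(G)$, and, by Lemma~\ref{tralem} with the normalization $\cu=1$ fixed above, $\mesaf(G)=\intG\de\mesaf(g)=\tr(A)\,\tr(\fiv)$. Substituting yields $\tr\big(\ltripr A,\fiv,B\mtripr\big)=\nu(G)\,\tr(A)\,\tr(\fiv)\,\tr(B)$.

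For the norm estimate: the standard norm bound for the Bochner integral (with respect to the total variation $|\mesaf\conv\nu|$) together with the fact that every $\vrep(g)$ is a linear isometry of $\trc$ (see Remark~\ref{psli}) gives
\[
\left\|\ltripr A,\fiv,B\mtripr\right\trn \le \intG \|\vrep(g) B\trn\,\de|\mesaf\conv\nu|(g) = \|B\trn\,|\mesaf\conv\nu|(G) = \|B\trn\,\|\mesaf\conv\nu\| .
\]
It remains to bound $\|\mesaf\conv\nu\|$. Submultiplicativity of the total-variation norm under convolution --- $|\mesaf\conv\nu|(G)\le\intGG\de|\mesaf|(g)\,\de|\nu|(h)=|\mesaf|(G)\,|\nu|(G)$, i.e.\ the Banach-algebra inequality for $(\come,\conv)$ --- gives $\|\mesaf\conv\nu\|\le\|\mesaf\|\,\|\nu\|$, and Lemma~\ref{lemnorms} gives $\|\mesaf\|\le\|A\trn\,\|\fiv\trn$. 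Combining the three inequalities yields $\left\|\ltripr A,\fiv,B\mtripr\right\trn \le \|\nu\|\,\|A\trn\,\|\fiv\trn\,\|B\trn$.

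There is no genuine obstacle here: the statement is essentially a bookkeeping consequence of the two lemmas of the previous subsection. The only thing requiring a little care is to pass systematically to total-variation measures when estimating norms (while using the signed masses $\mesaf(G)$, $\nu(G)$ only in the trace computation), since $\mesaf$ and $\mesaf\conv\nu$ are genuine complex measures and need not be positive.
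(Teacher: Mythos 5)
Your proof is correct and follows essentially the same route as the paper: both reduce the two identities to Lemma~\ref{tralem} (for the trace) and Lemma~\ref{lemnorms} (for the norm), combined with the standard properties of the Bochner integral. The only cosmetic difference is that you work at the level of the convolution measure $\mesaf\conv\nu$, invoking the factorization of total mass and the submultiplicativity of the total-variation norm for $(\come,\conv)$, whereas the paper unfolds the iterated Bochner integrals via Proposition~\ref{promea} and estimates them directly; the substance is identical.
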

\begin{proof}
Recalling the expression of the trace class operator $\ltripr A,\fiv,B\mtripr$ appearing in the last line
of~{(\ref{defmtripr})} (with $\cu=1$), observe that, exchanging the trace with the Bochner integrals, we have:
\begin{eqnarray}
\tr\big(\ltripr A,\fiv,B\mtripr\big)
\equa
\intG \de\hame(g)\nove \tr\big(A\tre (\urep(g) \fiv)\big) \sei
\intG \de \nu(h) \nove\tr(\vrep(gh) B)
\equb
\nu(G)\sei\tr(B)
\intG \de\hame(g)\nove \tr\big(A\tre (\urep(g) \fiv)\big) \fin .
\end{eqnarray}
Then, by the first assertion of Lemma~{\ref{tralem}}, we conclude that
$\tr\big(\ltripr A,\fiv,B\mtripr\big)=\nu(G)\sei\tr(A) \sei \tr(\fiv) \sei \tr(B)$.
Next, by a well known property of the Bochner integral (see, e.g., property~{(c)} in Proposition~{4.5}
of~{\cite{Busch}}), and taking into account the norm estimate $\|\mesaf\|\le \|A\trn \tre \|\fiv\trn$
in Lemma~{\ref{lemnorms}}, we find:
\begin{eqnarray}
\left\|\ltripr A,\fiv,B\mtripr\right\trn
\spa & \le & \spa
\intG \de\hame(g)\nove \big|\tr\big(A\tre (\urep(g) \fiv)\big)\big|
\left\|\intG \de \nu(h) \nove (\vrep(gh) B)\right\trn
\equb
\intG \de|\mesaf|(g)\nove
\left\|\intG\de\nu(h) \nove (\vrep(h) B)\right\trn
\nonumber \\
& \le & \spa
\intG \de|\mesaf|(g)\nove \intG\de|\nu|(h)\nove \|\vrep(h) B\trn
\equb
|\mesaf|(G)\sei |\nu|(G)\sei\|B\trn =
\|\mesaf\|\sei \|\nu\|\sei \|B\trn
\le \|\nu\|\sei \|A\trn\tre \|\fiv\trn\tre \|B\trn \fin .
\end{eqnarray}
for all $A,\fiv,B\in\trc$ and $\nu\in\come$.
\end{proof}

The associativity of the stochastic algebras that we are going to define is ensured by the following:

\begin{proposition} \label{proas}
If the representation $\vre$ coincides with the square integrable representation $\ure$,
then, for every $A,B,C,\fiv\in\trc$ and every $\nu\in\come$, we have that
\begin{equation} \label{associa}
\ltripr\ltripr A,\fiv,B\mtripr , \fiv,C \mtripr = \ltripr A,\fiv,\ltripr B,\fiv,C \mtripr\mtripr \fin .
\end{equation}
\end{proposition}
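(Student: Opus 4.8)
The plan is to show that both sides of~(\ref{associa}) equal one and the same Bochner integral $\int_G \de\sigma(g)\,(\urep(g) C)$, where $\sigma\in\come$ is the fourfold convolution $\sigma = \mesaf\conv\nu\conv\nu_{B,\fiv}\conv\nu$; here $\nu_{B,\fiv}$ denotes the complex measure built from $B$ and $\fiv$ exactly as $\mesaf$ is built from $A$ and $\fiv$ in~(\ref{defmesaf}) (with the normalization $\cu=1$). This is meaningful because convolution makes $\come$ an associative Banach algebra, so the iterated convolution is unambiguous.

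For the right-hand side of~(\ref{associa}), put $E\defi\ltripr B,\fiv,C\mtripr$. By~(\ref{expcon}) (with $\vre=\ure$) we have $E = \int_G \de(\nu_{B,\fiv}\conv\nu)(h)\,(\urep(h) C)$, and, pulling the bounded operator $\urep(g)$ through this Bochner integral, $\urep(g)E = \int_G \de(\nu_{B,\fiv}\conv\nu)(h)\,(\urep(gh) C)$. Hence $\ltripr A,\fiv,E\mtripr = \int_G \de(\mesaf\conv\nu)(g)\,(\urep(g)E)$, and Fubini for Bochner integrals together with the definition of the convolution of measures --- the very argument used in the proof of Proposition~\ref{promea} --- gives $\ltripr A,\fiv,E\mtripr = \int_G \de\big[(\mesaf\conv\nu)\conv(\nu_{B,\fiv}\conv\nu)\big](g)\,(\urep(g)C) = \int_G \de\sigma(g)\,(\urep(g)C)$ by associativity of convolution.

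For the left-hand side, put $D\defi\ltripr A,\fiv,B\mtripr$; the core of the argument is to identify the measure $\nu_{D,\fiv}$ attached to $D$ via~(\ref{defmesaf}). Interchanging the trace with the Bochner integral in $D = \int_G \de(\mesaf\conv\nu)(h)\,(\urep(h)B)$ yields $\tr\big(D\tre(\urep(g)\fiv)\big) = \int_G \de(\mesaf\conv\nu)(h)\,\tr\big((\urep(h)B)(\urep(g)\fiv)\big)$. Now comes the key elementary identity: since $\urep$ is an honest (phase-free) homomorphism, conjugation of an operator by $\ure(g)^{-1}\ure(h)$ equals $\urep(g^{-1}h)$ applied to it (the multiplier phases cancel), so cyclicity of the trace and relation~(\ref{cicpro}) give $\tr\big((\urep(h)B)(\urep(g)\fiv)\big) = \tr\big((\urep(g^{-1}h)B)\fiv\big) = \tr\big(B\tre(\urep(h^{-1}g)\fiv)\big)$; in other words the integrand is precisely the $\hame$-density $f$ of $\nu_{B,\fiv}$ evaluated at $h^{-1}g$. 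Consequently $\de\nu_{D,\fiv}(g) = \big[\int_G \de(\mesaf\conv\nu)(h)\,f(h^{-1}g)\big]\de\hame(g)$, and --- since $G$ is unimodular, so that $\hame$ is left invariant --- the change of variable $k=h^{-1}g$ shows that for any $\mu\in\come$ the measure $\mu\conv(f\cdot\hame)$ has $\hame$-density $g\mapsto\int_G f(h^{-1}g)\,\de\mu(h)$. Therefore $\nu_{D,\fiv} = (\mesaf\conv\nu)\conv\nu_{B,\fiv}$, whence $\ltripr D,\fiv,C\mtripr = \int_G \de(\nu_{D,\fiv}\conv\nu)(g)\,(\urep(g)C) = \int_G \de\sigma(g)\,(\urep(g)C)$, once more by associativity of convolution. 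Comparing with the previous paragraph gives~(\ref{associa}).

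Concerning routine justifications: all the interchanges above (trace versus Bochner integral, Bochner--Fubini, pulling $\urep(g)$ inside) are legitimate because the relevant integrands are bounded, the maps $g\mapsto\tr(\,\cdot\,(\urep(g)\,\cdot\,))$ and $g\mapsto\urep(g)(\,\cdot\,)$ are continuous, $\trc$ is separable, and $\mesaf,\nu$ are finite measures --- here one invokes Lemmas~\ref{tralem} and~\ref{lempos} and the norm bound of Lemma~\ref{lemnorms}. I expect the only genuinely delicate point to be the middle paragraph: correctly pinning down $\nu_{D,\fiv}$, which combines the small multiplier bookkeeping behind $\tr\big((\urep(h)B)(\urep(g)\fiv)\big) = \tr\big(B\tre(\urep(h^{-1}g)\fiv)\big)$ with the observation that convolving a measure with one that is absolutely continuous wrt $\hame$ produces a density given by the displayed integral --- and it is exactly here that unimodularity (left invariance of $\hame$ under the substitution) is used.
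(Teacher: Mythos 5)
Your proof is correct, but it follows a genuinely different route from the paper's. The paper establishes~(\ref{associa}) by expanding both sides as fourfold iterated Bochner integrals over $G$, justifying all interchanges via a Tonelli argument (first for positive $A$, $B$, $\fiv$, $\nu$, then extending by linearity), and then matching the two integrands through the change of variables $\git\mapsto\hmo\gmo\git$ combined with relation~(\ref{cicpro}). You instead push the whole computation into the Banach algebra $(\come,\conv)$: your crucial observation --- which the paper never isolates --- is that the measure attached via~(\ref{defmesaf}) to $D=\ltripr A,\fiv,B\mtripr$ is itself a convolution, $\nu_{D,F}=(\mesaf\conv\nu)\conv\nu_{B,F}$, so that both sides of~(\ref{associa}) collapse to $\intG\de\sigma(g)\nove(\urep(g)C)$ with $\sigma=\mesaf\conv\nu\conv\nu_{B,F}\conv\nu$, and associativity of the twirled product is inherited from associativity of convolution in $\come$. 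Your key identity $\tr\big((\urep(h)B)(\urep(g)\fiv)\big)=\tr\big(B\,(\urep(\hmo g)\fiv)\big)$ is right (the multiplier phases cancel under conjugation, exactly as in~(\ref{cicpro})), as is the computation of the $\hame$-density of $\mu\conv(f\cdot\hame)$ --- though note that the invariance needed there is just left invariance of the left Haar measure, which holds irrespective of unimodularity, so your attribution of that step to unimodularity is a harmless misstatement. What your approach buys is conceptual economy: the associativity of the twirled algebra is exhibited as a shadow of the associativity of the measure algebra, and the four-variable bookkeeping disappears. What the paper's direct approach buys is a single computational template that it reuses essentially verbatim for the commutativity and covariance proofs. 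The one place you should be more explicit is the Fubini interchange behind $\nu_{D,F}=(\mesaf\conv\nu)\conv\nu_{B,F}$: the paper takes care to secure integrability with respect to the product measure by first assuming positive ingredients, whereas you only gesture at boundedness and finiteness; the gesture is adequate here (the density of $\nu_{B,F}$ is bounded and $\hame$-integrable and the other measures are finite), but it deserves a sentence rather than a footnote.
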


\begin{proof}
Recalling relation~{(\ref{defmtripr})}, setting $\vre\equiv\ure$ (with $\ure$ square integrable) and exchanging
the trace with the Bochner integrals, we have:
\begin{eqnarray}
\ltripr\ltripr A,\fiv,B\mtripr , \fiv,C \mtripr
\equa
\intG \de\hame(\git) \intG \de \nu(\hit) \intG \de\hame(g) \intG \de \nu(h)
\nove \tr(A (\urep(g) \fiv) )
\nonumber \\ \label{associaa}
& \times & \spa
\tr((\urep(g\tre h) B ) (\urep(\git) \fiv)) \tre (\urep(\git\tre\hit)\tre C) \fin .
\end{eqnarray}
Here, \emph{all} the Bochner integrals can be interchanged (not only the first one with the second, and the remaining two,
by the final claim of Proposition~{\ref{promea}}). Indeed --- setting $\fivg\equiv\urep(g) \fiv$ etc.\ ---
the function $(g,h,\git,\hit)\mapsto\tr(A\tre\fivg)\sei\tr(\bgh\tre\fivgit)$ is integrable wrt the product measure
$\mu\equiv\hame\otimes\nu\otimes\hame\otimes\nu$. This fact is easily checked by taking $A$, $B$ and $\fiv$ in $\trcp$, and
$\nu$ in $\posme$, at first, so that, by Tonelli's theorem and by the first assertion of Lemma~{\ref{tralem}},
\begin{eqnarray}
\int\de\mu(g,h,\git,\hit)\nove\tr(A\tre\fivg)\sei\tr(\bgh\tre\fivgit)
\equa
\intG\de\hame(g)\intG\de\nu(h)\intG\de\hame(\git)\intG\de\nu(\hit)
\nove\tr(A\tre\fivg)\sei\tr(\bgh\tre\fivgit)
\equb
\nu(G)^2\sei\tr(B)\sei\tr(\fiv) \intG\de\hame(g)\nove\tr(A\tre\fivg)
\equb
\nu(G)^2\sei\tr(A)\sei\tr(B)\sei\tr(\fiv)^2 \fin .
\end{eqnarray}
By linearity, one extends this conclusion to all $A,B,\fiv\in\trc$ and $\nu\in\come$ (recall~\cite{Folland-RA}
that $\nu$ admits a decomposition $\nu=\nuu + \ima \sei\nud=\nup\msei -\num \msei + \ima \sei(\ndp\msei-\ndm)$,
$\nup,\num,\ndp,\ndm\in\posme$, analogous to~{(\ref{nota})--(\ref{notc})}). Therefore, the integrals in~{(\ref{associaa})}
can be freely interchanged by Fubini's theorem for Bochner integrals, because the function
$(g,h,\git,\hit)\mapsto\urep(\git\tre\hit)\tre C$ is continuous and norm bounded, hence,
Bochner-integrable wrt $\tr(A\tre\fivg)\sei\tr(\bgh\tre\fivgit)\sei\de\mu(g,h,\git,\hit)$.

Besides, since the bounded linear operator $\urep(gh)$ in $\trc$ and the relevant Bochner integrals can be
interchanged, and taking into account the final claim of Proposition~{\ref{promea}}, we have:
\begin{eqnarray}
\ltripr A,\fiv,\ltripr B,\fiv,C \mtripr\mtripr
\equa
\intG \de \nu(h) \intG \de\hame(g) \intG \de \nu(\hit) \intG \de\hame(\git)
\nove \tr(A (\urep(g) \fiv) )
\equx
\tr(B (\urep(\git) \fiv))\tre (\urep(g\tre h\tre\git\tre\hit) C)
\equb
\intG \de \nu(h) \intG \de\hame(g) \intG \de \nu(\hit) \intG \de\hame(\git)
\nove \tr(A (\urep(g) \fiv) )
\equx
\tr(B  (\urep(\hmo\gmo\git) \fiv))\tre (\urep(\git\tre\hit) C)
\equb
\intG \de \nu(h) \intG \de\hame(g) \intG \de \nu(\hit) \intG \de\hame(\git)
\nove \tr(A (\urep(g) \fiv) )
\equx \label{associab}
\tr((\urep(g\tre h) B ) (\urep(\git) \fiv))\tre (\urep(\git\tre\hit) C) \fin ,
\end{eqnarray}
Here, for obtaining the second equality, we have exploited the change of variables
$\git\mapsto\hmo\gmo\git$ and the invariance of the Haar measure $\hame$, while the
third equality follows from relation~{(\ref{cicpro})}. Finally, comparing the expressions
obtained in~{(\ref{associaa})} and~{(\ref{associab})}, where the order of the iterated
integrals is irrelevant, we see that relation~{(\ref{associa})} is verified.
\end{proof}

Finally, we focus on the case where the l.c.s.c.\ group $G$ is abelian.

\begin{proposition}
If $G$ is abelian and $\vre$ coincides with the square integrable representation $\ure$,
then, for every $A,B,\fiv\in\trc$ and every $\nu\in\come$, we have that
\begin{equation} \label{commab}
\ltripr A,\fiv,B\mtripr = \ltripr B,\fiv,A\mtripr \fin .
\end{equation}
\end{proposition}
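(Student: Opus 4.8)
The plan is to pair the claimed equality with the trace functional. Since both $\ltripr A,\fiv,B\mtripr$ and $\ltripr B,\fiv,A\mtripr$ are trace class, it is enough to prove that $\tr\bigl(C\,\ltripr A,\fiv,B\mtripr\bigr)=\tr\bigl(C\,\ltripr B,\fiv,A\mtripr\bigr)$ for every \emph{rank-one} operator $C\in\trc$ (such operators separate $\trc$). Starting from the last line of~{(\ref{defmtripr})} --- with $\cu=1$ and $\vre\equiv\ure$ --- and exchanging the trace with the Bochner integrals, one obtains
\begin{equation*}
\tr\bigl(C\,\ltripr A,\fiv,B\mtripr\bigr)=\intG\de\nu(h)\intG\de\hame(g)\nove\tr\bigl(A\,(\urep(g)\fiv)\bigr)\,\tr\bigl(C\,(\urep(gh)B)\bigr)\,.
\end{equation*}
Using the cyclicity of the trace together with relation~{(\ref{cicpro})}, the homomorphism property $\urep(gh)=\urep(g)\urep(h)$ and --- crucially --- the commutativity of $G$ (so that $(gh)^{-1}=g^{-1}h^{-1}$), the factor $\tr\bigl(C\,(\urep(gh)B)\bigr)$ can be recast as $\tr\bigl((\urep(g)B)\,(\urep(h^{-1})C)\bigr)$. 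Carrying out the identical manipulation on $\ltripr B,\fiv,A\mtripr$, one sees that~{(\ref{commab})} will follow from the symmetric \emph{bilinear} identity
\begin{equation*}
\intG\de\hame(g)\nove\tr\bigl(A(\urep(g)\fiv)\bigr)\,\tr\bigl(D(\urep(g)B)\bigr)=\intG\de\hame(g)\nove\tr\bigl(B(\urep(g)\fiv)\bigr)\,\tr\bigl(D(\urep(g)A)\bigr)\,,
\end{equation*}
valid for all $A,B,\fiv,D\in\trc$ (both integrals converge absolutely: the first factor lies in $\lugh$ by Lemma~{\ref{lemnorms}}, the second is bounded).

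To prove this I would pass to the Weyl transform attached to $\ure$, namely the map $\mathsf{W}\colon\hs\rightarrow\elleg$, $(\mathsf{W}X)(g)\defi\tr(\ure(g)^\ast X)$, which is an isometry by virtue of the orthogonality relations~{(\ref{ortorel})}; thus $\tr(X^\ast Y)=\langle\mathsf{W}X,\mathsf{W}Y\rangle$ for $X,Y\in\hs$. Two short computations are needed. First, since $G$ is abelian, $\ure(g)\ure(h)=c(g,h)\,\ure(h)\ure(g)$ for a bicharacter $c\colon G\times G\rightarrow\toro$, whence $\ure(g)^{-1}\ure(h)^\ast\ure(g)=c(g,h)\,\ure(h)^\ast$ and therefore $\mathsf{W}(\urep(g)X)=c(g,\cdot)\,\mathsf{W}X$: the operator $\urep(g)$ acts on Weyl symbols by multiplication by the character $c(g,\cdot)$. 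Consequently $\tr\bigl(A(\urep(g)\fiv)\bigr)=\langle\mathsf{W}(A^\ast),c(g,\cdot)\mathsf{W}\fiv\rangle=\intG c(g,h)\,\varphi(h)\,\de\hame(h)$ is the $G$-Fourier transform of $\varphi\defi\overline{\mathsf{W}(A^\ast)}\cdot\mathsf{W}\fiv\in\lugh$. Second, from $\ure(h^{-1})=\mul(h,h^{-1})^{-1}\ure(h)^\ast$ (a special case of the relation $\ure(g)^\ast=\mul(g,g^{-1})\,\ure(g^{-1})$ noted earlier) one gets the inversion rule $(\mathsf{W}X)(h^{-1})=\lambda(h)\,\overline{(\mathsf{W}(X^\ast))(h)}$, with $\lambda(h)\in\toro$.

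Plugging the two Fourier representations into the left-hand side of the bilinear identity and interchanging the three integrals (Fubini), the inner $g$-integral becomes $\intG c(g,h)c(g,k)\,\de\hame(g)=\intG c(g,hk)\,\de\hame(g)$ (bicharacter property). Because $\ure$ is irreducible, $c$ is nondegenerate, so $g\mapsto c(g,\cdot)$ identifies $G$ with its dual $\dug$, and Fourier inversion on $G$ gives that this last integral is a fixed constant $\kappa$ times the point mass at $hk=e$; the $(h,k)$-integral thus collapses onto the diagonal $k=h^{-1}$. Using then the inversion rule --- the phases $\lambda(h)$ cancelling --- the left-hand side reduces to
\begin{equation*}
\kappa\intG\overline{(\mathsf{W}(A^\ast))(h)}\;\overline{(\mathsf{W}(B^\ast))(h)}\;(\mathsf{W}\fiv)(h)\;(\mathsf{W}D)(h)\;\de\hame(h)\,,
\end{equation*}
which is plainly invariant under $A\leftrightarrow B$. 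The same computation turns the right-hand side into the very same expression, so the bilinear identity --- hence~{(\ref{commab})} --- is proved.

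I expect the main difficulty to be the harmonic-analytic bookkeeping of this last paragraph: the Fubini interchanges and, above all, the Fourier-inversion step, which hinges on the self-duality of the abelian group $G$ under the commutator bicharacter $c$ of $\ure$ --- a feature inseparable from the irreducibility and square integrability of $\ure$. To keep the convergence questions under control one may first prove the bilinear identity for finite-rank $A,B,\fiv,D$, where all Weyl symbols lie in $\elleg\cap\lugh$ and the computation is completely elementary, and then extend it to arbitrary trace class operators by continuity, relying on the bound $\|\mesaf\|\le\|A\trn\,\|\fiv\trn$ of Lemma~{\ref{lemnorms}}.
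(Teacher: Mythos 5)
Your proposal is correct in substance, but it follows a genuinely different, Fourier-analytic route from the paper's. The paper's own argument never leaves the operator-trace level: after reducing to positive $A,B,\fiv$, positive $\nu$ and a positive bounded test operator $C$ (so that Tonelli's theorem licenses every interchange), it uses the trace formula of Lemma~{\ref{tralem}} to \emph{insert} an auxiliary Haar integration over a new variable $\git$, rearranges the resulting product of three operators by cyclicity of the trace, relation~{(\ref{cicpro})} and the identity $\urep(g\tre\git)\fiv=\urep(\git\tre g)\fiv$ --- the only point where commutativity of $G$ enters --- and then \emph{removes} the auxiliary integration with the same lemma; a change of variables $\git\mapsto\git\tre h$ finishes the computation, and linearity extends it from the positive cone to all of $\trc\times\trc\times\trc\times\come$. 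Everything there converges absolutely and no structure theory of $G$ is invoked. Your argument instead diagonalizes the twirling: you show that $\urep(g)$ acts on Weyl symbols by multiplication by the commutator bicharacter $c(g,\cdot)$ and derive, via Fourier inversion on $G$, a factorized expression for the symbol of $\ltripr A,\fiv,B\mtripr$ in which the symmetry under $A\leftrightarrow B$ is manifest; this is precisely the abstract abelian-group version of formula~{(\ref{phspstpr})}, which the paper states only for the Weyl system. What your route buys is this stronger structural statement, of which commutativity is a one-line corollary; what it costs is the one nontrivial external input you correctly flag, namely that $g\mapsto c(g,\cdot)$ identifies $G$ with its dual $\dug$. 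That fact is true --- it is essentially equivalent to the square integrability of the irreducible projective representation $\ure$ of the abelian group $G$ --- but it is neither stated nor proved in the paper, so as written your proof leans on a result that would itself need a reference or a proof. The convergence issues, by contrast, are less delicate than you fear: since $g\mapsto\tr\big(A\tre(\urep(g)\fiv)\big)$ lies in $\lug$ by Lemma~{\ref{lemnorms}} and is the $c$-Fourier transform of the $\elleuno$-function $\overline{\mathsf{W}(A^\ast)}\cdot\mathsf{W}\fiv$, a single legitimate Fubini interchange followed by the standard $\elleuno$ Fourier inversion theorem already collapses the $(h,k)$-integral onto $k=h^{-1}$ rigorously, with no need for the finite-rank approximation step.
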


\begin{proof}
As in the proof of Proposition~{\ref{proas}}, we set $\fivg\equiv\urep(g) \fiv$ etc. Notice that
$\fivggit=(\fivgit)_g$, $(A\fiv)_{g}=A_{g}\tre\fivg$ and, since $G$ is abelian, $\fivggit=\fivgitg$.
Let us assume that $A$, $B$ and $\fiv$ are in $\trcp$, and $\nu$ in $\posme$, at first,
and take any positive bounded operator $C\in\bop$, so that we deal with positive functions
in the following argument and we can freely exchange integrals, by Tonelli's theorem.
By~{(\ref{defmtripr})} and, next, by the first assertion of Lemma~{\ref{tralem}}, we have:
\begin{eqnarray}
\tr(\ltripr A,\fiv,B\mtripr\sei C)
\equa
\intG\de\hame(g)\intG\de\nu(h)
\nove\tr(A\tre\fivg)\sei\tr(\bgh\tre C)
\equb
\intG\de\hame(g)\intG\de\nu(h)\intG\de\hame(\git)
\nove\tr(\bgh\tre C\tre(A\tre\fivg)_{\git}) \fin .
\end{eqnarray}
At this point, let us observe that
\begin{equation}
\tr(\bgh\tre C\tre(A\tre\fivg)_{\git}) =
\tr(\bgh\tre C\tre\agit\tre\fivgitg)
= \tr(C\agit\tre\fivggit\tre\bgh) = \tr(C\agit\tre(\fivgit\tre\bh)_g) \fin .
\end{equation}
By this relation, by suitably exchanging the integrals and, next, again by the first assertion of
Lemma~{\ref{tralem}}, we find that
\begin{eqnarray}
\tr(\ltripr A,\fiv,B\mtripr\sei C)
\equa
\intG\de\nu(h)\intG\de\hame(\git)\intG\de\hame(g)
\nove\tr(C\agit\tre(\fivgit\tre\bh)_g)
\equb
\intG\de\nu(h)\intG\de\hame(\git)
\nove\tr(C\agit)\sei\tr(\fivgit\tre\bh)
\equb
\intG\de\nu(h)\intG\de\hame(\git)
\nove\tr(C\agit)\sei\tr(\figihmo\tre B)
\equb
\intG\de\nu(h)\intG\de\hame(\git)
\nove\tr(C\agith)\sei\tr(\fivgit\tre B)
\equb
\intG\de\hame(g)\intG\de\nu(h)
\nove\tr(\fivg\tre B)\sei\tr(\agh\tre C)
= \tr(\ltripr B,\fiv,A\mtripr\sei C) \fin .
\end{eqnarray}
Note that the third and the fourth of the above equalities are obtained by relation~{(\ref{cicpro})}
(since $\hmo\git=\git\hmo$) and by the change of variables $\git\mapsto\git h$, respectively.
Hence, by linearity, we conclude that $\tr(\ltripr A,\fiv,B\mtripr\sei C)=\tr(\ltripr B,\fiv,A\mtripr\sei C)$,
without any assumption of positivity of the integrand functions; i.e., for all $A,B,\fiv\in\trc$, $\nu\in\come$
and $C\in\bop$. Finally, by the arbitrariness of the operator $C$ in $\bop$, it follows that relation~{(\ref{commab})}
is verified.
\end{proof}

Summarizing all the previous facts, we obtain the following result:

\begin{theorem}
Let $\ure\colon G \rightarrow\unih$, $\vre\colon G \rightarrow\unih$ be projective representations of
a unimodular l.c.s.c.\ group $G$ in a separable complex Hilbert space $\hh$, with $\ure$ square integrable,
and, for every $\fiv\in\trc$ and $\nu\in\come$, let us consider the bilinear map
\begin{equation} \label{defibilf}
\bilfmo\colon\trc\times\trc\ni (A,B)\mapsto\ltripr A,\fiv,B\mtripr\in\trc \fin .
\end{equation}
If $\nu(G)\sei\tr(\fiv)=1$, this map is trace-preserving. If $\fiv\in\trcsa$ and $\nu$ is real-valued, then the
bilinear map is adjoint-preserving; in particular, it is positive, if $\fiv,\nu\ge 0$. If the representation
$\vre$ coincides with the square integrable representation $\ure$, the product~{(\ref{defibilf})} is associative,
hence if, moreover, $\fiv\in\trc$ and $\nu\in\come$ are such that $\|\nu\|\sei \|\fiv\trn\le 1$, then the Banach space
$\trc$, endowed with the product~{(\ref{defibilf})}, is a Banach algebra; in particular,
\begin{equation} \label{ineqba}
\left\|A\bilfm B\right\trn \le \|A\trn \tre \|B\trn \fin ,
\ \ \ \forall\cinque A,B \in\trc \fin  .
\end{equation}
Finally, if $G$ is abelian and $\vre$ coincides with the square integrable representation $\ure$, then
the product~{(\ref{defibilf})} is commutative.
\end{theorem}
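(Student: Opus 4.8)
The plan is to obtain the theorem as a straightforward synthesis of the propositions and lemmas established earlier in this subsection, matching each of its clauses to an identity already proved and then checking that the Banach-algebra axioms are all in place. Throughout, the map $\bilfmo$ is well defined and takes values in $\trc$ by Lemmas~{\ref{tralem}} and~{\ref{lempos}} together with Proposition~{\ref{promea}}, so there is nothing to re-verify on that score.

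First I would treat the trace-preserving clause: the trace formula proved above gives $\tr\big(\ltripr A,\fiv,B\mtripr\big)=\nu(G)\sei\tr(A)\sei\tr(\fiv)\sei\tr(B)$ for all $A,B\in\trc$, so the normalization $\nu(G)\sei\tr(\fiv)=1$ yields $\tr\big(A\bilfm B\big)=\tr(A)\sei\tr(B)$, which is exactly the definition of a trace-preserving bilinear map. Next, for the adjoint-preserving and positivity clauses I would quote relations~{(\ref{reladj})} and~{(\ref{relpos})}: since $\big(\ltripr A,\fiv,B\mtripr\big)^\ast = \ltripr A^\ast,\fidu,B^\ast\mtriprcc$, the assumptions that $\fiv$ be selfadjoint and $\nu$ real-valued (so $\ccme=\nu$) give $\big(A\bilfm B\big)^\ast = A^\ast\bilfm B^\ast$; and if moreover $\fiv\ge 0$, $\nu\ge 0$, then $\bilfmo$ is positive by~{(\ref{relpos})}.

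For the associativity and Banach-algebra clause I would invoke Proposition~{\ref{proas}} with $\vre\equiv\ure$: identity~{(\ref{associa})} says precisely $(A\bilfm B)\bilfm C=A\bilfm(B\bilfm C)$ for all $A,B,C\in\trc$. As $\trc$ is already a complex Banach space and $\bilfmo$ a bilinear, associative operation on it, the only remaining point is submultiplicativity of the norm; here the estimate $\left\|\ltripr A,\fiv,B\mtripr\right\trn\le\|\nu\|\sei\|A\trn\tre\|\fiv\trn\tre\|B\trn$ proved above (which rests on the bound $\|\mesaf\|\le\|A\trn\tre\|\fiv\trn$ of Lemma~{\ref{lemnorms}} and the elementary estimates for the Bochner integral) gives $\left\|A\bilfm B\right\trn\le\|A\trn\tre\|B\trn$ as soon as $\|\nu\|\sei\|\fiv\trn\le 1$, so $(\trc,\bilfmo)$ is a Banach algebra. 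Finally, the commutativity clause is relation~{(\ref{commab})} with $\vre\equiv\ure$, i.e.\ $A\bilfm B=\ltripr A,\fiv,B\mtripr=\ltripr B,\fiv,A\mtripr=B\bilfm A$. I do not expect a genuine obstacle anywhere in this; the only thing that calls for a little care is the bookkeeping of which hypotheses ($\vre\equiv\ure$, $\fiv=\fiv^\ast$, $\nu$ real-valued or positive, $\|\nu\|\sei\|\fiv\trn\le 1$, $G$ abelian) enter which clause, and making sure the formal notions of ``trace-preserving bilinear map'' and of ``Banach algebra'' are lined up with the identities being quoted.
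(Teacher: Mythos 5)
Your proposal is correct and coincides with the paper's own treatment: the theorem is stated there as a summary ("Summarizing all the previous facts, we obtain the following result") of exactly the propositions you cite — the trace formula, relations~{(\ref{reladj})}--{(\ref{relpos})}, Proposition~{\ref{proas}}, the norm estimate via Lemma~{\ref{lemnorms}}, and relation~{(\ref{commab})}. Your bookkeeping of which hypotheses feed which clause matches the intended argument.
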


There is a particular case of the previous result which deserves a special attention:

\begin{corollary} \label{corsto}
If $\ure=\vre\colon G \rightarrow\unih$ is a square integrable projective representation, $\fiv$ a density operator in $\hh$
and $\nu$ a Borel probability measure on $G$, then the Banach space $\trc$, endowed with the binary operation~{(\ref{defibilf})},
is both a stochastic algebra and a Banach algebra. This algebra is also commutative in the case where $G$ is abelian.
\end{corollary}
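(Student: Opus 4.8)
The plan is to obtain the statement as a direct specialisation of the preceding Theorem, by checking that the extra assumptions made here force all of its hypotheses to hold simultaneously. First I would record the elementary bookkeeping. Since $\fiv$ is a density operator, $\fiv\ge 0$ and $\tr(\fiv)=1$, so in particular $\|\fiv\trn=1$; since $\nu$ is a Borel probability measure, $\nu\ge 0$ and $\nu(G)=1$, whence $\|\nu\|=|\nu|(G)=\nu(G)=1$. Hence $\nu(G)\sei\tr(\fiv)=1$ and $\|\nu\|\sei\|\fiv\trn=1\le 1$, and both $\fiv$ and $\nu$ are positive (in particular $\fiv\in\trcsa$ and $\nu$ is real-valued).

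Next I would feed these facts into the Theorem, keeping $\vre=\ure$ with $\ure$ square integrable. From $\nu(G)\sei\tr(\fiv)=1$ the bilinear map $\bilfmo$ is trace-preserving; from $\fiv,\nu\ge 0$ it is positive. A positive, trace-preserving bilinear map on $\trc$ is stochastic --- this is precisely the implication {\tt (P3)} $\Rightarrow$ {\tt (P1)} of Proposition~{\ref{mainprobis}} --- so $\bilfmo$ is a stochastic map. Since $\vre=\ure$ is square integrable, the Theorem also gives that the product is associative; therefore $\trc$ endowed with $\bilfmo$ is a Banach space carrying an associative stochastic binary operation, i.e.\ a stochastic algebra in the sense of sect.~{\ref{stochal}}. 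Finally, from $\|\nu\|\sei\|\fiv\trn\le 1$ the Theorem yields the submultiplicative estimate $\|A\bilfm B\trn\le\|A\trn\tre\|B\trn$ for all $A,B\in\trc$, so $(\trc,\bilfmo)$ is simultaneously a (complex) Banach algebra.

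For the last sentence, if in addition $G$ is abelian then, still with $\vre=\ure$ square integrable, the commutativity clause of the Theorem applies unchanged: $A\bilfm B=B\bilfm A$ for every $A,B\in\trc$, and restricting to density operators gives a commutative stochastic product on $\sta$. I do not expect any genuine obstacle here: the whole argument is a specialisation of the Theorem, and the only point that is not a one-line verification is the passage from ``positive and trace-preserving'' to ``stochastic'' for a \emph{bilinear} map, which however is already recorded as the equivalence {\tt (P1)} $\Leftrightarrow$ {\tt (P3)} in Proposition~{\ref{mainprobis}}.
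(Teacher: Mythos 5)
Your proposal is correct and follows exactly the route the paper intends: the corollary is stated as an immediate specialisation of the preceding theorem, and your verifications ($\nu(G)\sei\tr(\fiv)=1$, $\fiv,\nu\ge 0$, $\|\nu\|\sei\|\fiv\trn=1$) together with the equivalence {\tt (P1)} $\Leftrightarrow$ {\tt (P3)} of Proposition~{\ref{mainprobis}} are precisely what is needed. Nothing is missing.
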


\begin{remark}
Under the assumptions of the previous corollary, in the Banach algebra $\trc$ the inequality~{(\ref{ineqba})} is actually
saturated by every pair of positive operators $A$ and $B$.
\end{remark}

We will call a pair of the form
\begin{equation}
(\trc,\bilfmo\colon\trc\times\trc\rightarrow\trc)
\end{equation}
the \emph{twirled algebra} on $\trc$ generated by the tetrad $(\ure,\vre\colon G \rightarrow\unih;\fiv\in\trc,\nu\in\come)$.


\subsection{Third step: proving covariance, equivariance and invariance properties of twirled algebras}

Given a complex measure $\nu\in\come$, let us first define the associated left and right $g$-translate measures ---
denoted by $\nug$ and $\nugr$, respectively --- as
\begin{equation}
\nug (\bor) \defi \nu(\gmo\bor) \fin , \ \ \ \nugr (\bor) \defi \nu(\bor g) \fin ,
\end{equation}
where $\bor$ is a Borel subset of $G$; namely, for every Borel function $f\colon G\rightarrow\ccc$,
\begin{equation}
\intG \de\nug(h) \nove f(h) = \intG \de\nu(h) \nove f(g h) \fin , \ \ \
\intG \de\nugr(h) \nove f(h) = \intG \de\nu(h) \nove f(h\gmo) \fin .
\end{equation}
Observe that the maps
\begin{equation}
G\times \come\ni(g,\nu)\mapsto\nug\in\come \ \ \mbox{and} \ \
G\times \come\ni(g,\nu)\mapsto\nugr\in\come
\end{equation}
are (left) group actions, because
\begin{equation}
\nugh = (\nuh)^g  \ \ \mbox{and} \ \ \nughr = (\nuhr)_g \fin .
\end{equation}

\begin{lemma}
For every $A,\fiv,B\in\trc$ and every $\nu\in\come$, we have:
\begin{equation} \label{prire}
\vrep(g)\tre \ltripr A,\fiv,B\mtripr  = \ltripr \urep(g) A,\fiv, B\mtripr
\end{equation}
\begin{equation} \label{secre}
\ltripr A, \urep(\gmo) \fiv, B\mtripr = \ltripr A,\fiv, B\mtriprng,
\ \ \ \ltripr A,  \fiv, \vrep(\gmo) B\mtripr = \ltripr A,\fiv, B\mtriprngr,
\end{equation}
\begin{equation} \label{trire}
\ltripr A,  \urep(g) \fiv,  B\rtripr = \ltripr A,\fiv, \vrep(\gmo) B\rtripr \fin .
\end{equation}
\end{lemma}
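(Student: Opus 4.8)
The plan is to reduce everything to the explicit iterated-integral representation of the triple product obtained in Proposition~\ref{promea}. Having fixed the normalization $\cu=1$, this reads
\[
\ltripr A,\fiv,B\mtripr = \intG \de\hame(g)\intG \de \nu(h) \nove \tr\big(A\tre (\urep(g) \fiv)\big)\tre (\vrep(gh) B) \fin ,
\]
while for $\nu=\delta$ it collapses to $\ltripr A,\fiv,B\rtripr = \intG \de\hame(g)\nove \tr\big(A\tre(\urep(g)\fiv)\big)\tre(\vrep(g)B)$. All three identities will then follow from four elementary moves: pulling a bounded operator in or out of a Bochner integral (licensed exactly as in the proofs preceding this lemma), using that $\urep$ and $\vrep$ are \emph{genuine} group homomorphisms into $\spitrc$, translating the Haar measure $\hame$ (which is two-sided invariant, since $G$ is unimodular), and unwinding the definitions of $\nug$, $\nugr$.

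For~(\ref{prire}) I would move the bounded operator $\vrep(g)$ inside the Bochner integral, replace $\vrep(g)\vrep(g'h)$ by $\vrep(gg'h)$, and then substitute $g'\mapsto\gmo g'$ using left invariance of $\hame$; the operator factor becomes $\vrep(g'h)B$ again, while the trace factor becomes $\tr\big(A\,\urep(\gmo)(\urep(g')\fiv)\big)$, which equals $\tr\big((\urep(g)A)(\urep(g')\fiv)\big)$ by relation~(\ref{cicpro}). Recognising the resulting integrand as that of $\ltripr\urep(g)A,\fiv,B\mtripr$ finishes this case.

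For the first identity in~(\ref{secre}) I would write $\urep(g')\urep(\gmo)=\urep(g'\gmo)$, substitute $g'\mapsto g'g$ by right invariance of $\hame$ (so that $\urep(g'\gmo)\fiv$ becomes $\urep(g')\fiv$ and the operator factor becomes $\vrep(g'gh)B$), and then note that $\intG\de\nug(h)\nove\vrep(g'h)B=\intG\de\nu(h)\nove\vrep(g'gh)B$ by the definition of $\nug$. The second identity in~(\ref{secre}) is more immediate: absorbing $\vrep(\gmo)$ into $\vrep(g'h)$ gives $\vrep(g'h\gmo)B$, and $\intG\de\nugr(h)\nove\vrep(g'h)B=\intG\de\nu(h)\nove\vrep(g'h\gmo)B$ by the definition of $\nugr$. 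Finally, for~(\ref{trire}) I would work with the $\nu=\delta$ expression above, write $\urep(g')\urep(g)=\urep(g'g)$, substitute $g'\mapsto g'\gmo$, and use $\vrep(g'\gmo)=\vrep(g')\vrep(\gmo)$ to land on $\ltripr A,\fiv,\vrep(\gmo)B\rtripr$.

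I do not expect a genuine obstacle: each interchange of a bounded operator with a Bochner integral is exactly of the type already used, and every measure-theoretic step is either a change of variable for the (two-sided) invariant Haar measure or a direct appeal to the defining relations for $\nug$ and $\nugr$. The one point that demands care is the bookkeeping with the \emph{direction} of translations---checking that $g'\mapsto\gmo g'$ versus $g'\mapsto g'g$ (respectively $g'\mapsto g'\gmo$) is the substitution that clears the stray group element in each identity, and that the translate of $\nu$ that emerges is $\nug$ rather than $\nugr$. Once~(\ref{prire}) is carried out in detail, the remaining identities follow the same template.
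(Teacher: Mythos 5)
Your proposal is correct and follows essentially the same route as the paper's proof: push the bounded operator $\vrep(g)$ (or absorb $\urep(\gmo)$, $\vrep(\gmo)$) through the Bochner integrals, use the homomorphism property of $\urep,\vrep$, translate the Haar integration variable (left translation $\git\mapsto\gmo\git$ for~{(\ref{prire})} together with relation~{(\ref{cicpro})}, right translation $\git\mapsto\git g$ for the first of~{(\ref{secre})}), and unwind the definitions of $\nug$ and $\nugr$. Your direct computation for~{(\ref{trire})} matches the paper's remark that its proof is analogous (the paper also notes it follows from~{(\ref{secre})} and $\delg=\delgmo$), so no gap remains.
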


\begin{proof}
Let us prove relation~{(\ref{prire})}. In fact, $\vrep(g)$, being a bounded operator in $\trc$,
commutes with the Bochner integrals and we have:
\begin{eqnarray}
\vrep(g)\tre \ltripr A,\fiv,B\mtripr \equa
\intG \de\hame(\git) \intG \de \nu(h)
\nove \tr\big(A\tre (\urep(\git) \fiv)\big)\tre (\vrep(g\git h) B)
\equb
\intG \de\hame(\git) \intG \de \nu(h)
\nove \tr\big(A\tre (\urep(\gmo\git) \fiv)\big)\tre (\vrep(\git h) B)
\equb
\intG \de\hame(\git) \intG \de \nu(h)
\nove \tr\big((\urep(g) A)\tre (\urep(\git) \fiv)\big)\tre (\vrep(\git h) B)
\equb
\ltripr \urep(g) A,\fiv, B\mtripr \fin .
\end{eqnarray}
Regarding the first of relations~{(\ref{secre})}, observe that, by the right-invariance
of the Haar measure $\hame$, we find:
\begin{eqnarray}
\ltripr A, \urep(\gmo) \fiv, B\mtripr \equa
\intG \de\hame(\git) \intG \de \nu(h)
\nove \tr\big(A\tre (\urep(\git \gmo) \fiv)\big)\tre (\vrep(\git h) B)
\equb
\intG \de\hame(\git) \intG \de \nu(h)
\nove \tr\big(A\tre (\urep(\git ) \fiv)\big)\tre (\vrep(\git g h) B)
\equb
\ltripr A,\fiv, B\mtriprng \fin .
\end{eqnarray}
The proofs of the second of relations~{(\ref{secre})} and of relation~{(\ref{trire})} are similar.
Relation~{(\ref{trire})} can also be regarded as a consequence of the two relations~{(\ref{secre})} and of
the fact that $\delg=\delgmo$.
\end{proof}

\begin{proposition} \label{procova}
Let
\begin{equation}
(\trc,\bilfmo\colon\trc\times\trc\rightarrow\trc)
\end{equation}
be the twirled algebra generated by the tetrad $(\ure,\vre\colon G \rightarrow\unih;\fiv\in\trc,\nu\in\come)$.
Then, the algebra product is left-covariant wrt the pair $(\ure,\vre)$. Moreover, the family of products
\begin{equation} \label{famproa}
\{\bilfmo\colon\fiv\in\trc,\ \nu\in\come\}
\end{equation}
is right inner equivariant wrt the pair $(\act,\vre)$, where $\act$ is the (left) group action
\begin{equation} \label{actia}
\act\colon G\times (\trc\times\come) \rightarrow \trc\times\come \fin , \ \ \
g[(\fiv,\nu)] \defi (\fiv , \nugr) \fin ;
\end{equation}
it is invariant wrt the group action
\begin{equation} \label{actib}
\act\colon G\times (\trc\times\come) \rightarrow \trc\times\come \fin , \ \ \
g[(\fiv,\nu)] \defi (\urep(g)\fiv , \nug) \fin .
\end{equation}
Finally, setting $\nu=\delta$, the family of products
\begin{equation} \label{famprob}
\{\bilfo\colon\fiv\in\trc\}
\end{equation}
is right inner equivariant wrt the pair $(\act,\vre)$, where $\act$ is the group action
\begin{equation} \label{actic}
\act\colon G\times\trc\rightarrow \trc\times\come \fin , \ \ \
g[\fiv] \defi \urep(g)\fiv \fin .
\end{equation}
\end{proposition}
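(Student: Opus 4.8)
The plan is to read all four claims off the preceding lemma --- relations~(\ref{prire})--(\ref{trire}) --- after unwinding the definitions of left covariance, right inner equivariance and invariance, which were stated for stochastic products but were stipulated to extend verbatim to arbitrary bilinear maps on $\trc$. First I would record what each claim amounts to for the triple product, where $\nu\in\come$ denotes the measure fixed in~(\ref{defibilf}): left covariance of $\bilfmo$ wrt $(\ure,\vre)$ reads $\ltripr\urep(g)A,\fiv,B\mtripr=\vrep(g)\tre\ltripr A,\fiv,B\mtripr$; right inner equivariance of the family~(\ref{famproa}) wrt the action~(\ref{actia}) and $\vre$ reads $\ltripr A,\fiv,\vrep(\gmo)B\mtripr=\ltripr A,\fiv,B\mtriprngr$; right inner equivariance of the family~(\ref{famprob}) wrt the action~(\ref{actic}) and $\vre$ reads $\ltripr A,\fiv,\vrep(\gmo)B\rtripr=\ltripr A,\urep(g)\fiv,B\rtripr$; and invariance of~(\ref{famproa}) under the action~(\ref{actib}) reads $\ltripr A,\urep(g)\fiv,B\mtriprng=\ltripr A,\fiv,B\mtripr$. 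Alongside, I would note that~(\ref{actia}), (\ref{actib}) and~(\ref{actic}) really are left actions, which follows from the remarks preceding the lemma (namely $\nughr=(\nuhr)_g$ and $\nugh=(\nuh)^g$) together with the homomorphism property $\urep(gh)=\urep(g)\urep(h)$.

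Three of the four identities are then immediate matches: the first is precisely relation~(\ref{prire}); the second is the second equality in~(\ref{secre}); and the third is relation~(\ref{trire}). So only the invariance statement needs a short computation. For that I would start from the first equality in~(\ref{secre}), $\ltripr A,\urep(\gmo)\fiv,B\mtripr=\ltripr A,\fiv,B\mtriprng$, valid for every $\fiv\in\trc$ and every $\nu\in\come$; replacing $g$ by $\gmo$ gives $\ltripr A,\urep(g)\fiv,B\mtripr=\ltripr A,\fiv,B\mtriprngmo$, and applying this with $\nu$ replaced by $\nug$ turns the right-hand side into the triple product with subscript $(\nug)^{\gmo}$. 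Since $(\nug)^{\gmo}(\bor)=\nug(g\bor)=\nu(\gmo g\bor)=\nu(\bor)$ for every Borel $\bor\subseteq G$, one has $(\nug)^{\gmo}=\nu$, and the right-hand side is just $\ltripr A,\fiv,B\mtripr$ --- which is the desired invariance under~(\ref{actib}).

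The only real care needed, and hence the ``main obstacle'' such as it is, is bookkeeping: one must keep straight which translate appears ($\nug$ versus $\nugr$, i.e.\ left versus right) and which power of $g$ sits where, so that the actions~(\ref{actia})--(\ref{actic}) genuinely satisfy the left-action axioms and so that the inverses $\gmo$ built into the definitions of left covariance and right inner equivariance match exactly those occurring in~(\ref{prire})--(\ref{trire}). No estimates, continuity arguments, or measure-theoretic facts beyond those already used in constructing the triple product and in proving the lemma are needed.
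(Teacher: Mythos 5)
Your proposal is correct and follows essentially the same route as the paper: all four claims are read off relations~(\ref{prire})--(\ref{trire}) of the preceding lemma, with left covariance coming from~(\ref{prire}), the two right inner equivariance statements from the second of~(\ref{secre}) and from~(\ref{trire}), and the invariance from the first of~(\ref{secre}). Your explicit substitution $g\mapsto\gmo$, $\nu\mapsto\nug$ together with the check $(\nug)^{\gmo}=\nu$ correctly fills in the one step the paper leaves implicit.
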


\begin{proof}
The first assertion follows directly from~{(\ref{prire})}. The right inner equivariance of the families of
products~{(\ref{famproa})} and~{(\ref{famprob})} is a consequence of the second of relations~{(\ref{secre})}
and of relation~{(\ref{trire})}, respectively. Exploiting the first of relations~{(\ref{secre})}, one obtains
the invariance of the family of products~{(\ref{famproa})} wrt the action~{(\ref{actib})}.
\end{proof}


\subsection{Fourth step: defining twirled stochastic products}

We now complete our program. Given a \emph{square integrable projective representation} $\ure\colon G \rightarrow\unih$,
for every \emph{fiducial state} $\fista\in\sta$ and every \emph{probability measure} $\prom\in\prome$, we define
\begin{equation} \label{defsto}
\rho\stofm\sigma\defi
\intG \de\hame(g)\intG \de \prom(h) \nove \tr\big(\rho\sei (\urep(g)\tre\fista)\big)\tre (\urep(gh)\tre \sigma) \fin ,
\ \ \ \rho , \sigma\in\sta \fin ,
\end{equation}
where, as above, the integrals are in the Bochner sense and the Haar measure $\hame$ is normalized in such a way that $\cu=1$.
For the sake of conciseness, we set $\rhog\equiv\urep(g)\tre\rho$. Recall, moreover, that $\promg$ and $\promgr$ denote, respectively,
the left and the right $g$-translates of the Borel probability measure $\prom$.

Taking into account Proposition~{\ref{promea}}, and applying Corollary~{\ref{corsto}} and Proposition~{\ref{procova}},
we obtain the following result:
\begin{theorem} \label{theosto}
With the previous notations and assumptions, the pair
\begin{equation}
(\sta,\stofmo\colon\sta\times\sta\rightarrow\sta)
\end{equation}
is an associative stochastic product that is left-covariant wrt the representation $\ure$, namely,
\begin{equation}
\rhog\stofm\sigma=\Bile\rho\stofm\sigma\Birigr \fin .
\end{equation}
Moreover, the family of stochastic products
\begin{equation} \label{famproa-bis}
\{\stofmo\colon\fista\in\sta,\ \prom\in\prome\}
\end{equation}
is right inner equivariant wrt the pair
$(\act,\ure)$, where $\act$ is the group action
\begin{equation} \label{actia-bis}
\act\colon G\times (\sta\times\prome) \rightarrow \sta\times\prome \fin , \ \ \
g[(\fista,\prom)] \defi (\fista , \promgr) \fin ,
\end{equation}
namely,
\begin{equation}
\rho\stofm\siggmo=\rho\stofmgr\sigma \fin ;
\end{equation}
it is invariant wrt the group action
\begin{equation} \label{actib-bis}
\act\colon G\times (\sta\times\prome) \rightarrow \sta\times\prome \fin , \ \ \
g[(\fista,\prom)] \defi (\fistag\equiv\urep(g)\tre\fista , \promg) \fin ,
\end{equation}
namely,
\begin{equation}
\rho\stofm\sigma=\rho\stofgmg\sigma \fin .
\end{equation}
Setting $\nu=\delta$, the family of stochastic products $\{\stofo\colon\fista\in\sta\}$
is right inner equivariant wrt the pair $(\act,\ure)$, where $\act$ is the group action
$\act\colon G\times (\sta\times\prome) \rightarrow \sta\times\prome$, $g[\fista] \defi \fistag$,
namely,
\begin{equation}
\rho\stof\siggmo=\rho\stofg\sigma \fin .
\end{equation}
In the case where the l.c.s.c.\ group $G$ is abelian, the stochastic product~{(\ref{defsto})}
is commutative.
\end{theorem}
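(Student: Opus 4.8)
The plan is to obtain Theorem~{\ref{theosto}} as a specialization of the twirled-algebra results, taking $\vre\equiv\ure$ (the square integrable representation), the fiducial operator $\fiv=\fista$ to be a density operator, and the measure $\nu=\prom$ to be a probability measure, and then restricting the bilinear operation to $\sta\times\sta$. First I would invoke Corollary~{\ref{corsto}}: with $\ure$ square integrable, $\fista\in\sta$ and $\prom\in\prome$ one has $\prom(G)\,\tr(\fista)=1$, so the bilinear map $(A,B)\mapsto\ltripr A,\fista,B\mtripr$ is trace-preserving, and it is positive by {(\ref{relpos})} (since $\fista,\prom\ge 0$); hence $(\trc,\bilfmo)$ is a stochastic algebra and a Banach algebra. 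Being state-preserving, this bilinear map restricts to a map $\sta\times\sta\rightarrow\sta$; being bilinear it is convex-linear in each entry, so the restriction is a stochastic product, which by construction is the operation $\stofm$ of {(\ref{defsto})}, its integral representation being the one furnished by Proposition~{\ref{promea}} (with $\cu=1$, $\vre=\ure$).

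Associativity of $\stofm$ on $\sta$ is then a direct reading of Proposition~{\ref{proas}}: since $\rho\stofm\sigma$ coincides, as an element of $\trc$, with $\ltripr\rho,\fista,\sigma\mtripr$, bilinearity gives $(\rho\stofm\sigma)\stofm\tau=\ltripr\ltripr\rho,\fista,\sigma\mtripr,\fista,\tau\mtripr$ and $\rho\stofm(\sigma\stofm\tau)=\ltripr\rho,\fista,\ltripr\sigma,\fista,\tau\mtripr\mtripr$, and the identity {(\ref{associa})} (valid because $\vre=\ure$ is square integrable) equates the two. Commutativity in the abelian case is likewise {(\ref{commab})} restricted to $\sta$.

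For the symmetry properties I would read off everything from Proposition~{\ref{procova}} with $\vre\equiv\ure$. Left covariance $\rhog\stofm\sigma=(\rho\stofm\sigma)_g$ is relation {(\ref{prire})} restricted to density operators. The right inner equivariance and invariance statements follow from {(\ref{secre})} and {(\ref{trire})}: the ambient group actions {(\ref{actia})}, {(\ref{actib})} and {(\ref{actic})} on $\trc\times\come$ (resp.\ on $\trc$) descend to the actions {(\ref{actia-bis})}, {(\ref{actib-bis})} and their $\delta$-analogue on $\sta\times\prome$ (resp.\ $\sta$), because the right translate $\promgr$ and the left translate $\promg$ of a probability measure are again probability measures, and because each $\urep(g)$ is a positive, trace-preserving isometry, hence maps $\sta$ into $\sta$; evaluating the corresponding identities on states yields exactly the displayed relations $\rho\stofm\siggmo=\rho\stofmgr\sigma$ and $\rho\stofm\sigma=\rho\stofgmg\sigma$, and similarly for $\nu=\delta$.

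I do not expect a substantive obstacle: the proof is essentially a dictionary translation of the already-established twirled-algebra facts into the language of stochastic products on $\sta$. The only points that need an explicit word are the two just mentioned bookkeeping items — that the relevant ambient group actions genuinely restrict to $\sta\times\prome$ (closure of $\prome$ under left and right translation, and $\urep(g)\sta=\sta$), and that the normalization $\prom(G)\,\tr(\fista)=1$ is what makes the restricted bilinear map land in $\sta$ rather than merely in $\trcp$ — both of which are immediate.
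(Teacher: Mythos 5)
Your proposal is correct and follows exactly the route the paper takes: the paper derives Theorem~{\ref{theosto}} in one line by combining Proposition~{\ref{promea}} (the integral representation), Corollary~{\ref{corsto}} (stochasticity, associativity, Banach-algebra property, and commutativity in the abelian case) and Proposition~{\ref{procova}} (covariance, equivariance and invariance), specialized to $\vre\equiv\ure$, $\fiv=\fista\in\sta$, $\nu=\prom\in\prome$. Your additional bookkeeping remarks (closure of $\prome$ under translation, $\urep(g)\sta=\sta$, and the normalization $\prom(G)\tre\tr(\fista)=1$) are correct and consistent with what the paper leaves implicit.
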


\begin{remark}
In the case where the l.c.s.c.\ group $G$ is abelian, due to the commutativity of the stochastic
product~{(\ref{defsto})} and to the fact that $\promg=\promgrmo$, by combining the previous covariance,
equivariance and invariance properties, we find various further symmetry relations:
\begin{equation}
\rhog\stofm\sigma=\Bile\rho\stofm\sigma\Birigr=\rho\stofm\sigg
=\rho\stofmgrmo\sigma=\rho\stofmg\sigma=\rho\stofgmom\sigma \fin ,
\end{equation}
\begin{equation}
\rho\stofm\sigma=\rho\stofgmg\sigma=\rho\stofgmgrmo\sigma=\rho\stofgm\sigg
=\Bile\rho\stofgm\sigma\Birigr=\rhog\stofgm\sigma \fin ,
\end{equation}
\begin{equation}
\rhogmo\motto\stof\sigma=\Bile\rho\stof\sigma\Birigrmo\motto=\rho\stof\siggmo
\mtre=\rho\stofg\sigma  \fin .
\end{equation}
\end{remark}

We call the (associative) stochastic product~{(\ref{defsto})} the \emph{twirled stochastic product}
associated with the triple $(\ure,\fista,\prom)$. The algebra on $\trc$ that is obtained as the
canonical extension of this stochastic product will be called the \emph{twirled stochastic algebra}
associated with the triple $(\ure,\fista,\prom)$.


\section{Twirled stochastic products for every Hilbert space dimension}
\label{examples}


We now show by means of examples that there exist twirled stochastic products for every Hilbert space dimension
$2\le\dim(\hh)\le\infty$ (as usual, we neglect the trivial case where $\dim(\hh)=1$).


\subsection{Finite-dimensional twirled stochastic products}
\label{compactgrps}

Suppose that the group $G$ is compact (hence, unimodular) and $\ure\colon G\rightarrow\unih$ is an
irreducible unitary representation; thus, in this case: $\mul\equiv 1$ (the multiplier is trivial)
and $\dim(\hh)=n<\infty$. Then, $\ure$ is square integrable, because the Haar measure on $G$ is finite;
see~\cite{Aniello-new,Aniello-SP}. Moreover, by the classical Peter-Weyl theorem~\cite{Folland-AA},
assuming that the Haar measure is normalized as a probability measure --- $\hame(G)=1$ --- we have that
$\cu=\dim(\mathcal{H})^{-1}=n^{-1}$; hence:
\begin{equation} \label{compsto}
\rho\stofm\sigma = n
\intG \de\hame(g)\intG \de \prom(h) \nove \tr\big(\rho\sei (\urep(g)\tre\fista)\big)\tre (\urep(gh)\tre \sigma) \fin ,
\ \ \ (\hame(G)=1)
\end{equation}
for all $\rho,\fista,\sigma\in\sta$ and $\prom\in\prome$. In particular, for the maximally mixed state
$\mms\defi n^{-1}I$, the following noteworthy relations hold:
\begin{equation} \label{mmsrels}
\rho\stofm\mms=\mms \fin , \ \ \
\mms\stofm\sigma=\mms \fin , \ \ \
\rho\stomms\sigma= \mms \fin ,
\ \ \ \forall\cinque\rho,\fista,\sigma\in\sta \fin , \ \forall\cinque\prom\in\prome .
\end{equation}
The first of these relations is clear considering the expression~{(\ref{compsto})} of the twirled product
and recalling Lemma~{\ref{tralem}} ($\mu\in\come$, with $\de\mu(g)=n\sei\tr\big(\rho\sei (\urep(g)\tre\fista)\big)\nove\de\hame(g)$,
is a probability measure). The other two are a consequence of the fact that $\tr\big(\rho\sei (\urep(g)\tre\fista)\big)=n^{-1}$,
for $\rho=\mms$ or $\fista=\mms$, and of the relation
\begin{equation} \label{blurra}
\intG \de\hame(g)\nove (\urep(g) A)= n^{-1}\tre \tr(A)\sei I \fin , \ \ \ \forall\cinque A\in\trc \fin ;
\end{equation}
see Proposition~{6} of~\cite{Aniello-FT} (the `first trace formula for square integrable representations';
just make the substitution $g\mapsto\gmo$ in this formula, for $G$ unimodular). Indeed, by the previous facts
and by Lemma~{\ref{lempos}}, we have:
\begin{equation}
\mms\stofm\sigma= \rho\stomms\sigma =
\intG \de\hame(g)\nove\bbile\urep(g)\bbile\intG \de \prom(h) \nove (\urep(h)\tre \sigma)\bbiri\bbiri
= n^{-1} I \ifed \mms \fin .
\end{equation}
Notice that the second of relations~{(\ref{mmsrels})} also descends from Proposition~{\ref{propamms}},
by virtue of the left covariance of the twirled stochastic product.
Therefore, the maximally mixed state is a left and right collapsing point for the twirled product
--- i.e., $\bile\lmap(\mms)\biri(A)=\bile\rmap(\mms)\biri(A)=\tr(A)\sei\mms$, for all $A\in\trc$ ---
and, choosing $\mms$ as a fiducial state, we get a collapse product. Similarly, setting $\prom=\hame$,
one finds a collapse product too:
\begin{equation}
\rho\stoham\sigma = \mms \fin ,
\ \ \ \forall\cinque\rho,\fista,\sigma\in\sta \fin .
\end{equation}
This fact is readily verified using the invariance of the measure $\hame$ and, once again,
relation~{(\ref{blurra})}.

Observe that exploiting, e.g., the irreducible unitary representations of the group $\mathrm{SU}(2)$,
one is able to construct twirled products for every finite Hilbert space dimension. Moreover, composing the
twirled stochastic product (associated with an irreducible representation of a compact group) with suitable stochastic
maps --- regarded as convex-linear maps in $\sta$, see Remark~{\ref{samrea}} --- one can provide examples of stochastic
products having collapsing points different from $\mms$ and giving rise to different collapse channels.


\subsection{Infinite-dimensional products and the quantum convolution}
\label{weysys}

We now consider an infinite-dimensional example. Let $G$ be the group of translations on phase space
--- $G=\phasp$ --- $\hh=\ldrn$ and $\ure$ the \emph{Weyl system}~\cite{Aniello-new,Aniello-FT,Aniello-SP}:
\begin{equation} \label{schreps}
\bile \ure\qp\tre f\big)(x) = \eee^{-\ima\tre q\cdot p/2}\sei\eee^{\ima\tre p\cdot x} f(x-q) \fin ,\ \ \
\qp\in\phasp \fin , \ f\in \ldrn \fin ;
\end{equation}
i.e., $\ure\qp=\eee^{-\ima\tre q\cdot p/2}\sei\eip\sei\eiq$, where $\hq$, $\hp$ are the (vector) position and
momentum operators in $\ldrn$. This is a (strongly continuous) irreducible projective representation, with multiplier
\begin{equation} \label{mulwey}
\mul\qpqpt = \exp(\ima(q\cdot\tip - p\cdot\tiq)/2) \fin .
\end{equation}
Moreover, the Weyl system is square integrable and, setting $\ldg=\lrrnco$, we have that
$\cu=1$. Therefore, in this case the twirled product associated with the triple $(\ure,\fista,\prom)$
is of the form
\begin{eqnarray}
\rho\stofm\sigma
\equa
\frac{1}{(2\pi)^{n}} \intrrn \dqdpn \sedici \tr\big(\rho\sei (\eip\sei\eiq\tre\fista\sei\eiqa\sei\eipa)\big)
\equx
\label{defphspst}
\intrrn \de\prom\qpt \nove (\eipt\sei\eiqt\tre\sigma\sei\eiqta\sei\eipta) \fin .
\end{eqnarray}
We will call this product the \emph{phase-space stochastic product}. Although this is not immediately
clear from the previous expression, the phase-space stochastic product is commutative, since it stems from
a representation of an abelian group. It follows, by Proposition~{\ref{propbmms}}, that this stochastic product
does not admit collapsing points, because the Hilbert space $\ldrn$ is infinite-dimensional.

In particular, for $\prom=\delta$, we have:
\begin{equation}
\tau=\rho\stof\sigma =
\frac{1}{(2\pi)^{n}} \intrrn \dqdpn \sedici \tr\big(\rho\sei (\eip\sei\eiq\tre\fista\sei\eiqa\sei\eipa)\big)
\sei(\eip\sei\eiq\tre\sigma\sei\eiqa\sei\eipa) \fin .
\end{equation}
We will call this stochastic product the \emph{quantum convolution}. In order to justify this term,
it is worth expressing this binary operation in terms of the Wigner (quasi-probability) distributions
--- see~\cite{AnielloPF,AnielloFPT,Aniello-new,AnielloDMF,Aniello-FT,Aniello-SP,Schleich,AnielloCS} and
references therein --- $\wrho$, $\wfista$, $\wsigma$ and $\wtau$ associated with the density operators $\rho$, $\fista$,
$\sigma$ and $\tau$, respectively. It can be shown~\cite{Aniello-IP} that, setting $\hwfista(x)\defi\wfista(-x)$, $x\in\erredn$,
\begin{equation} \label{formconvs}
\wtau(z)=\intrdn\dx\sei \bbile\intrdn\dy\dieci\wrho(y)\otto\hwfista(x-y)\bbiri\tre \wsigma(z-x) \fin.
\end{equation}
Therefore, we actually find a \emph{double} convolution of Wigner functions, where the function $\wfista$ plays a sort of
pivotal role and has no analogue in the classical setting, where the convolution of two probability distributions on phase
space is a probability distribution too. We will further comment on this crucial point in the next section.

The function
\begin{equation} \label{convowigs}
\qp\mapsto \intrrn \dqdpt \sedici\wrho\qpt\otto\hwfista\qpmt = \frac{1}{(2\pi)^{n}}\otto
\tr\big(\rho\sei (\eip\sei\eiq\tre\fista\sei\eiqa\sei\eipa)\big)
\end{equation}
is a probability distribution wrt the Lebesgue measure on $\phasp$. E.g., setting $n=1$, and choosing
the pure state $|\psi\rangle\langle\psi|$ --- $\psi(q)=(2\pi)^{-1/4}\sei\eee^{-q^2/4}$ ---
as the fiducial state $\fista$, whose Wigner function is
\begin{equation}
\wfista\qp\equiv\wpsi\qp=\frac{1}{\pi}\dieci\eee^{-(q^2+p^2)},
\end{equation}
we obtain the probability distribution
\begin{equation}
\hkrho\qp=\frac{1}{\pi}\intrd\dqdptu \sedici\wrho\qpt\otto\eee^{-(q-\tiq)^2-(p-\tip)^2} \fin ,
\end{equation}
the so-called \emph{Husimi-Kano function}~\cite{Schleich,AnielloCS} associated with the state $\rho$.
Considering now a generic pure state $|\psi\rangle\langle\psi|$, one obtains a class of probability distributions
associated with a quantum state that admit a remarkable operational interpretation in the context of
quantum optics~\cite{Wodkiewicz,Wunsche,Lee}.

Hence, in this case, the quantum
convolution, expressed in terms of phase-space functions, is of the form
\begin{equation}
\wtau\qp=\intrd\dqdptu\sedici\hkrho\qp\otto\wsigma\qpmt
= \intrd\dqdptu\sedici\hksigma\qp\otto\wrho\qpmt \fin ,
\end{equation}
where we have used the commutativity of the product (or the associativity and commutativity of convolution in
relation~{(\ref{formconvs})}). Note that here we have the convolution of an integrable function, the Husimi-Kano
distribution, with a square integrable function, which gives rise to a square integrable function as it should
(all Wigner functions are square integrable~\cite{Aniello-new,AnielloDMF,Aniello-FT,Aniello-SP}); see, e.g.,
Proposition~{2.40} of~\cite{Folland-AA}.


\section{Final remarks, conclusions and perspectives}
\label{conclusions}


In this paper, we have introduced the notion of \emph{stochastic product} defined as a binary operation on the convex
set $\sta$ of quantum states --- the density operators --- that preserves the convex structure. Such a product is automatically
(jointly) continuous wrt the natural product topology on $\sta\times\sta$ (Remark~{\ref{topologies}}, Corollary~{\ref{autocont}}),
and can be extended to an algebra on the Banach space $\trc$ of trace class operators, the so-called
\emph{canonical extension} of the stochastic product (Proposition~{\ref{estende}}). We have
also defined a \emph{stochastic algebra} as an algebra on $\trc$ arising from an \emph{associative} stochastic product;
by restricting to a binary operation on the real Banach space $\trcsa$ of \emph{selfadjoint} trace class operators,
one obtains a real \emph{Banach algebra} (Remark~{\ref{banalge}}). We have then shown --- see sect.~{\ref{main}} ---
that one can explicitly construct a class of stochastic algebras, the so-called \emph{twirled stochastic algebras},
by means of a square integrable projective representation $\ure$ of a (unimodular) locally compact group $G$,
and by choosing a \emph{fiducial state} $\fista$ and a Borel \emph{probability measure} $\prom$ on $G$; in the simplest case,
we set $\prom=\delta$ (the Dirac measure at the identity). This is actually a special case of a more general class
of algebras --- the \emph{twirled algebras} (tout court) --- associated with a pair $(\ure,\vre)$ of projective
group representations (where $\ure$, as above, is supposed to be square integrable), with a fiducial trace class operator
$\fiv$ and with a complex Borel measure $\nu$. It turns out that the twirled stochastic algebras are, as well, complex Banach
algebras (i.e., restricting to an algebra on $\trcsa$ is not necessary, in this case). Moreover, these stochastic algebras
enjoy nice covariance, equivariance and invariance properties, and, in the case where the relevant group $G$ is abelian,
our construction yields a \emph{commutative} algebra (Theorem~{\ref{theosto}}). One then easily shows, by means of examples,
the existence of twirled stochastic algebras for every Hilbert space dimension (sect.~{\ref{examples}}).

Not surprisingly, our group-theoretical construction of a stochastic product involves some important tools typical of various
applications of abstract harmonic analysis to the foundations of quantum theory (quantization, phase-space quantum mechanics,
quantum measurement theory).

In this regard, let us first recall that, using our previous notations, the mapping
\begin{equation} \label{copovm}
G\supset\bor\mapsto\int_{\bor}\de\hame(g)\nove (\urep(g)\tre\fista)\ifed \povm(\bor) \fin , \ \ \  \mbox{($\ure$ square integrable)}
\end{equation}
where $\bor$ is a Borel set and $\fista$ a density operator, is a \emph{covariant quantum observable}; i.e., a POVM covariant
wrt the irreducible representation $\ure$~\cite{Cassinelli-POVM}: $\povm(g\tre\bor)=\urep(g)\sei\povm(\bor)$, for every Borel
subset $\bor$ of $G$ and every $g\in G$. Actually, formula~{(\ref{copovm})} provides the general expression of such a POVM;
namely, if a POVM is covariant wrt to the irreducible projective representation $\ure\colon G \rightarrow\unih$, then this
representation must be square integrable and the POVM must be of the form~{(\ref{copovm})}, for exactly one state $\fista\in\sta$.
Hence, the function $g\mapsto\tr\big(\rho\sei(\urep(g)\tre\fista)\big)$ used in our construction of a stochastic product can be
regarded as the probability density on $G$ --- regarded as a sample space --- of the covariant observable $\povm$, relative to
the observed state $\rho$; namely, the Radon-Nikodym derivative of the probability measure
$\mesru\colon\bor\mapsto\int_{\bor}\de\hame(g)\nove\tr\big(\rho\sei (\urep(g)\tre\fista)\big)$ wrt the (suitably normalized)
Haar measure $\hame$. This function can also be regarded as the restriction, to the diagonal subgroup of the direct product
group $G\times G$, of the \emph{frame transform}~\cite{Aniello-FT} of $\rho$, generated by the square integrable representation
$\ure$, with \emph{analyzing operator} $\fista$.

Moreover, for every Borel probability measure $\mu$ on $G$, the mapping
\begin{equation}
\trc\ni A\mapsto\intG\de\mu(g)\nove(\urep(g)\tre A)\ifed \twirl\sei A \in \trc \fin ,
\ \ \ \twirl\tre (\sta) \subset\sta \fin ,
\end{equation}
where this time $\ure\colon G \rightarrow\unih$ is a generic projective representation, defines a quantum dynamical map
(or quantum channel), a so-called \emph{twirling operator}~\cite{AnielloBM}. Therefore, if $\ure$ is square integrable,
then the stochastic twirled product can be thought of as (recall~{(\ref{expcon})})
\begin{equation}
\rho\stofm\sigma = \bile\lmap(\rho)\biri(\sigma) = \big(\twirlrum\big)\sei\sigma  \fin ,
\end{equation}
where $\lmap\colon\trc\rightarrow\ltrc$ is the left partial map associated with this product (Corollary~{\ref{lrmaps}}).

As a further observation within the same circle of ideas, it is worth mentioning that, for every square integrable
representation $\ure\colon G \rightarrow\unih$ ($G$ unimodular) and every pair of states $\fista,\sigma\in\sta$,
the mapping
\begin{equation} \label{covinstr}
G\supset\bor\mapsto\bbile
\trc\ni A\mapsto\int_{\bor}\de\hame(g)\nove \tr\big(A\sei(\urep(g)\tre\fista)\big)
\sei(\urep(g)\tre\sigma)\ifed\instru A\in\trc\bbiri
\end{equation}
--- where $\bor$ is a Borel set and $\instru$ a quantum operation~\cite{Heino} (in particular, a quantum channel for $\bor=G$)
--- is a \emph{quantum instrument}~\cite{Heino,Davies-Lewis,Ozawa-QMP,Carmeli}; more specifically, a \emph{$\ure$-covariant}
quantum instrument based on $G$~\cite{Carmeli}, because
\begin{equation} \label{covaria}
\ginstru\tre (\urep(g)A)=\urep(g)\tre\big(\instru A\big) .
\end{equation}
(The reader may notice, however, that the covariant quantum instrument $\bor\mapsto\instru$, as defined in~{(\ref{covinstr})},
is not expressed in the form displayed in Corollary~{12} of~\cite{Carmeli}; in order to properly compare the two
expressions of the quantum instrument, one has to suitably re-elaborate formula~{(\ref{covinstr})}.)
Clearly, a connection with the twirled stochastic product associated with the triple $(\ure,\fista,\delta)$, and with the
related twirling operator, is obtained by setting $\bor=G$; i.e.,
\begin{equation}
\rho\stof\sigma = \bile\lmap(\rho)\biri(\sigma) = \big(\twirlru\big)\sei\sigma = \bile\rmap(\sigma)\biri(\rho) =
\instrug\rho \fin .
\end{equation}
Therefore, the linear map from $\trc$ into $\ltrc$ that extends the convex-linear application $\sta\ni\sigma\mapsto\instrug\in\pot$
coincides with the right partial map $\rmap$ associated with the given stochastic product. Besides, since $g\tre G=G$, from
relation~{(\ref{covaria})} we recover the left-covariance of the twirled stochastic product. Also note that the associativity of
this product translates into the following relation for the quantum channels $\instrur,\instrug\in\pot$:
\begin{equation}
\instrug\circ\instrur = \instrut \fin , \ \ \ \mbox{where $\tau=\rho\stof\sigma$} \fin .
\end{equation}

Observe, moreover, that the condition
\begin{equation}
\tr(\rho\sei\pov(\bor))=\tr(\instru\rho) \fin ,
\end{equation}
for every $\rho\in\sta$ and every Borel subset $\bor$ of $G$, uniquely determines a ($\ure$-covariant) POVM; see subsect.~{5.1.3} of~\cite{Heino}.
Precisely, recalling the expression~{(\ref{copovm})}, one finds out that $\pov=\povm$. We stress that the class of covariant quantum
instruments of the form~{(\ref{covinstr})} --- with $\ure$ and $\sigma$ (an arbitrary state) kept fixed, and $\fista$ ranging over $\sta$
--- gives rise to the \emph{whole} class of POVMs that are covariant wrt the representation $\ure$. Precisely, for every $\sigma\in\sta$,
the mapping $\povm\mapsto\instrar$, from the set of all $\ure$-covariant POVMs to the set of all $\ure$-covariant quantum instruments, can be regarded
as a \emph{cross section} wrt to the partition of the latter set into equivalence classes of instruments \emph{compatible} with the same observable.

The twirled stochastic product admits a remarkable expression in terms of the \emph{covariant symbols}
associated with the density operators~\cite{Aniello-IP}. Given a square integrable projective representation
$\ure\colon G \rightarrow\unih$ ($G$ unimodular) and a trace class operator $A\in\trc$, the (covariant) symbol
$\chara$ of $A$ is the complex function on $G$ defined by
\begin{equation} \label{defiwt}
\chara(g)\defi\ddumo\,\tr(\ure(g)^\ast A) \fin ,
\end{equation}
where $\ddu\equiv\sqrcu$. Moreover, $\chara=\dequm\tre A$, where $\dequm\colon\hs\rightarrow\ldg\equiv\elleg$
denotes a linear isometry mapping the Hilbert space $\hs$ of Hilbert-Schmidt operators into the Hilbert
space of square integrable functions $\ldg$. This isometry can be regarded as a \emph{dequantization map}, which is directly related to
the Wigner transform in the case where $G$ is the group of translations on phase space~\cite{Aniello-FT,Aniello-SP,Aniello-new, AnielloDMF}.
The operator $A$ an be explicitly re-constructed from its symbol via the quantization map $\qum=\dequm^\ast$~\cite{Aniello-SP}.

In the case where $G$ is the group of phase-space translations and $\ure$ is the Weyl system (see subsect.~{\ref{weysys}}),
for $A\equiv\rho\in\sta$ the symbol $\chara\equiv\charr$ is also called the \emph{quantum characteristic function} of the state $\rho$
--- essentially, the Fourier transform of the Wigner function $\wrho$ --- in analogy with the classical characteristic function
of a probability measure on a locally compact abelian group~\cite{AnielloPF,AnielloFPT,Aniello-new,AnielloDMF}. Recall, indeed, that,
for every $\prom\in\prome$, $G\equiv\phasp$, we can identify the characteristic function of $\prom$ --- i.e., its Fourier-Stieltjes transform
$\hprom\colon\dug\rightarrow\ccc$, where $\dug$ is the Pontryagin dual~\cite{Folland-AA} of $G$ --- with the function $\charpro\colon\phasp\rightarrow\ccc$
defined by
\begin{equation}
\charpro\qp\defi\intrrn\de\prom\qpt\nove\exp(\ima(q\cdot\tip - p\cdot\tiq)) \fin .
\end{equation}
It turns out~\cite{Aniello-IP} that, setting $\ldg=\lrrnco$ (hence, $\ddu=1$), the phase-space stochastic product
associated with the triple $(\ure,\fista,\prom)$ --- expressed in terms of the characteristic function $\charpro$
and of the symbols $\charr\defi\tr(\ure\qp^\ast\rho)$, $\charu$, $\chars$ of the states $\rho$, $\fista$, $\sigma$
--- assumes the simple form
\begin{equation} \label{phspstpr}
\Bile \rho\stofm\sigma \Birich \qp = \charpro\qp\sei\charr\qp\sei\charu\mqp\sei\chars\qp
= \charpro\qp\sei\charr\qp\sei\overline{\charu\qp}\sei\chars\qp \fin ,
\end{equation}
from which it is evident that this product is commutative.

Considering formula~{(\ref{phspstpr})} for the phase-space stochastic product, one may regard this nice
expression as a straightforward way to achieve a commutative stochastic product. Indeed, whereas
the pointwise product of two quantum characteristic functions is not, in general, a function of the same kind,
the product of a `classical' characteristic function on phase space --- the Fourier-Stieltjes transform of a
probability measure on $\phasp$ --- by a quantum characteristic function, is a function of the latter type~\cite{AnielloPF,AnielloFPT}.
Moreover, for every $\rho,\fista\in\sta$, $\charr\sei\overline{\charu}$ is a classical characteristic function, the (symplectic)
Fourier-Stieltjes transform of the probability measure $\mesru$,
$\de\mesru\qp=(2\pi)^{-n}\tre\tr\big(\rho\sei(\urep\qp\tre\fista)\big)\sei\dqdpn\fin$; namely, recalling~{(\ref{convowigs})}
and denoting by $\wrho\conv\hwfista$ the convolution of the functions $\wrho$ and $\hwfista$ ($\hwfista\qp\defi\wfista\mqp$),
\begin{eqnarray}
\bile\charr\sei\overline{\charu}\quattro\biri\qp
\equa
\frac{1}{(2\pi)^{n}}\intrrn\dqdpt\sedici\tr\big(\rho\sei(\teip\sei\teiq\tre\fista\sei\teiqa\sei\teipa)\big)
\sei\exp(\ima(q\cdot\tip - p\cdot\tiq))
\equb
\intrrn \dqdpt \sedici\bile\wrho\conv\hwfista\biri\qpt\sei\exp(\ima(q\cdot\tip - p\cdot\tiq)) \fin .
\end{eqnarray}

Therefore, the function $\charpro\sei\charr\sei\overline{\charu}\sei\chars$ can be regarded as the pointwise product
of two \emph{classical} characteristic functions  --- $\charpro$ and $\charr\sei\overline{\charu}$ --- which is again
a function of the same kind (the Fourier-Stieltjes transform of the convolution of two probability measures),
multiplied by the \emph{quantum} characteristic function $\chars$, so achieving a function of the latter type.
Interestingly, the fact that $\charr\sei\overline{\charu}$ is a classical characteristic function can also
be proved `intrinsically' using the properties of classical and quantum \emph{positive definite} functions~\cite{AnielloPF}.

Analyzing the previous example, various intriguing links connecting the notion of quantum convolution
with Werner's seminal work on quantum harmonic analysis on phase space~\cite{Werner-QHA} come to light.
In Werner's remarkable approach, beside the ordinary convolution $f_1\conv f_2$ of two integrable
functions $f_1$ and $f_2$ on phase space, one can also construct the convolution $f\conv A$ of a function
$f\in\lurrn$ with an operator $A\in\trc$ (and \emph{vice versa}) --- where $\hh=\ldrn$ --- and the convolution
$A\conv B$ of two operators $A,B\in\trc$. Precisely,
\begin{equation} \label{convfuop}
f\conv A = A\conv f\defi (2\pi)^{-n}\intrrn\dqdpn\sedici f\qp\sei (\urep\qp A)\in\trc \fin ,
\end{equation}
where $\ure$ is the Weyl system --- i.e., $\ure\qp=\eee^{-\ima\tre q\cdot p/2}\sei\eip\sei\eiq$ --- and,
denoting by $\parop$ the parity operator in $\ldrn$ ($\parop=\paropa=\paropmo$, $\ure\qp\tre\parop=\parop\ure\mqp=\ure\qp^\ast$),
\begin{equation}
A\conv B \defi \tr\big(A\sei(\urep\qp\tre(\parop B\parop))\big)= B\conv A \in\lurrn \fin .
\end{equation}
It turns out that associativity is satisfied; in particular, $(A\conv B)\conv C = A\conv (B\conv C)$. Therefore,
in this language, the quantum convolution (the phase-space stochastic product with $\prom=\delta$) can be written,
unambiguously, in the following form:
\begin{equation}
\rho\stof\sigma = \rho\conv(\parop\fista\parop)\conv\sigma = \sigma\stof\rho \fin .
\end{equation}
We stress that, here, the commutativity of the quantum convolution may seem, at first sight, an immediate consequence
of the fact that $f\conv A = A\conv f$ in definition~{(\ref{convfuop})} and of the easily verified relation
$A\conv B=B\conv A$; but it actually also involves the non-trivial associativity of Werner's convolution of operators.

Beside the aforementioned formulation of twirled products in terms of covariant symbols, there are several aspects
of stochastic products that are not treated in this paper and that we plan to study; in particular:
\begin{itemize}

\item For the sake of simplicity, we have considered the twirled product associated with a square integrable
representation of a \emph{unimodular} locally compact group. Taking into account some mathematical intricacy
related to an unbounded Duflo-Moore operator (Remark~{\ref{duflom}}), one can work out the non-unimodular case,
as well.

\item There is a natural notion of \emph{complete} positivity for a stochastic product.

\item The classification of covariant stochastic products is an interesting open problem.

\item The behaviour of a quantum entropy~\cite{Ohya,Aniello-QE,Aniello-SchCo} --- not necessarily
the von~Neumann entropy --- wrt to a stochastic product is another interesting issue. In particular,
it is natural to wonder whether the twirled stochastic product is, say, entropy-nondecreasing; i.e.,
whether the entropy of the product of two states is not smaller than the entropy of each of these states.
Observe that such a property would provide a nice physical interpretation of the fact that the maximally mixed
state is both a left and a right collapsing point for the twirled stochastic product associated with an
irreducible unitary representation of a compact group; see subsect.~{\ref{compactgrps}}.

\item An abstract setting for a stochastic algebra may give rise to useful tools and new insights.

\end{itemize}

In connection with the last point, we suggest that the following algebraic structure might be
appropriate for our purposes. Let us consider a mathematical object, consisting of two sets and
four maps, of the form
\begin{equation}
\big(\alge \fin , \cinque \oprod\colon\alge\times\alge\rightarrow\alge \fin , \cinque \involu\colon\alge\rightarrow\alge \fin ;
\cinque \stalg = \alge\opro\alge\subset\alge \fin , \cinque \bifo\colon\stalg\times\stalg\rightarrow\stalg, \cinque
\trace\colon\stalg\rightarrow\ccc\big) ,
\end{equation}
where:
\begin{enumerate}
\item $\alge$ is a proper \emph{$\Hstar$-algebra}~\cite{Aniello-SP,Ambrose,Rickart}, equipped with the product $\oprod$ and
the \emph{involution} $\alge\ni A\mapsto \involu(A)\equiv A^\ast\in\alge$ (let us denote by $\scap$ the scalar product,
assumed to be linear in its second argument, in the complex Hilbert space $\alge$).

\item $\stalg=\alge\opro\alge$ is the \emph{trace class}~\cite{Saworotnow} of the proper $\Hstar$-algebra $\alge$, and an
associative algebra wrt the binary operation $\bifo$.

\item the bilinear map $\bifo$ preserves the convex cone
\begin{equation}
\stalgp\defi\{T\in\stalg\colon \langle A, TA\rangle\ge 0, \ \forall\cinque A\in\alge\}=\{A^\ast\mtre\opro A \colon \ A\in\alge\}
\end{equation}
of all positive elements of $\stalg$; i.e., $\stalgp\pro\stalgp\subset\stalgp$.

\item $\trace\colon\stalg\rightarrow\ccc$ is the the canonical (positive) \emph{trace} functional of the trace class $\stalg$;
therefore, $\trace(A\opro B)=\langle A^\ast,B\rangle$, for all $A,B\in\alge$.

\item  The trace is multiplicative wrt binary operation $\bifo$: $\trace(A\pro B) = \trace(A)\sei\trace(B) = \trace(B\pro A)$.
\end{enumerate}

We propose to call such an object a (abstract) \emph{stochastic $\Hstar$-algebra}. A stochastic algebra on $\trc$, as defined
in subsect.~{\ref{stochal}}, fits in this abstract scheme with the following identifications: the $\Hstar$-algebra $\alge$
is the Hilbert-Schmidt space $\hs$, endowed with the standard product of operators (composition) and with the natural involution
$A\mapsto A^\ast$ (adjoining map); the associated trace class $\stalg$ is, of course, the space $\trc$, endowed with the trace
functional $\tr\argo$ and with an associative, state-preserving bilinear map $\bifo\colon\trc\times\trc\rightarrow\trc$.


\end{document}